\documentclass[11pt, a4paper, oneside]{report}
\usepackage{preamble}

\begin{document}

\begin{titlepage}
\AddToShipoutPicture*{\put(0,0){\includegraphics*{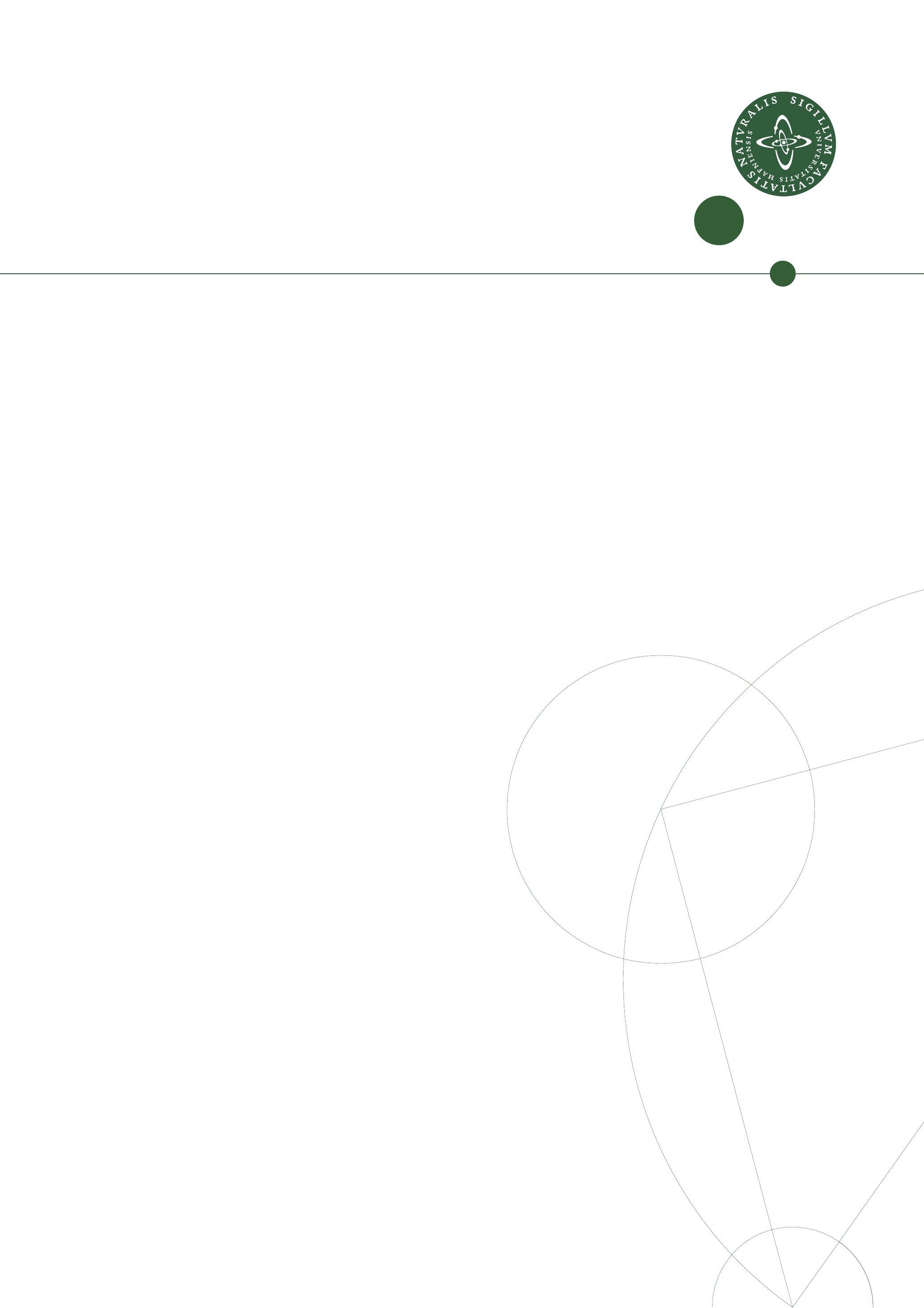}}}
\AddToShipoutPicture*{\put(0,0){\includegraphics*{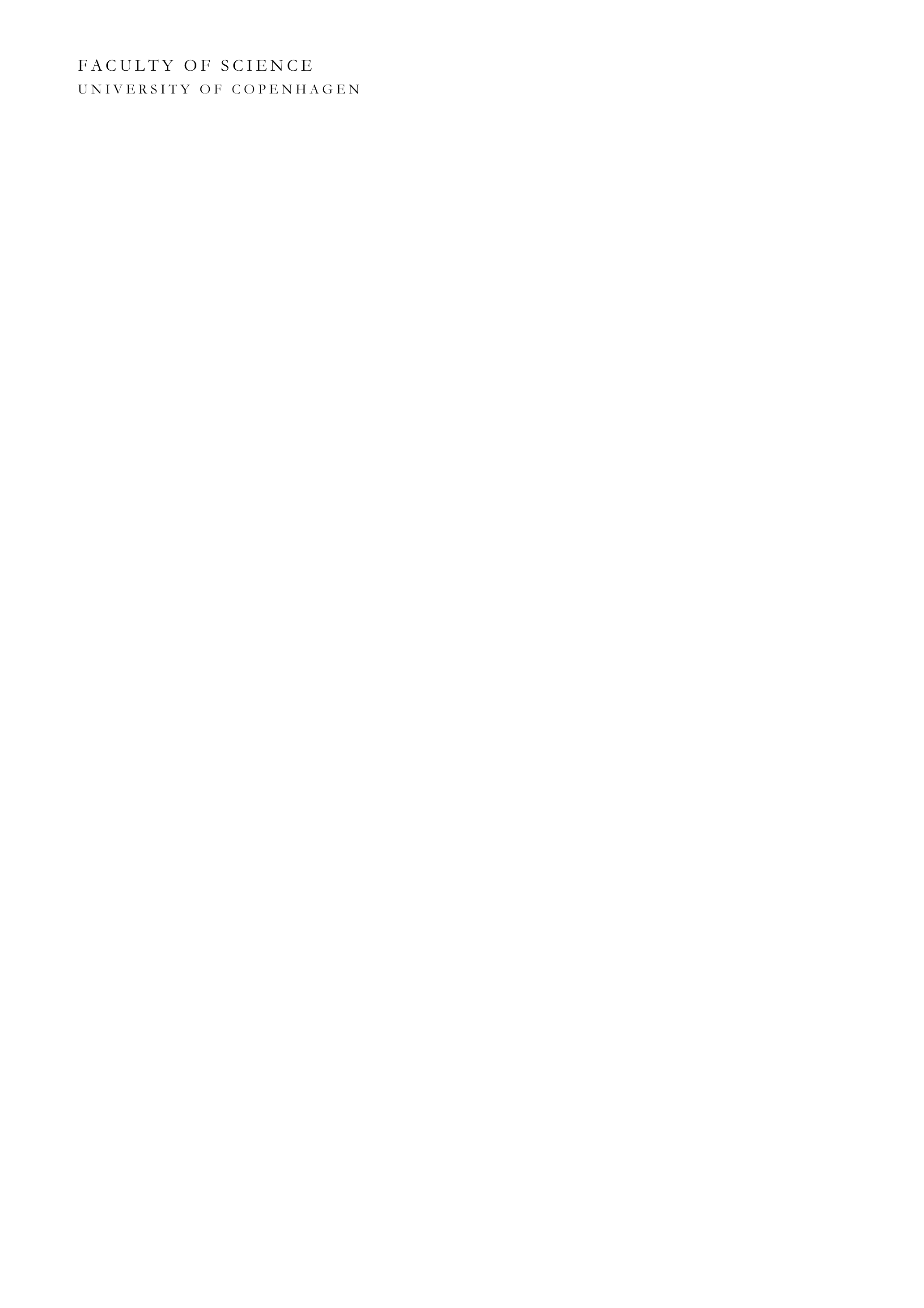}}}

\begin{flushleft}
\vspace*{3cm}
\textbf{\huge{Master's Thesis}}

\vspace*{3mm}
\textbf{Tue Haulund} - \texttt{qvr916@alumni.ku.dk}

\vspace*{4cm}
\textbf{\huge{Design and Implementation of a Reversible
Object-Oriented Programming Language}}
\vfill
\textbf{Main Supervisor:} Torben Ægidius Mogensen

\textbf{Co-Supervisor:} Robert Glück

\textbf{Submitted:} November 8, 2016
\end{flushleft}
\end{titlepage}
\newpage

\thispagestyle{empty}
\mbox{}
\setcounter{page}{0}
\newpage

\chapter*{Abstract}
\markright{Abstract}

High-level reversible programming languages are few and far between and in general offer only rudimentary abstractions from the details of the underlying machine. Modern programming languages offer a wide array of language constructs and paradigms to facilitate the design of abstract interfaces, but we currently have a very limited understanding of the applicability of such features for reversible programming languages.

We introduce the first reversible object-oriented programming language, ROOPL, with support for user-defined data types, class inheritance and subtype-polymorphism. The language extends the design of existing reversible imperative languages and it allows for effective implementation on reversible machines.

We provide a formalization of the language semantics, the type system and we demonstrate the computational universality of the language by implementing a reversible Turing machine simulator. ROOPL statements are locally invertible at no extra cost to program size or computational complexity and the language provides direct access to the inverse semantics of each class method.

We describe the techniques required for a garbage-free translation from ROOPL to the reversible assembly language PISA and provide a full implementation of said techniques. Our results indicate that core language features for object-oriented programming carries over to the field of reversible computing in some capacity.
\newpage

\chapter*{Preface}
\markright{Preface}

\epigraph{``A language that doesn't affect the way you think about programming, is not worth knowing``}{ -- Alan J. Perlis, \textit{Epigrams on Programming}~\cite{ap:epigrams}}

\vspace{1cm}

\noindent The present thesis constitutes a 30 ECTS workload and is submitted in partial fulfillment of the requirements for the degree of Master of Science in Computer Science at the University of Copenhagen (UCPH), Department of Computer Science (DIKU).
\vspace{3mm}

The thesis report consists of \pageref*{LastPage} numbered pages, a title page and a ZIP archive containing source code developed as part of the thesis work. The thesis was submitted for grading on November 8, 2016 and will be subject to an oral defense no later than December 6, 2016.
\vspace{3mm}

I would like to express my sincerest appreciation for the invaluable direction and encouragement of my primary academic supervisor, Torben Mogensen. I would also like to thank my co-supervisor Robert Glück, for introducing me to the fascinating field of reversible computing and for his help with the thesis subject. Finally - a heartfelt appreciation is owed to my loving partner Matilde, without whom this thesis would not have been possible.

\begin{flushleft}
Copenhagen, Autumn 2016
\vspace{3mm}

Tue Haulund
\end{flushleft}
\newpage

\tableofcontents
\newpage

\listoffigures
\newpage

\chapter{Introduction}
\label{chp:introduction}

Reversible computing is the study of time-invertible, two-directional models of computation. At any point during a reversible computation, there is at most one previous and one subsequent computational state, both of which are uniquely determined by the current state. The computational process follows a deterministic trajectory of these states in either direction of execution and carefully avoids erasing information such that previous states remain reachable and unique. As a result of this perfect preservation of information, reversible computing offers a possible solution to the heat dissipation problems faced by manufacturers of microprocessors~\cite{rl:irreversibility}.

To realize a fully reversible computing system, we need reversibility at every level of abstraction. Much headway has been made at the circuit and gate level, such as the realization of the reversible \textit{Pendulum} architecture~\cite{cv:pendulum} based on the reversible universal Fredkin and Toffoli gates~\cite{ef:conservative}. High-level reversible programming languages are also actively researched, most notably the imperative reversible language \textit{Janus}~\cite{cl:janus, ty:janus, ty:ejanus}, the procedural reversible language \textit{R}~\cite{mf:r, mf:reversibility} and the functional reversible languages \textit{RFUN}~\cite{ty:rfun} and \textit{Inv}~\cite{sm:inv}. Recently, translation of these languages to low-level reversible assembly languages has been the subject of some work~\cite{ha:translation, jsk:translation}. A reversible self-interpreter for the reversible imperative language \textit{R-WHILE} was shown in~\cite{rg:rwhile}.

Throughout this existing body of research, a reversible object-oriented language has yet to be formalized. The present thesis discusses the design of such a language as well as the techniques required to perform a clean (i.e.\ garbage-free) and correct translation from such a language to a low-level reversible assembly language. As is the case for any programming paradigm, reversible object-oriented programming has its own programming techniques and pitfalls, which we will explore in detail. The language will implement traditional OOP concepts such as encapsulation, subtype polymorphism and dynamic dispatch, albeit in a reversible context.

\section{Reversible Computing}
\label{sec:revcomp}

A great deal of effort is expended on minimizing the power consumption of modern microprocessors, to the point where it is now considered a first-class design constraint. However a theoretical lower limit does exist for our current model of computation. Known since the early 1960's, \textit{Landauer's principle} holds that:
\begin{displayquote}[rl:irreversibility][.]
\textelp{} any logically irreversible manipulation of information, such as the erasure of a bit or the merging of two computation paths, must be accompanied by a corresponding entropy increase in non-information-bearing degrees of freedom of the information-processing apparatus or its environment
\end{displayquote}
Put simply, Landauer's principle states that the erasure of information in a system is always accompanied by an increase in energy consumption. The exact amount of energy required to erase $n$ bits of information is $n \cdot k_B \cdot T \cdot \ln 2$, where $T$ is the temperature of the circuit in kelvin and $k_B$ is the \textit{Boltzmann constant} (approximately $1.38 \cdot 10^{-23}$ J/K)~\cite{cb:thermodynamics}.

This theoretical limit is known as the \textit{von Neumann-Landauer limit} and it places a lower bound on the energy consumption of any computation involving the erasure of information. In a reversible computation, information is never erased, which means reversible computing systems are not subject to the von Neumann-Landauer limit\footnote{Aside from its relationship to reversible computing, Landauer's principle also represents a compelling argument that Maxwell's Demon does not violate the second law of thermodynamics~\cite{cb:notes}}.

The naive approach to achieving reversibility is based on the idea of \textit{reversibilization} of a regular irreversible program. As the program is executing, intermediate values are preserved in a program history trace. Known as a \textit{Landauer embedding}, this technique achieves perfect preservation of information~\cite{rl:irreversibility}. Bennett showed that such an embedding can be created for any irreversible program~\cite{cb:reversibility}, however the space requirements for this technique grows proportionally to the length of time the program has been running. Given an irreversible program with running time $T$ and space complexity $S$, a semantically equivalent reversible program with running time $O(T^{1\ +\ \epsilon})$ and space complexity $O(S \ln T)$ can be constructed for some $\epsilon > 0$~\cite{cb:tradeoff}. These space requirements make this approach completely impractical for general purposes.

The Landauer embedding is an example of \textit{injectivization} of the function that our program computes. As we cannot accept the generation of this extraneous garbage data, we must limit ourselves to programs that compute functions that are already injective (i.e.\ one-to-one functions). Reversible programming languages are made up of individually reversible execution steps, each of which must also be injective when viewed as a mapping from one computational state to the next. This one-to-one mapping ensures that the language is both forwards and backwards deterministic, there is always at most one state the computation can transition to, regardless of the direction of execution.

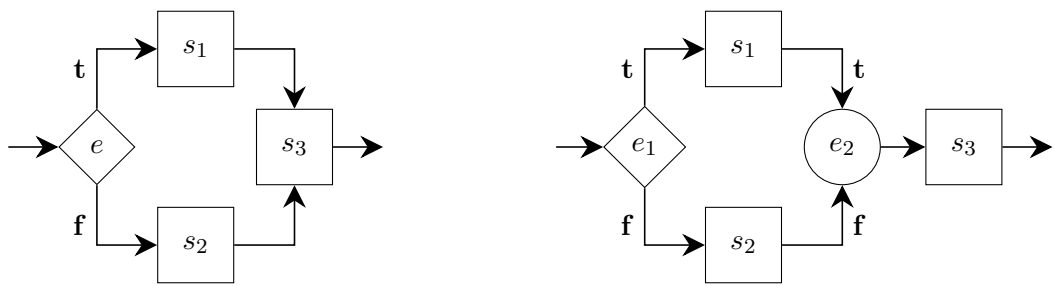
\begin{figure}[h]
\centering
\begin{subfigure}[b]{.495\columnwidth}
    \centering
    \begin{tikzpicture}[node distance = 1.3cm, auto]
        \node[none] (start) {};
        \node[conditional, right of = start] (e) {$e$};
        \node[block, right of = e, above of = e] (s1) {$s_1$};
        \node[block, right of = e, below of = e] (s2) {$s_2$};
        \node[block, right of = e, node distance = 2.6cm] (s3) {$s_3$};
        \node[none, right of = s3] (end) {};
        
        \draw[arrow] (start) -- (e);
        \draw[arrow] (e) |- node[anchor = north east] {\textbf{t}} (s1);
        \draw[arrow] (e) |- node[anchor = south east] {\textbf{f}} (s2);
        \draw[arrow] (s1) -| (s3);
        \draw[arrow] (s2) -| (s3);
        \draw[arrow] (s3) -- (end);
    \end{tikzpicture}
\end{subfigure}
\begin{subfigure}[b]{.495\columnwidth}
    \centering
    \begin{tikzpicture}[node distance = 1.3cm, auto]
        \node[none] (start) {};
        \node[conditional, right of = start] (e1) {$e_1$};
        \node[block, right of = e1, above of = e1] (s1) {$s_1$};
        \node[block, right of = e1, below of = e1] (s2) {$s_2$};
        \node[assertion, right of = e1, node distance = 2.6cm] (e2) {$e_2$};
        \node[block, right of = e2, node distance = 1.6cm] (s3) {$s_3$};
        \node[none, right of = s3] (end) {};
        
        \draw[arrow] (start) -- (e1);
        \draw[arrow] (e1) |- node[anchor = north east] {\textbf{t}} (s1);
        \draw[arrow] (e1) |- node[anchor = south east] {\textbf{f}} (s2);
        \draw[arrow] (s1) -| node[anchor = north west] {\textbf{t}} (e2);
        \draw[arrow] (s2) -| node[anchor = south west] {\textbf{f}} (e2);
        \draw[arrow] (e2) -- (s3);
        \draw[arrow] (s3) -- (end);
    \end{tikzpicture}
\end{subfigure}
\caption[Flowcharts of reversible and irreversible variants of a conditional statement]{Flowcharts of irreversible and reversible variants of a conditional statement followed by some other statement $s_3$. The reversible variant uses the assertion $e_2$ to join the two paths of computation reversibly - if the control flow reaches $e_2$ from the \textbf{true}-edge then $e_2$ must evaluate to \textbf{true} and vice versa, otherwise the statement is undefined.~\cite{ty:flowchart}}
\label{fig:conditional}
\end{figure}

In irreversible programming languages, this mapping can be a many-to-one (non-injective) function since we are then only concerned with forward determinism. The inverse of such a function is a one-to-many relation (sometimes called a \textit{multivalued function}) which means such languages are backwards non-deterministic, as it is impossible to uniquely determine the previous state of computation\footnote{Some languages are both forwards and backwards non-deterministic by design - the logic programming language Prolog is an example of such a language.}.

Every reversible program has exactly one corresponding inverse program in which every execution step is inverted and performed in reverse order of the original program. Since each execution step is locally invertible, as opposed to requiring a full-program analysis, the inversion can be achieved with straightforward recursive descent over the components of the program. Furthermore, given that each single execution step has a single-step inverse, the process of inverting a reversible program bears no additional cost in terms of program size.

Reversible programming languages may provide direct access to the inverse semantics of a code segment, in Janus this is exemplified by the \textit{uncall} statement which invokes the inverse computation of a given procedure~\cite{ty:janus}, while low-level reversible languages typically make use of a direction bit to invoke inverse semantics and reverse execution~\cite{mt:bob}. This direct access has given rise to some clever programming methodologies. One example is known as the \textit{Lecerf-Bennet reversal}\footnote{Also known as the \textit{local Bennett's method}~\cite{ty:janus}, or the \textit{compute-copy-uncompute paradigm}. It was first proposed by Lecerf~\cite{yl:reversibility} and later rediscovered by Bennett~\cite{cb:reversibility}.} which makes use of \textit{uncomputation} to reversibly purge the variable store of undesired intermediate values after a computation.

For some computations, having direct and inexpensive access to the exact inverse computation can be useful from a software development perspective. For example, implementing a compression algorithm in a reversible language\footnote{The futility of attempting to implement a \textit{lossy} compression algorithm within a language paradigm that forbids the erasure of information should not be lost on the reader at this point.} immediately yields the equivalent decompression routine by inversion of the program. Additionally, any effort that has gone into verifying the correctness of the compression algorithm, e.g. testing or perhaps even formal verification techniques such as model checking, can serve as an equally valid testament to the correctness of the inverse program (assuming the process of inversion is itself correct).

Besides the primary motivation of potentially improving the energy efficiency of computers beyond the von Neumann-Landauer limit, the field of reversible computing shows promise in a number of other areas:

\begin{description}[leftmargin = 0pt]
    \item[Quantum Computing] A quantum logic gate represents a transformation which can be applied to an isolated quantum system. For the resulting system to be consistent, the transformation matrix must be \textit{unitary}. Such transformations are inherently reversible, and indeed any reversible boolean function can be converted to a corresponding unitary transformation~\cite{ab:quantum}. As such, the field of quantum computing could stand to benefit from an increased understanding of reversible computing.
    \item[Program Debugging] Traditional program debugging involves stepping through code line by line, inspecting intermediate results and memory contents accordingly. Recently, vendors have added support for \textit{reverse debugging}, which involves stepping through code in reverse or restoring earlier program states from within a debugging session. This is usually implemented with a continous execution trace but on a reversible computing platform, such functionality is supported as a fundamental property of the system. A reversible extension to the Erlang programming language, for the purpose of supporting reverse debugging was suggested in~\cite{nn:erlang}.
    \item[Error Recovery] In parallel or pipeline-based systems, recovery from an unforeseen error condition often involves undoing recent related changes made to the state of the system. As an example, this is a primary features of most DBMS and it is implemented with special-purpose error recovery logic. On a reversible system however, this can be achieved by simple reverse execution back to the point where the error condition first arose. A reversible DSL for error recovery on robotic assembly lines was presented in~\cite{us:assembly}.
    \item[Discrete Event Simulation] The simulation of systems with asynchronous discrete update events lends itself well to concurrent execution. Suggested in~\cite{dj:timewarp}, the Dynamic Time Warp (DTW) algorithm is commonly used to synchronize event updates across execution threads. DTW uses update rollbacks to restore the simulation to a synchronized state, in case an event has been committed prematurely. Reversible computation can be used to realize event rollback while avoiding the high overhead of storing execution traces or simulation checkpoints~\cite{cc:des}.
\end{description}

\section{Object-Oriented Programming}
\label{sec:oop}

Like reversible computing, object-oriented programming (OOP) originated in the early 1960's, with the advent of the Simula language~\cite{gb:simula}. \textit{Unlike} reversible computing, OOP enjoys immense popularity in the software industry, as can be observed by the widespread use of object-oriented languages such as Java and C++. The OOP paradigm attempts to break a problem into many small manageable pieces of related state and behaviour called objects. An object may model an actual object in the problem domain, or it may represent a more abstract grouping of related entities within a program. A distinction is made between a particular kind or type of object, called a \textit{class}, and specific instances of these classes, known simply as objects.

OOP is based on the concept of encapsulation: Only the methods of an object has unrestricted access to the components of that object, thereby protecting the integrity of the internal state and reducing the overall system complexity. Encapsulation is closely related to the principle of information hiding, which holds that compartmentalization of design decisions made in one part of a program can be used to avoid extensive modification of other parts of that program if the design is altered~\cite[Chapter~1]{eg:patterns}.

A fundamental aspect of OOP is class inheritance, which allows one class to inherit the fields and methods of another class. Most OOP languages also use inheritance to establish an "is-a" relationship between two objects such that one may be substituted for the other by subtype-polymorphism. OOP lends itself well to code-reuse and maintainability of source code, and is often used in combination with imperative or procedural programming paradigms. In general, OOP is a set of techniques for intuitively structuring imperative code - it is a programming methodology rather than a model of computation.

\section{Motivation}
\label{sec:motivation}

After more than two dozen iterations of Moore's Law~\cite{gm:moore}, the semiconductor industry is fast approaching the von Neumann-Landauer limit. Reversible computing may be a viable solution, but it represents a significant paradigm shift from the currently prevailing irreversible models of computation.

The practicality of reversible computing hinges, inter alia, on the presence of high-level reversible programming languages that can be compiled to low-level reversible assembly code without significant overhead. Ideally, these languages should provide the same tools and features for producing abstract models and interfaces as are available for modern irreversible languages.

Object-oriented programming is immensely popular in the industry but the combination of OOP and reversible computing is entirely uncharted territory. The work presented in this thesis is motivated by the scarcity of high-level reversible programming languages and in particular, by the absence of any reversible object-oriented programming languages.

\section{Thesis Statement}
\label{sec:statement}

An effective implementation of a reversible object-oriented programming language is both possible and practical, provided the design of the language observes the limitations required for execution on reversible machines.

\section{Outline}
\label{sec:outline}

This thesis consists of 5 chapters, the first of which is this introductory chapter. The remaining 4 chapters are summarised as follows:
\begin{description}
\item[Chapter~\ref{chp:survey}] is a brief survey of existing reversible imperative programming languages and instruction sets.
\item[Chapter~\ref{chp:roopl}] presents the reversible object-oriented programming language ROOPL, along with a formalization of the language and a discussion of the most significant elements of its design.
\item[Chapter~\ref{chp:compilation}] presents the techniques required for a garbage-free and correct compilation from ROOPL source code to PISA instructions.
\item[Chapter~\ref{chp:conclusion}] contains conclusions and proposals for future work.
\end{description}
The appendix contains the source code listings for the ROOPL compiler, an example ROOPL program and the equivalent translated PISA program.
\newpage

\chapter{Reversible Programming Languages}
\label{chp:survey}

The following chapter contains a survey of reversible instruction sets and reversible imperative programming languages. Given that OOP is an approach for naturally organizing imperative code, it is clear that such languages are of special interest when designing a reversible OOP language. Indeed, the design of our reversible OOP language draws heavily from the design of the languages and instruction sets presented in this section.

\section{Janus}
\label{sec:janus}

The reversible programming language \textit{Janus} (named after the two-faced Greco-Roman god of beginnings and endings) was created by Cristopher Lutz and Howard Derby for a class at Caltech in 1982~\cite{cl:janus}. It was later rediscovered and formalized in~\cite{ty:janus} and some modifications were suggested in~\cite{ty:ejanus} - the following section deals with this modified version of the language.

\begin{figure}[h]
\centering

\textbf{Janus Grammar}

\begin{align}
    prog\quad&::=\quad p_{main}\ p^*\tag{program}\\
    t\quad&::=\quad \textbf{int}\ |\ \textbf{stack}\tag{data type}\\
    p_{main}\quad&::=\quad \textbf{procedure main}\ \textbf{\texttt{()}}\ (\textbf{int}\ x(\textbf{\texttt{[}}\overline{n}\textbf{\texttt{]}})^?\ |\ \textbf{stack}\ x)^*\ s\tag{main procedure}\\
    p\quad&::=\quad \textbf{procedure}\ q\textbf{\texttt{(}}t\ x,\ \dots,\ t\ x\textbf{\texttt{)}}\ s\tag{procedure definition}\\
    s\quad&::=\quad x\ \odot\textbf{\texttt{=}}\ e\ |\ x\textbf{\texttt{[}}e\textbf{\texttt{]}}\ \odot\textbf{\texttt{=}}\ e\tag{assignment}\\
    &\ |\ \qquad \textbf{if}\ e\ \textbf{then}\ s\ \textbf{else}\ s\ \textbf{fi}\ e\tag{conditional}\\
    &\ |\ \qquad \textbf{from}\ e\ \textbf{do}\ s\ \textbf{loop}\ s\ \textbf{until}\ e\tag{loop}\\
    &\ |\ \qquad \textbf{push}\textbf{\texttt{(}}x,\ x\textbf{\texttt{)}}\ |\ \textbf{pop}\textbf{\texttt{(}}x,\ x\textbf{\texttt{)}}\tag{stack modification}\\
    &\ |\ \qquad \textbf{local}\ t\ x\ \textbf{\texttt{=}}\ e\quad s\quad \textbf{delocal}\ t\ x\ \textbf{\texttt{=}}\ e\tag{local variable block}\\
    &\ |\ \qquad \textbf{call}\ q\textbf{\texttt{(}}x,\ \dots,\ x\textbf{\texttt{)}}\ |\ \textbf{uncall}\ q\textbf{\texttt{(}}x,\ \dots,\ x\textbf{\texttt{)}}\tag{procedure invocation}\\
    &\ |\ \qquad \textbf{skip}\ |\ s\ s\tag{statement sequence}\\
    e\quad&::=\quad \overline{n}\ |\ x\ |\ x\textbf{\texttt{[}}e\textbf{\texttt{]}}\ |\ e\ \otimes\ e\ |\ \textbf{empty}\textbf{\texttt{(}}x\textbf{\texttt{)}}\ |\ \textbf{top}\textbf{\texttt{(}}x\textbf{\texttt{)}}\ |\ \textbf{nil}\tag{expression}\\
    \odot\quad&::=\quad \textbf{\texttt{+}}\ |\ \textbf{\texttt{-}}\ |\ \textbf{\texttt{\textasciicircum}}\tag{operator}\\
    \otimes\quad&::=\quad \odot\ |\ \textbf{\texttt{*}}\ |\ \textbf{\texttt{/}}\ |\ \textbf{\texttt{\%}}\ |\ \textbf{\texttt{\&}}\ |\ \textbf{\texttt{|}}\ |\ \textbf{\texttt{\&\&}}\ |\ \textbf{\texttt{||}}\ |\ \textbf{\texttt{<}}\ |\ \textbf{\texttt{\textgreater}}\ |\ \textbf{\texttt{=}}\ |\ \textbf{\texttt{!=}}\ |\ \textbf{\texttt{<=}}\ |\ \textbf{\texttt{\textgreater=}}\tag{operator}
\end{align}

\caption[EBNF grammar for Janus]{EBNF grammar for Janus~\cite{ty:ejanus}}
\label{fig:janus-grammar}
\end{figure}

Janus is a procedural language with locally-invertible program statements and direct access to inverse semantics. There are 3 data types in Janus: plain integers, fixed-size integer arrays and dynamically-sized integer stacks. Integer variables and integer stacks may be declared locally or statically in the global scope, while integer arrays can only be declared statically.

A Janus program consists of a main procedure followed by any number of secondary procedures. The main procedure acts as the starting point of the program and is preceded by declarations of static variables, which serve as the program output upon termination. Secondary procedures may specify parameters which are passed to the callee by reference. Procedures can not return a value but may use output parameters to achieve similar effects. Procedure bodies are made up of one or more program statements, which may be one of several different forms.

A \textit{conditional statement} in Janus has both a branch condition and an exit assertion, both of which are expressions. The branch condition determines which branch of the conditional is executed, while the exit assertion is used to reversibly join the two paths of computation. If the branch condition evaluates to true, the \textbf{then}-branch is executed upon which the exit assertion should also evaluate to true. If the branch condition evaluates to false, the \textbf{else}-branch is executed after which the exit assertion should evaluate to false. If the exit assertion does not match the branch condition, the statement is undefined. See Figure~\ref{fig:conditional} in Chapter~\ref{chp:introduction} for a flowchart illustrating the mechanics of reversible conditionals.

A \textit{loop statement} has both an entry assertion and an exit condition, both of which are expressions. Initially, the entry assertion must evaluate to true after which the \textbf{do}-statement is executed. If the exit condition is then true, the loop terminates, otherwise the \textbf{loop}-statement is executed upon which the entry assertion must now evaluate to false. When executed in reverse, the exit condition serves as the entry assertion and vice versa. Figure~\ref{fig:loop} shows a flowchart illustrating the mechanics of reversible loops.

\begin{figure}[h]
\centering
\begin{subfigure}[b]{.495\columnwidth}
    \centering
    \begin{tikzpicture}[node distance = 1.3cm, auto]
        \node[none] (start) {};
        \node[assertion, right of = start] (e1) {$e_1$};
        \node[block, right of = e1, above of = e1] (s1) {$s_1$};
        \node[block, right of = e1, below of = e1] (s2) {$s_2$};
        \node[conditional, right of = e, node distance = 2.6cm] (e2) {$e_2$};
        \node[none, right of = e2] (end) {};
        
        \draw[arrow] (start) -- node[anchor = south west] {\textbf{t}} (e1);
        \draw[arrow] (e1) |- (s1);
        \draw[arrow] (s1) -| (e2);
        \draw[arrow] (e2) |- node[anchor = south west] {\textbf{f}} (s2);
        \draw[arrow] (s2) -| node[anchor = south west] {\textbf{f}} (e1);
        \draw[arrow] (e2) -- node[anchor = south east] {\textbf{t}} (end);
    \end{tikzpicture}
\end{subfigure}
\begin{subfigure}[b]{.495\columnwidth}
    \centering
    \begin{tikzpicture}[node distance = 1.3cm, auto]
        \node[none] (start) {};
        \node[conditional, right of = start] (e1) {$e_1$};
        \node[block, right of = e1, above of = e1] (s1) {$s_1$};
        \node[block, right of = e1, below of = e1] (s2) {$s_2$};
        \node[assertion, right of = e, node distance = 2.6cm] (e2) {$e_2$};
        \node[none, right of = e2] (end) {};
        
        \draw[arrow] (e1) -- node[anchor = south west] {\textbf{t}} (start);
        \draw[arrow] (s1) -| (e1);
        \draw[arrow] (e2) |- (s1);
        \draw[arrow] (s2) -| node[anchor = south west] {\textbf{f}} (e2);
        \draw[arrow] (e1) |- node[anchor = south west] {\textbf{f}} (s2);
        \draw[arrow] (end) -- node[anchor = south east] {\textbf{t}} (e2);
    \end{tikzpicture}
\end{subfigure}
\caption[Flowchart of a reversible loop statement]{Flowcharts of a reversible loop statement, in both directions of execution~\cite{ty:flowchart, ty:janus}}
\label{fig:loop}
\end{figure}
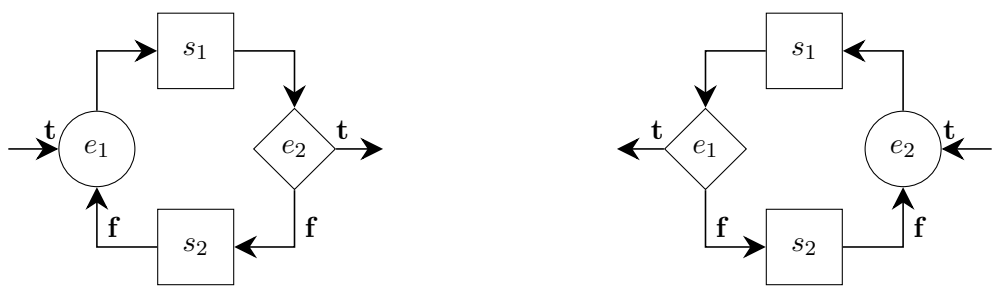

The \textit{stack modification statements}, \textbf{push} and \textbf{pop} are used to manipulate integer stacks in the usual fashion, the only difference being that pushing a variable onto a stack zero-clears the contents of the variable while popping a value into a variable presupposes that the variable is zero-cleared. This means that push and pop are inversions of each other.

A \textit{reversible variable update} in Janus works by updating a variable in the current scope in such a way that the original store remains reachable by subsequent uncomputation. Only updates that are injective in their first argument and have precisely defined inverses are allowed and it is a requirement that the expression being updated with does not in any way depend on the value of the variable being updated (to avoid loss of information). To ensure such an update cannot occur, it is not allowed for the variable identifier on the left side of the update to occur anywhere on the right-hand side. This also mandates a further restriction: no two identifiers may refer to the same location in memory in the same scope (a situation known as aliasing) as this would otherwise be a way to circumvent the aforementioned requirement.

The \textit{local variable block}, denoted by the \textbf{local}/\textbf{delocal} statement, defines a block scope wherein a new local variable is declared and initialized. After the block statement has executed with the new variable in scope, the variable is cleared by means of an exit expression which must evaluate to the value of the variable (otherwise the statement is undefined as it becomes impossible to reversibly clear the memory occupied by the variable).

The \textit{call} and \textit{uncall} statements are used to invoke procedures in the forwards and backwards direction. Arguments are passed by reference and it is a requirement that the same variable is not passed twice in the same procedure invocation to avoid aliasing of the arguments.

An \textit{expression} in Janus can either be a numeric literal, a variable identifier, an array element, a binary expression or a stack expression. Janus uses $0$ to represent the boolean value false, and non-zero to represent true.

\begin{figure}[h]
\centering
\lstinputlisting[style = basic, language = janus]{examples/root.jns}
\caption[Example Janus program for computing $\lfloor\sqrt{n}\rfloor$]{Example Janus program for computing $\lfloor\sqrt{n}\rfloor$ from~\cite{cl:janus}}
\label{fig:janus-program}
\end{figure}

Janus is known to be r-Turing complete as it is able to simulate any reversible Turing machine~\cite{ty:ejanus}. An efficient and clean translation from Janus to PISA (See Section~\ref{sec:pisa}) was presented in~\cite{ha:translation} and a partial evaluator for Janus was presented in~\cite{tm:partial}. The reversible control flow constructs used by Janus was explored in detail in~\cite{ty:flowchart}.

\section{Unstructured Janus}

An unstructured version of Janus was used in~\cite{tm:partial} as an intermediate language for polyvariant partial evaluation. Specialization of a program written in an imperative programming language is usually accomplished with polyvariant partial evaluation, which is most suitable for programs with unstructured control flow.

A precursor to the unstructured version of Janus was first presented in~\cite{ty:flowchart} as a reversible flowchart language. \citeauthor{tm:partial} suggests a simple transformation from Janus to a modified version of this flowchart language, before the partial evaluation is applied.

The language uses \textit{paired jumps} to organize the unstructured control flow in a reversible manner: Every jump statement must jump to a \textit{from}-statement which uniquely identifies the origin of the jump, thus reversibly joining the control flow. The language also supports conditional jumps which must then target a conditional from-statement, again for the purpose of reversibly joining the two paths of computation.

Unstructured Janus programs are arranged into a series of basic blocks, each consisting of a label, a from statement, a series of reversible assignments and finally a jump. The first block always starts with a \textit{start} statement and the end of the program is marked with a \textit{return} statement. The language is locally invertible, just like its structured counterpart.

The structured reversible program theorem, by \citeauthor{ty:flowchart} in~\cite{ty:flowchart} proves that such a language is computationally equivalent to its structured counterpart. Figure~\ref{fig:unstructured-program} shows a program for multiplying two odd integers using unstructured Janus.

\begin{figure}[h]
\centering

\begin{lstlisting}[style = basic, language = unstructured]
start:
goto f_2

if 0 = prod from f_2 a_2:
if odd(a) goto t1_3 e1_3

t1_3:
prod += b; t += a / 2; a -= t + 1; t -= a
goto t2_3

e1_3:
t += a / 2; a -= t; t -= a
goto e2_3

if !(prod < b) from t2_3 e2_3:
if a = 0 goto f_11 l_2

l_2:
v += b; b += v; v -= b/2
goto a_2

if prod < b + b from f_11 a_11:
v += b / 2; b -= v; v -= b
if odd(b) goto u_11 a_11

u_11:
return
\end{lstlisting}

\caption[Unstructured Janus program computing the product of two odd numbers]{Unstructured Janus program computing the product of two odd numbers, from~\cite{tm:partial}}
\label{fig:unstructured-program}
\end{figure}

\section{R}
\label{sec:r}

The reversible programming language \textit{R} (not to be confused with the statistical programming language of the same name) is an imperative reversible language developed at MIT in \citeyear{mf:r}~\cite{mf:r}. The syntax of R is a blend of LISP and C - with programs arranged as nested S-expressions but with support for C-like arrays and pointer arithmetics. R is a compiled language, with the only available compiler targeting the Pendulum reversible instruction set (see Section~\ref{sec:pisa}).

\begin{figure}[ht]
\centering

\textbf{R Grammar}

\begin{align}
    prog\quad&::=\quad s^*\tag{program}\\
    s\quad&::=\quad \textbf{\texttt{(}}\textbf{defmain}\ progname\ s^*\textbf{\texttt{)}}\tag{main routine}\\
    &\ |\ \qquad \textbf{\texttt{(}}\textbf{defsub}\ subname\ \textbf{\texttt{(}}name^*\textbf{\texttt{)}}\ s^*\textbf{\texttt{)}}\tag{subroutine}\\
    &\ |\ \qquad \textbf{\texttt{(}}\textbf{defword}\ name\ \overline{n}\textbf{\texttt{)}}\tag{global variable}\\
    &\ |\ \qquad \textbf{\texttt{(}}\textbf{defarray}\ name\ \overline{n}\,^*\textbf{\texttt{)}}\tag{global array}\\
    &\ |\ \qquad \textbf{\texttt{(}}\textbf{call}\ subname\ e^*\textbf{\texttt{)}}\tag{call subroutine}\\
    &\ |\ \qquad \textbf{\texttt{(}}\textbf{rcall}\ subname\ e^*\textbf{\texttt{)}}\tag{reverse-call subroutine}\\
    &\ |\ \qquad \textbf{\texttt{(}}\textbf{if}\ e\ \textbf{then}\ s^*\textbf{\texttt{)}}\tag{conditional}\\
    &\ |\ \qquad \textbf{\texttt{(}}\textbf{for}\ name\ \textbf{\texttt{=}}\ e\ \textbf{to}\ e\ s^*\textbf{\texttt{)}}\tag{loop}\\
    &\ |\ \qquad \textbf{\texttt{(}}\textbf{let}\ \textbf{\texttt{(}}name\ \textbf{\texttt{<-}}\ e\textbf{\texttt{)}}\ s^*\textbf{\texttt{)}}\tag{variable binding}\\
    &\ |\ \qquad \textbf{\texttt{(}}\textbf{printword}\ e\textbf{\texttt{)}}\ |\ \textbf{\texttt{(}}\textbf{println}\textbf{\texttt{)}}\tag{output}\\
    &\ |\ \qquad \textbf{\texttt{(}}loc\ \textbf{\texttt{++)}}\ |\ \textbf{\texttt{(-}}\ loc\textbf{\texttt{)}}\tag{increment/negate}\\
    &\ |\ \qquad \textbf{\texttt{(}}loc\ \textbf{\texttt{<->}}\ loc\textbf{\texttt{)}}\ |\ \textbf{\texttt{(}}loc\ \odot\ e\textbf{\texttt{)}}\tag{swap/update}\\
    loc\quad&::=\quad name\ |\ \textbf{\texttt{(*}}\ e\textbf{\texttt{)}}\ |\ \textbf{\texttt{(}}e\ \textbf{\texttt{\_}}\ e\textbf{\texttt{)}}\tag{location}\\
    e\quad&::=\quad loc\ |\ \textbf{\texttt{(}}e\ \otimes\ e\textbf{\texttt{)}}\ |\ \overline{n}\tag{expression}\\
    \odot\quad&::=\quad \textbf{\texttt{+=}}\ |\ \textbf{\texttt{-=}}\ |\ \textbf{\texttt{\textasciicircum=}}\ |\ \textbf{\texttt{<=<}}\ |\ \textbf{\texttt{>=>}}\tag{update operator}\\
    \otimes\quad&::=\quad \textbf{\texttt{+}}\ |\ \textbf{\texttt{-}}\ |\ \textbf{\texttt{\&}}\ |\ \textbf{\texttt{<{}<}}\ |\ \textbf{\texttt{>{}>}}\ |\ \textbf{\texttt{*/}}\tag{expression operator}\\
    &\ |\ \qquad \textbf{\texttt{=}}\ |\ \textbf{\texttt{<}}\ |\ \textbf{\texttt{>}}\ |\ \textbf{\texttt{<=}}\ |\ \textbf{\texttt{>=}}\ |\ \textbf{\texttt{!=}}\tag{relational operator}
\end{align}

\caption[EBNF grammar for R]{EBNF grammar for R, based on the rules presented in~\cite[Appdx.~C]{mf:reversibility}}
\label{fig:r-grammar}
\end{figure}

Figure~\ref{fig:r-grammar} shows a formal grammar describing the syntax rules of R. An R program consists of any number of statements, but should contain exactly one main routine, defined with the \textbf{defmain} statement. The main routine may invoke subroutines which are defined with the \textbf{defsub} statement. Also a program may make use of globally scoped variables and arrays, defined with the \textbf{defword} and \textbf{defarray} statement. These four types of statements may appear anywhere in a program, but only have an actual effect when appearing as top-level statements.

The \textbf{call} and \textbf{rcall} statements are used to invoke a subroutine in either direction of execution, and correspond to the \textbf{call} and \textbf{uncall} statements of Janus. Arguments are passed by reference, but only parameters bound to variables or memory references may be modified by the callee. Parameters bound to an expression or a constant should retain their value throughout the body of the subroutine to avoid undefined or irreversible behaviour.

The \textbf{if} statement is used for conditional execution. It is a requirement that the value of the conditional expression is the same before and after the conditional statement is executed, otherwise undefined or irreversible behaviour may occur. This limitation guarantees that the condition can be used to determine which branch of computation to follow in either direction of execution. It is equivalent to a Janus conditional with the same expression used as entry condition and exit assertion. A version with an else-branch was also proposed but never implemented in the compiler.

The \textbf{for} statement is used for definite iteration. The iteration variable is given an initial value matching the first expression and is then incremented upon each iteration until the termination value is reached. Both expressions must have the same value before and after the loop is executed to guarantee correct behaviour in both directions of execution. The for-loop may also be used for indefinite iteration by modifying the value of the iteration variable in the loop body - which allows the number of iterations to be determined dynamically as the loop proceeds.

A \textbf{let} statement creates a new local variable, limited in scope to the statements within the let-block. The local variable is initialized to the value of the let-expression and after the block statements have been executed the value of the let-expression should still match the value of the local variable (although they are not required to have the same value as they did initially). This is a requirement for the program to be able to reversibly zero-clear the local variable before it is reclaimed by the system - it is functionally equivalent to a Janus \textbf{local}/\textbf{delocal} block where the entry and exit expressions are the same.

The \textbf{printword} and \textbf{println} statements are used for program output. A \textbf{printword} statement will output the value of the given expression, while the \textbf{println} statement outputs a single line-break delimiter.

\begin{figure}[ht]
\centering
\lstinputlisting[style = basic, language = r]{examples/fib.rr}
\caption[Example R program for computing the n\textsuperscript{th} Fibonacci pair]{Example R program for computing the n\textsuperscript{th} Fibonacci pair, adapted from example program in~\cite{ty:ejanus}}
\label{fig:r-program}
\end{figure}

Memory modification in R is done by the increment, negate, swap and update statements. These statements operate on \textit{memory locations} which may be represented either by variable identifiers, by expressions referring to memory addresses or by expressions referring to specific elements of an array (with an underscore representing array indexing). The update statements are subject to the same restrictions as in Janus, namely that the value of the expressions being updated with must not at the same time depend on the memory location being updated. This is necessary to ensure that the update does not erase information. The \textbf{\texttt{<=<}} and \textbf{\texttt{>=>}} operators represent \textit{arithmetic} left and right rotations.

Expressions in R can be either memory locations, numeric literals or binary operations. The supported operators are numerical addition, subtraction and bitwise conjunction (\textbf{\texttt{+}}, \textbf{\texttt{-}}, \textbf{\texttt{\&}}), logical left and right shifts (\textbf{\texttt{<{}<}}, \textbf{\texttt{>{}>}}), relational operators\footnote{As described in~\cite[Appdx.~C]{mf:reversibility}, the R compiler only supports the use of relational operators in conditional expressions but this can be considered a limitation of the implementation, not of the language.} (\textbf{\texttt{=}}, \textbf{\texttt{<}}, \textbf{\texttt{<=}}, \textbf{\texttt{!=}}, \textbf{\texttt{>}}, \textbf{\texttt{>=}}) and fractional product (\textbf{\texttt{*/}}), which is the product of a signed integer and a fixed-precision fraction between $-1$ and $1$.

\section{PISA}
\label{sec:pisa}

The Pendulum microprocessor and the Pendulum ISA (PISA) is a logically reversible computer architecture created at MIT by Carlin James Vieri~\cite{cv:pendulum, cv:engineering, mf:reversibility, cv:zero}. The Pendulum architecture resembles a mix of PDP-8 and RISC and it was the first reversible programmable processor and instruction set.

PISA is a MIPS-like assembly language that has gone through several incarnations. The version presented in this section is known as the \textit{PISA Assembly Language} (PAL) and it is compatible with the Pendulum virtual machine, PendVM~\cite{cr:pendvm}.

\begin{figure}[h]
\centering

\textbf{PISA Grammar}

\begin{align*}
    prog\quad&::=\quad ((l\ \textbf{\texttt{:}})^?\ i)^+\tag{program}\\
    i\quad&::=\quad \textbf{\texttt{ADD}}\ r\ r\ |\ \textbf{\texttt{ADDI}}\ r\ c\ |\ \textbf{\texttt{ANDX}}\ r\ r\ r\ |\ \textbf{\texttt{ANDIX}}\ r\ r\ c\tag{instruction}\\
    &\ |\ \qquad \textbf{\texttt{NORX}}\ r\ r\ r\ |\ \textbf{\texttt{NEG}}\ r\ |\ \textbf{\texttt{ORX}}\ r\ r\ r\ |\ \textbf{\texttt{ORIX}}\ r\ r\ r\ |\ \textbf{\texttt{RL}}\ r\ c\\
    &\ |\ \qquad \textbf{\texttt{RLV}}\ r\ r\ |\ \textbf{\texttt{RR}}\ r\ c\ |\ \textbf{\texttt{RRV}}\ r\ r\ |\ \textbf{\texttt{SLLX}}\ r\ r\ c\ |\ \textbf{\texttt{SLLVX}}\ r\ r\ r\\
    &\ |\ \qquad \textbf{\texttt{SRAX}}\ r\ r\ c\ |\ \textbf{\texttt{SRAVX}}\ r\ r\ r\ |\ \textbf{\texttt{SRLX}}\ r\ r\ c\ |\ \textbf{\texttt{SRLVX}}\ r\ r\ r\\
    &\ |\ \qquad \textbf{\texttt{SUB}}\ r\ r\ |\ \textbf{\texttt{XOR}}\ r\ r\ |\ \textbf{\texttt{XORI}}\ r\ c\ |\ \textbf{\texttt{BEQ}}\ r\ r\ l\ |\ \textbf{\texttt{BGEZ}}\ r\ l\\
    &\ |\ \qquad \textbf{\texttt{BGTZ}}\ r\ l\ |\ \textbf{\texttt{BLEZ}}\ r\ l\ |\ \textbf{\texttt{BLTZ}}\ r\ l\ |\ \textbf{\texttt{BNE}}\ r\ r\ l\ |\ \textbf{\texttt{BRA}}\ l\\
    &\ |\ \qquad \textbf{\texttt{EXCH}}\ r\ r\ |\ \textbf{\texttt{SWAPBR}}\ r\ |\ \textbf{\texttt{RBRA}}\ l\ |\ \textbf{\texttt{START}}\ |\ \textbf{\texttt{FINISH}}\\
    &\ |\ \qquad \textbf{\texttt{DATA}}\ c\\
    c\quad&::=\quad \cdots\ |\ \textbf{\texttt{-1}}\ |\ \textbf{\texttt{0}}\ |\ \textbf{\texttt{1}}\ |\ \cdots\tag{immediate}
\end{align*}

\vspace{2mm}
\textbf{Syntax Domains}

\begin{align*}
    prog &\in \text{Programs}  & i &\in \text{Instructions}\\
    r    &\in \text{Registers} & l &\in \text{Labels}
\end{align*}

\caption{Syntax domains and EBNF grammar for PISA}
\label{fig:pal-grammar}
\end{figure}

In a conventional processor, the rules governing control flow are quite simple: After each instruction, add $1$ to the program counter. In case of a jump, set the program counter to the address of the label being jumped to. In a reversible processor like Pendulum, these rules are much more involved since simply overwriting the contents of the program counter would constitute a loss of information which break reversibility.

The Pendulum processor uses three special-purpose registers for control flow logic:

\begin{enumerate}
    \item The \textit{program counter} (PC) for storing the address of the current instruction
    \item The \textit{branch register} (BR) for storing jump offsets
    \item The \textit{direction bit} (DIR) for keeping track of the execution direction
\end{enumerate}

After each instruction, if the branch register is zero, we simply add the direction bit to the program counter. The direction bit is either $1$ or $-1$ depending on the direction of execution so this corresponds to regular stepwise execution in either direction.

If the branch register is not zero, the product of the branch register and the direction bit is added to the program counter. When a PISA program is assembled to machine code, the target labels of each of the jump instructions are replaced with \textit{relative offsets}. When a jump instruction is then executed, the relative offset is placed in the branch register and when the PC is updated, control flow jumps to the target label. Using \textit{paired branches}, the PISA programmer can clear the branch register after a jump by always jumping only to jump instruction that points back to the original jump. This has the effect of adding the negation of the relative offset to the branch register, thereby zero-clearing it.

Aside from the usual conditional jump instructions (Branch-if-equal, branch-if-zero et cetera), PISA also contains the unconditional jump instruction \textbf{\texttt{BRA}} and the unconditional reverse-jump instruction \textbf{\texttt{RBRA}} which also flips the direction bit and can therefore be used to implement uncall or reverse-call functionality. When the direction bit is $-1$, the instructions are inverted so that addition becomes subtraction, left-rotation becomes right-rotation and so on. See Figure~\ref{fig:pisa-inversion} for a table illustrating how PISA instructions are inverted when the execution direction is flipped.

\begin{figure}[h]
\centering

\begin{tabular}{ll}
    \toprule
    \multicolumn{1}{c}{$i$} & \multicolumn{1}{c}{$i^{-1}$}\\
    \midrule
    \textbf{\texttt{ADD}} $r_1$ $r_2$ & \textbf{\texttt{SUB}} $r_1$ $r_2$ \\
    \textbf{\texttt{SUB}} $r_1$ $r_2$ & \textbf{\texttt{ADD}} $r_1$ $r_2$ \\
    \textbf{\texttt{ADDI}} $r$ $c$ & \textbf{\texttt{ADDI}} $r$ $-c$ \\
    \textbf{\texttt{RL}} $r$ $c$ & \textbf{\texttt{RR}} $r$ $c$ \\
    \textbf{\texttt{RR}} $r$ $c$ & \textbf{\texttt{RL}} $r$ $c$ \\
    \textbf{\texttt{RLV}} $r_1$ $r_2$ & \textbf{\texttt{RRV}} $r_1$ $r_2$ \\
    \textbf{\texttt{RRV}} $r_1$ $r_2$ & \textbf{\texttt{RLV}} $r_1$ $r_2$ \\
    \bottomrule
\end{tabular}

\caption[PISA inversion rules]{Inversion rules for PISA instructions, all other instructions are self-inverse}
\label{fig:pisa-inversion}
\end{figure}

PISA also has the \textbf{\texttt{SWAPBR}} instruction which affords direct control over the contents of the branch register (but crucially, not the PC directly) and makes it possible to implement dynamic jumps such as switch/case structures or function pointers. \textbf{\texttt{SWAPBR}} can also be used to allow incoming jumps from more than one location.

The special instructions \textbf{\texttt{START}} and \textbf{\texttt{FINISH}} are used to mark the beginning and end of a PISA program while the memory exchange instruction \textbf{\texttt{EXCH}} provides simultaneous reversible memory-read and memory-write functionality. The \textbf{\texttt{DATA}} instruction stores an immediate value in the corresponding memory cell and can be used to mark the static storage space of a program.

The remaining instructions are similar to those of other RISC processors and implement various register update functionality (bitwise-AND, bitwise-XOR and so on) albeit in a reversible manner. For example, bitwise-AND is performed with the \textbf{\texttt{ANDX}} instruction which XORs the resulting value into a third register to ensure reversibility.

Figure~\ref{fig:pisa-program} shows an example PISA program. The design of the Pendulum control flow logic is based in part on work by Cezzar~\cite{rc:reverse} and Hall~\cite{jh:reversible}. A complete formalization of the PISA language and the Pendulum machine was given in~\cite{ha:architecture} and a translation from Janus to PISA was presented in~\cite{ha:translation}. PISA is also the target language of the R compiler~\cite{mf:r, mf:reversibility} and in this thesis we use PISA as the target language for the translation presented in Chapter~\ref{chp:compilation}.

\begin{figure}[!ht]
\centering
\lstinputlisting[style = basic, language = pisa]{examples/fall.pal}
\caption[Example PISA program for simulating free-falling objects]{Example PISA program for simulating free-falling objects, from~\cite{ha:architecture}}
\label{fig:pisa-program}
\end{figure}

\section{BobISA}
\label{sec:bob}

The reversible computer architecture Bob and its instruction set BobISA were created at the University of Copenhagen by \citeauthor{mt:bob}~\cite{mt:bob, mkc:reversible}. Bob is a \textit{Harvard architecture} which is characterized by having separate storage for instructions and data\footnote{As opposed to a \textit{von Neumann architecture} which does not distinguish between program instructions and data.}.

BobISA was designed to be sufficiently expressive to serve as the target for high-level compilers while still being relatively straightforward to implement in hardware. BobISA consists of 17 instructions and is known to be \textit{r-Turing complete}~\cite{mt:bob}.

\vspace{0.5cm}

\begin{figure}[h]
\centering

\textbf{BobISA Grammar}

\begin{align*}
    prog\quad&::=\quad i^+\tag{program}\\
    i\quad&::=\quad \textbf{\texttt{ADD}}\ r\ r\ |\ \textbf{\texttt{SUB}}\ r\ r\ |\ \textbf{\texttt{ADD1}}\ r\ |\ \textbf{\texttt{SUB1}}\ r\tag{instruction}\\
    &\ |\ \qquad \textbf{\texttt{NEG}}\ r\ |\ \textbf{\texttt{XOR}}\ r\ r\ |\ \textbf{\texttt{XORI}}\ r\ c\ |\ \textbf{\texttt{MUL2}}\ r\\
    &\ |\ \qquad \textbf{\texttt{DIV2}}\ r\ |\ \textbf{\texttt{BGEZ}}\ r\ o\ |\ \textbf{\texttt{BLZ}}\ r\ o\ |\ \textbf{\texttt{BEVN}}\ r\ o\\
    &\ |\ \qquad \textbf{\texttt{BODD}}\ r\ o\ |\ \textbf{\texttt{BRA}}\ o\ |\ \textbf{\texttt{SWBR}}\ r\ |\ \textbf{\texttt{RSWB}}\ r\\
    &\ |\ \qquad \textbf{\texttt{EXCH}}\ r\ r\\
    c\quad&::=\quad \cdots\ |\ \textbf{\texttt{-1}}\ |\ \textbf{\texttt{0}}\ |\ \textbf{\texttt{1}}\ |\ \cdots\tag{immediate}\\
    o\quad&::=\quad \textbf{\texttt{-128}}\ |\ \cdots\ |\ \textbf{\texttt{0}}\ |\ \cdots\ |\ \textbf{\texttt{127}}\tag{offset}
\end{align*}

\caption{EBNF grammar for BobISA}
\label{fig:bob-grammar}
\end{figure}

The control flow logic of Bob is identical to that of PISA, with a few caveats:

\begin{itemize}
    \renewcommand\labelitemi{\normalfont\bfseries \textendash}
    \item There are only 8-bits to store jump offsets, so a plain jump cannot be of more than 127 lines.
    \item The \textbf{\texttt{SWBR}} instruction which is similar to the \textbf{\texttt{SWAPBR}} instruction of PISA, can be used for jump offsets longer than 127.
    \item BobISA also has the \textbf{\texttt{RSWB}} instruction which flips the direction bit in addition to swapping out the branch register.
\end{itemize}

While the jump targets in the BobISA grammar in Figure~\ref{fig:bob-grammar} are represented in terms of offsets, a construction similar to that of PISA could be used, where jumps are specified with instruction labels that are then converted to offsets during program assembly.

The remaining instructions are self-explanatory and most of them have PISA equivalents, with the exception of \textbf{\texttt{MUL2}} and \textbf{\texttt{DIV2}}. These instructions operate on 4-bit two's-complement numbers and will either double or halve the value of a given register. To avoid overflow and division of odd numbers, these instructions are only well-defined for a subset of the representable values as illustrated in Figure~\ref{fig:bob-muldivinv}. Input values outside of this subset are mapped to output in such a way that reversibility is preserved. Figure~\ref{fig:bob-muldivinv} also shows the inversion rules for those BobISA instructions that are not self-inverse. Like PISA, the inverse semantics of each instruction is used when the processor is running in reverse.

\begin{figure}[ht]
\centering

\begin{tabular}{ccc}
\begin{tabular}[t]{cc}
    \toprule
    $x$ & $\mathrm{MUL2}(x)$\\
    \midrule
    -4 & -8 \\
    -3 & -6 \\
    -2 & -4 \\
    -1 & -2 \\
    { }0 & { }0 \\
    { }1 & { }2 \\
    { }2 & { }4 \\
    { }3 & { }6 \\
    \bottomrule
\end{tabular}
\qquad
\begin{tabular}[t]{cc}
    \toprule
    $x$ & $\mathrm{DIV2}(x)$\\
    \midrule
    -8 & -4 \\
    -6 & -3 \\
    -4 & -2 \\
    -2 & -1 \\
    { }0 & { }0 \\
    { }2 & { }1 \\
    { }4 & { }2 \\
    { }6 & { }3 \\
    \bottomrule
\end{tabular}
\qquad
\begin{tabular}[t]{ll}
    \toprule
    \multicolumn{1}{c}{$i$} & \multicolumn{1}{c}{$i^{-1}$}\\
    \midrule
    \textbf{\texttt{ADD}} $r_1$ $r_2$ & \textbf{\texttt{SUB}} $r_1$ $r_2$ \\
    \textbf{\texttt{SUB}} $r_1$ $r_2$ & \textbf{\texttt{ADD}} $r_1$ $r_2$ \\
    \textbf{\texttt{ADD1}} $r$ & \textbf{\texttt{SUB1}} $r$ \\
    \textbf{\texttt{SUB1}} $r$ & \textbf{\texttt{ADD1}} $r$ \\
    \textbf{\texttt{MUL2}} $r$ & \textbf{\texttt{DIV2}} $r$ \\
    \textbf{\texttt{DIV2}} $r$ & \textbf{\texttt{MUL2}} $r$ \\
    \bottomrule
\end{tabular}
\end{tabular}

\caption[BobISA inversion rules]{Tables showing well-defined inputs and outputs for \textbf{\texttt{MUL2}} and \textbf{\texttt{DIV2}} instructions as well as the inversion rules for BobISA instructions}
\label{fig:bob-muldivinv}
\end{figure}

A complete low-level design with schematics and HDL programs was developed for the Bob architecture. Only 473 reversible gates are required to construct a Bob processor, totalling only 6328 transistors~\cite{mt:bob}. A translation from the reversible functional language RFUN to BobISA was presented in~\cite{jsk:translation}.
\newpage

\chapter{The \textsc{Roopl} Language}
\label{chp:roopl}

The \textit{Reversible Object-Oriented Programming Language} (ROOPL) is, to our knowledge, the first reversible programming language with built-in support for object-oriented programming and user-defined types. ROOPL is statically typed and supports inheritance, encapsulation and subtype-polymorphism via dynamic dispatch. ROOPL is purely reversible, in the sense that no computation history is required for backwards execution. Rather, each component of a ROOPL program is locally invertible at no extra cost to program size. The basic components of the language, such as control flow structures and variable updates draw heavy inspiration from the reversible imperative language Janus~\cite{ty:janus, ty:ejanus}, however the overall structure of a ROOPL program differs vastly from that of a Janus program.

\vspace{3mm}
\begin{figure}[ht]
\centering
\lstinputlisting[style = basic, language = roopl]{examples/fib.rpl}
\caption[Example ROOPL program computing the n\textsuperscript{th} Fibonacci pair]{Example ROOPL program computing the n\textsuperscript{th} Fibonacci pair, adapted from example program in~\cite{ty:ejanus}}
\label{fig:fibonacci-program}
\end{figure}

\section{Syntax}
\label{sec:syntax}

A ROOPL program consists of one or more class definitions, each of which may contain any number of member variables and one or more methods. Each program should contain exactly one class with a nullary method named \textit{main} which acts as the program entry point. This class will be instantiated when the program starts, and the fields of this object will act as the output of the program in much the same way that the variable store acts as the output of a Janus program.

\begin{figure}[ht]
\centering

\vspace{3mm}
\textbf{\textsc{Roopl} Grammar}

\begin{align}
    prog\quad&::=\quad cl^+\tag{program}\\
    cl\quad&::=\quad \textbf{class}\ c\ (\textbf{inherits}\ c)^?\ (t\ x)^*\ m^+\tag{class definition}\\
    t\quad&::=\quad \textbf{int}\ |\ c\tag{data type}\\
    m\quad&::=\quad \textbf{method}\ q\textbf{\texttt{(}}t\ x,\ \dots,\ t\ x\textbf{\texttt{)}}\ s\tag{method}\\
    s\quad&::=\quad x\ \odot\textbf{\texttt{=}}\ e\ |\ x\ \textbf{\texttt{<=>}}\ x\tag{assignment}\\
    &\ |\ \qquad \textbf{if}\ e\ \textbf{then}\ s\ \textbf{else}\ s\ \textbf{fi}\ e\tag{conditional}\\
    &\ |\ \qquad \textbf{from}\ e\ \textbf{do}\ s\ \textbf{loop}\ s\ \textbf{until}\ e\tag{loop}\\
    &\ |\ \qquad \textbf{construct}\ c\ x\quad s\quad\textbf{destruct}\ x\tag{object block}\\
    &\ |\ \qquad \textbf{call}\ q\textbf{\texttt{(}}x,\ \dots,\ x\textbf{\texttt{)}}\ |\ \textbf{uncall}\ q\textbf{\texttt{(}}x,\ \dots,\ x\textbf{\texttt{)}}\tag{local method invocation}\\
    &\ |\ \qquad \textbf{call}\ x\textbf{\texttt{::}}q\textbf{\texttt{(}}x,\ \dots,\ x\textbf{\texttt{)}}\ |\ \textbf{uncall}\ x\textbf{\texttt{::}}q\textbf{\texttt{(}}x,\ \dots,\ x\textbf{\texttt{)}}\tag{method invocation}\\
    &\ |\ \qquad \textbf{skip}\ |\ s\ s\tag{statement sequence}\\
    e\quad&::=\quad \overline{n}\ |\ x\ |\ \textbf{\texttt{nil}}\ |\ e\ \otimes\ e\tag{expression}\\
    \odot\quad&::=\quad \textbf{\texttt{+}}\ |\ \textbf{\texttt{-}}\ |\ \textbf{\texttt{\^}}\tag{operator}\\
    \otimes\quad&::=\quad \odot\ |\ \textbf{\texttt{*}}\ |\ \textbf{\texttt{/}}\ |\ \textbf{\texttt{\%}}\ |\ \textbf{\texttt{\&}}\ |\ \textbf{\texttt{|}}\ |\ \textbf{\texttt{\&\&}}\ |\ \textbf{\texttt{||}}\ |\ \textbf{\texttt{<}}\ |\ \textbf{\texttt{\textgreater}}\ |\ \textbf{\texttt{=}}\ |\ \textbf{\texttt{!=}}\ |\ \textbf{\texttt{<=}}\ |\ \textbf{\texttt{\textgreater=}}\tag{operator}
\end{align}

\vspace{2mm}
\textbf{Syntax Domains}

\begin{align*}
    prog &\in \text{Programs} & s &\in \text{Statements}      & n &\in \text{Constants} \\
    cl   &\in \text{Classes}  & e &\in \text{Expressions}     & x &\in \text{VarIDs}    \\
    t    &\in \text{Types}    & \odot &\in \text{ModOps}      & q &\in \text{MethodIDs} \\
    m    &\in \text{Methods}  & \otimes &\in \text{Operators} & c &\in \text{ClassIDs}
\end{align*}

\caption{Syntax domains and EBNF grammar for ROOPL}
\label{fig:roopl-grammar}
\end{figure}

A \textit{class definition} consists of the keyword \textbf{class} followed by the class name. If the class is a subclass of another, it is specified with the keyword \textbf{inherits} followed by the name of the base class. Next, any number of class fields are declared, each of which may be either integers or references to other objects (these are the only types in ROOPL). Finally, each class definition contains at least one method which is defined with the keyword \textbf{method} followed by the method name, a comma-separated list of parameters and the method body. A class must have at least one method, as method calls are the only mechanism of interfacing with an object.

A \textit{reversible assignment} in ROOPL uses the same C-like syntax as a reversible assignment in Janus. A variable can be updated either through addition (\textbf{\texttt{+=}}), subtraction (\textbf{\texttt{-=}}) or bitwise XOR (\textbf{\texttt{\textasciicircum=}}). It is only possible to reversibly update the value of some variable $x$ by some expression $e$ in this manner, if the value of $e$ does not depend, in any way, on the value of $x$. We can enforce this limitation by explicitly disallowing any occurrences of the identifier $x$ in the expression $e$, but this is only sufficient if we can also guarantee that no other identifiers refer to the same location in memory as $x$ (See Section~\ref{sec:argument-aliasing}).

A \textit{variable swap} denoted by the token \textbf{\texttt{<=>}} swaps the value of two integer variables or two object references. This was supported in Janus as syntactic sugar for the statement sequence:
\begin{equation*}
    x_1\ \textbf{\texttt{\textasciicircum=}}\ x_2 \quad x_2\ \textbf{\texttt{\textasciicircum=}}\ x_1 \quad x_1\ \textbf{\texttt{\textasciicircum=}}\ x_2
\end{equation*}
which achieves the same effect as $x_1\ \textbf{\texttt{<=>}}\ x_2$, given that $x_1$ and $x_2$ are both integers~\cite{ty:janus}. In ROOPL, we might wish to swap two object references, for which the XOR operation is undefined, so the swap statement has been made explicit in the language.

Loops and conditional statements are syntactically (and semantically) identical to Janus loops and Janus conditionals. The use of assertions at control flow join points ensure that we can execute these statements in reverse, in a deterministic manner.

An \textit{object block} denotes the instantiation and lifetime of a ROOPL object. The statement consist of the keyword \textbf{construct} followed by a class name and a variable identifier. Then follows the block statement $s$ within which the newly created object will be accessible, and finally the keyword \textbf{destruct} followed by the object identifier signifies the end of the object block.

A \textit{method invocation} may refer either to a local method or to a method in another object - both variants can be both called and uncalled. An \textit{expression} may be either a constant, a variable, the special value \textbf{nil} or a binary expression.

\section{Argument Aliasing}
\label{sec:argument-aliasing}

To avoid situations where multiple identifiers refer to the same memory location within the same scope, known as \textit{aliasing}, we must place some restrictions on method invocations. One source of aliasing occurs when the same identifier is passed to more than one parameter of a method:

\vspace{4mm}
\begin{lstlisting}[style = basic, language = roopl]
    method foo(int a)
        call bar(a, a)
    
    method bar(int x, int y)
        x -= y //Irreversible update!
\end{lstlisting}
\vspace{4mm}

Such situations are easily avoided by prohibiting method calls with the same identifier passed to more than one parameter, which is the same approach used in Janus. Another, similar source of aliasing is when a field of an object is passed to a parameter of a method of that same object:

\vspace{3mm}
\begin{lstlisting}[style = basic, language = roopl]
    class Object
        int a
        
        method main()
            a += 5
            call foo(a)
        
        method foo(int b)
            a -= b //Irreversible update!
\end{lstlisting}

In this case we can disallow object fields as arguments to local methods, and since the object field is already in scope in the callee, there is little point in also passing it as an argument. ROOPL uses two separate statements to distinguish between local and non-local method invocations, so it is a simple matter of prohibiting object fields as arguments to local call statements.

Finally, we must make sure that non-local method invocations are indeed non-local, which might not be the case if an object has obtained a reference to itself. We can avoid such a situation by disallowing non-local method calls to some object $x$ which also passes $x$ as an argument.

\section{Parameter Passing Schemes}
\label{sec:parameter-passing}

The most common parameter passing modes and their implications for reversible languages were briefly discussed in~\cite{ty:ejanus} while a more in-depth investigation was performed in~\cite{tm:parameters}.  The common call-by-value scheme is generally not suitable for reversible languages since the values accumulated in the function parameters after a function has executed, must be disposed of somehow when the function returns, which would result in a loss of information. It is also difficult to reconstruct multiple arguments given only a single return value, which is the main reason that Janus uses the call-by-reference strategy. With this approach, a function can simply store results in the parameter variables and sidestep traditional single return values altogether. The values in the parameters are handed back to the caller instead of being erased.

Another approach, which is likely simpler to implement in practice, is call-by-value-result presented in~\cite{tm:parameters}. Call-by-value-result involves swapping the function arguments into local variables in the called procedure, and copying them back after the body has been executed. This approach hinges upon the callee not being able to alter the argument variables other than through the local copies, which can only occur if more than one identifier, referring to the same argument, is in scope.

Call-by-reference and call-by-value-result are semantically equivalent parameter passing schemes in the absence of aliasing~\cite{tm:parameters}, and therefore either scheme can be used. The operational semantics of ROOPL (Section~\ref{sec:semantics}) uses call-by-reference.

\section{Object Model}
\label{sec:object-model}

ROOPL is a class-based programming language, it is based on the notion of \textit{classes} that serve as blueprints for specific objects or class instances. Alternatively, a language may allow objects to serve as blueprints for other objects - this is known as prototype-based programming. Prototype-based programming is dominated by dynamically-typed\footnote{For an example of a statically typed language with a prototype-based object model, see Omega~\cite{gb:omega}.}, interpreted languages (examples include JavaScript and Lua). While there is no immediate reason to believe that dynamic typing is not a feasible strategy for a reversible programming language, it is as of yet an unexplored notion.

Some OOP languages have very intricate object models - Java includes support for access modifiers, static methods and fields, final classes (that may not be subclassed), final methods (that cannot be overridden in a subclass) and both implementation inheritance and interface inheritance. \CC\ supports friend classes, virtual and non-virtual methods, abstract methods, private inheritance and multiple inheritance.

These features facilitate the creation of very rich models and interfaces but they are less interesting from our perspective: implementation on a reversible machine. The rules imposed by these features on the classes of a program are generally enforced at compile-time - wholly independently from the target architecture and its limitations (with the exception of dynamic dispatch which has to be handled at runtime).

The object model of ROOPL is therefore very simple compared to these languages - introducing access modifiers or static methods to ROOPL is possible but would be a meaningless venture as the implementation of such features would be identical for an irreversible language. The ROOPL object model is based on the following key points:

\begin{itemize}
    \renewcommand\labelitemi{\normalfont\bfseries \textendash}
    \item All class fields are protected, they may be accessed only from within class methods and subclass methods
    \item All class methods are public, they may be accessed from other objects
    \item All class methods are virtual and may be overridden in a subclass (but only by a method with the same type signature, there is no support for method overloading)
    \item A class may inherit only from a single base class (\textit{single inheritance})
    \item Any method that takes an object reference of some type $\tau$ also works when passed a reference of type $\tau'$ if $\tau'$ is a subclass of $\tau$ (\textit{subtype polymorphism})
    \item Local method calls are statically dispatched (\textit{closed recursion}), only method calls to other objects are dynamically dispatched
\end{itemize}

Note that the single inheritance object model of ROOPL still allows for inheritance hierarchies of arbitrary depth (known as \textit{multi-level} inheritance).

\section{Object Instantiation}
\label{sec:object-instantiation}

In irreversible OOP languages, object instantiation is typically accomplished in two or three general steps:

\begin{enumerate}
    \item A suitable amount of memory is reserved for the object
    \item All fields are initialized to some neutral value
    \item The class constructor is executed, establishing the class invariants of the object
\end{enumerate}

When the program (or the garbage collector) deallocates the object, the memory is (typically) simply marked as unused. Any leftover values from the internal state of the object will be irreversibly overwritten if/when another object is initialized in the same part of memory later on. In a reversible language we cannot clear leftover values in memory like this as that would constitute a loss of information.

Instead we require unused memory to already be zero-cleared at the time of object creation, so the fields of each new object have a known initial value. The only way to achieve this reversibly is to uncompute all the state accumulated inside an object before it is deallocated, returning all fields to the value zero. This cannot be done automatically so this responsibility lies with the program itself.

\begin{figure}[ht]
\centering

\begin{lstlisting}[style = basic, language = roopl]
class Object
    int data
    
    method add5()
        data += 5
    
    method get(int out)
        out ^= data

class Program
    int result
    
    method main()
        construct Object obj      //Allocate object
            call obj::add5()      //Perform computation
            call obj::get(result) //Fetch result
            uncall obj::add5()    //Uncompute internal state
        destruct obj              //Reversibly deallocate object
\end{lstlisting}

\caption{Simple example program illustrating the mechanics of an object block}
\label{fig:obj-program}
\end{figure}

A ROOPL object exists only within a \textbf{construct}/\textbf{destruct} block. Consider the statement:
\begin{equation*}
    \textbf{construct}\ c\ x\quad s\quad\textbf{destruct}\ x
\end{equation*}
the mechanics of such a statement are as follows:

\begin{enumerate}
    \item Memory for an object of class $c$ is allocated. All fields are automatically zero-initialized by virtue of residing in already zero-cleared memory.
    \item The block statement $s$ is executed, with the name $x$ representing a reference to the newly allocated object.
    \item The reference $x$ may be modified by swapping its value with that of other references of the same type, but it should be restored to its original value within the statement block $s$, otherwise the meaning of the object block is undefined.
    \item Any state that is accumulated within the object should be cleared or uncomputed before the end of the statement is reached, otherwise the meaning of the object block is undefined.
    \item The zero-cleared memory is reclaimed by the system.
\end{enumerate}

If the fields of the object are not zero-cleared after the block statement, it becomes impossible for the system to reversibly reclaim the memory occupied by the object. It is up to the program to maintain this invariant.

\section{Inheritance Semantics}
\label{sec:inheritance-semantics}

Before we can define the type system and formal semantics of the language, we need a precise definition of the object model as described in Section~\ref{sec:object-model} and Section~\ref{sec:object-instantiation}. Given the \textit{dynamic type} of some object, we wish to determine the class fields and class methods of the object such that inherited fields and methods are included, unless overridden by the derived class.

\begin{figure}[ht]
\fontsize{10pt}{12pt}\selectfont
\centering

\begin{equation*}
    \mathrm{gen}(\overbrace{cl_1,\ \dots,\ cl_n}^p) \quad = \quad \overbrace{\big[\ \alpha(cl_1)\ \mapsto\ \beta(cl_1),\ \dots,\ \alpha(cl_n)\ \mapsto\ \beta(cl_n)\ \big]}^{\Gamma}
\end{equation*}

\vspace{-0.2cm}

\begin{equation*}
    \alpha\big(\ \textbf{class}\ c\ \cdots\ \big)\ =\ c \qquad \qquad \beta\big(\ cl\ \big)\ =\ \big(\ \mathrm{fields}(cl),\ \mathrm{methods}(cl)\ \big)
\end{equation*}

\caption[Definition of function \textit{gen}]{Definition of function \textit{gen}, for constructing the class map of a given program}
\label{fig:class-map}
\end{figure}

To this end, we define the \textit{class map} $\Gamma$ of a program $p$ as a finite map from class identifiers (type names) to tuples of the method and field declarations of that class. The application of a class map $\Gamma$ to some class identifier $cl$ is denoted $\Gamma(cl)$. Figure~\ref{fig:class-map} shows the definition of function \textit{gen}, which is used to construct the class map of a program.

\begin{figure}[ht]
\fontsize{10pt}{12pt}\selectfont
\centering

\begin{equation*}
    \mathrm{fields}(cl)\ =\ \begin{cases} \eta(cl) &\mbox{if } cl\ \sim\ \big[\ \textbf{class}\ c\ \cdots\ \big]\\ 
\eta(cl)\ \cup\ \mathrm{fields}\big(\ \alpha^{-1}(c')\ \big) &\mbox{if } cl\ \sim\ \big[\ \textbf{class}\ c\ \textbf{inherits}\ c'\ \cdots\ \big]\end{cases}
\end{equation*}

\begin{equation*}
    \mathrm{methods}(cl)\ =\ \begin{cases} \delta(cl) &\mbox{if } cl\ \sim\ \big[\ \textbf{class}\ c\ \cdots\ \big]\\ 
\delta(cl)\ \uplus\ \mathrm{methods}\big(\ \alpha^{-1}(c')\ \big) &\mbox{if } cl\ \sim\ \big[\ \textbf{class}\ c\ \textbf{inherits}\ c'\ \cdots\ \big]\end{cases}
\end{equation*}

\begin{equation*}
    A\ \uplus\ B \quad \overset{\textbf{def}}{\scalebox{1.2}{=}} \quad A\ \cup\ \Big\{\ m\ \in\ B \quad \Big| \quad \nexists\ m'\ \big(\ \zeta(m')\ =\ \zeta(m)\ \wedge\ m'\ \in\ A\ \big)\ \Big\}
\end{equation*}

\vspace{-0.4cm}

\begin{equation*}
    \zeta\big(\ \textbf{method}\ q\ \textbf{(}\cdots\textbf{)}\ s\ \big)\ =\ q \qquad \qquad\eta\big(\ \textbf{class}\ c\ \cdots\ \overbrace{t_1\ f_1\ \cdots\ t_n\ f_n}^{\mathit{fs}}\ \cdots\ \big)\ =\ \mathit{fs}
\end{equation*}

\vspace{-0.2cm}

\begin{equation*}
    \delta\big(\ \textbf{class}\ c\ \cdots\ \overbrace{\textbf{method}\ q_1\ \textbf{(}\cdots\textbf{)}\ s_1\ \cdots\ \textbf{method}\ q_n\ \textbf{(}\cdots\textbf{)}\ s_n}^{\mathit{ms}}\ \cdots\ \big)\ =\ \mathit{ms}
\end{equation*}

\caption{Definition of functions for modelling class inheritance}
\label{fig:class-map-aux}
\end{figure}

Figure~\ref{fig:class-map-aux} shows the definition of the functions \textit{fields} and \textit{methods} which determines the class fields and class methods for a given class. The set operation $\uplus$ implements method overriding by dropping methods from the base class if a method with the same name exists in the derived class.

\section{Type System}
\label{sec:type-system}

The type system of ROOPL is specified by the syntax-directed typing rules shown in the following sections. There are three main type judgments covering expressions, statements and whole ROOPL programs. The inference rules are presented in the style of Winskell~\cite{gw:semantics} and are arranged in such a way that a complete type derivation can only be constructed for well-typed programs. The next section establishes the notation and presents auxiliary definitions.

\subsection{Preliminaries}
\label{subsec:type-prelims}

The set of types in ROOPL is given by the grammar:

\begin{equation*}
    \tau\ ::=\ \textbf{int}\ \big|\ c\ \in\ \text{ClassIDs}
\end{equation*}

A \textit{type environment} $\Pi$ is a finite map from variable identifiers to types. The application of a type environment $\Pi$ to some identifier $x$ is denoted by $\Pi(x)$. Update $\Pi'\ =\ \Pi[x\ \mapsto \tau]$ defines a type environment $\Pi'$ s.t. $\Pi'(x)\ =\ \tau$ and $\Pi'(y)\ =\ \Pi(y)$ if $y\ \neq\ x$. The empty type environment is written $[\ ]$. The function $\mathit{vars}\ :\ \mathit{Expressions}\ \rightarrow\ \mathit{VarIDs}$, is given by the following recursive definition:

\vspace{-4mm}
\begin{alignat*}{2}
    &\mathrm{vars}(\overline{n})\ &&=\ \varnothing\\
    &\mathrm{vars}(\textbf{nil})\ &&=\ \varnothing\\ 
    &\mathrm{vars}(x)\ &&=\ \{\ x\ \}\\
    &\mathrm{vars}(e_1\ \otimes\ e_2)\ &&=\ \mathrm{vars}(e_1)\ \cup\ \mathrm{vars}(e_2)
\end{alignat*}

\noindent To facilitate support for subtype polymorphism, we also define a binary subtype relation $c_1 \prec: c_2$ for classes:

\begin{enumerate}
    \item $c_1 \prec: c_2$ if $c_1$ inherits from $c_2$
    \item $c \prec: c$ (\textit{reflexivity})
    \item $c_1 \prec: c_3$ if $c_1 \prec: c_2$ and $c_2 \prec: c_3$ (\textit{transitivity})
\end{enumerate}

\subsection{Expressions}

The type judgment:
\begin{equation*}
    \dfrac{}{\Pi\ \vdash_{expr}\ e\ :\ \tau}   
\end{equation*}
defines the type of expressions. We say that under environment $\Pi$, expression $e$ has type $\tau$.

\begin{figure}[ht]
\fontsize{10pt}{12pt}\selectfont
\centering

\begin{equation*}
    \dfrac{}{\Pi\ \vdash_{expr}\ n\ :\ \textbf{int}}\ \textsc{T-Con} \qquad \dfrac{\Pi(x)\ =\ \tau}{\Pi\ \vdash_{expr}\ x\ :\ \tau}\ \textsc{T-Var} \qquad \dfrac{\tau\ \neq\ \textbf{int}}{\Pi\ \vdash_{expr}\ \textbf{nil}\ :\ \tau}\ \textsc{T-Nil}
\end{equation*}

\begin{equation*}
    \dfrac{\Pi\ \vdash_{expr}\ e_1\ :\ \textbf{int} \qquad \Pi\ \vdash_{expr}\ e_2\ :\ \textbf{int}}{\Pi\ \vdash_{expr}\ e_1\ \otimes\ e_2\ :\ \textbf{int}}\ \textsc{T-BinOpInt}
\end{equation*}

\begin{equation*}
    \dfrac{\Pi\ \vdash_{expr}\ e_1\ :\ \tau \qquad \Pi\ \vdash_{expr}\ e_2\ :\ \tau \qquad \ominus \in \{\textbf{\texttt{=}},\ \textbf{\texttt{!=}}\}}{\Pi\ \vdash_{expr}\ e_1\ \ominus\ e_2\ :\ \textbf{int}}\ \textsc{T-BinOpObj}
\end{equation*}

\caption{Typing rules for ROOPL expressions}
\label{fig:expression-typing}
\end{figure}

The type rules \textsc{T-Con}, \textsc{T-Var} and \textsc{T-Nil} defines the types of simple expressions. Numeric literals are always of type \textbf{int}, the type of some variable $x$ depends on its type in the type environment $\Pi$ and the \textbf{nil}-literal can have any non-integer type. All binary operations are defined for integers, while the equality and inequality comparisons are also defined for object references.

\subsection{Statements}

The type judgment:
\begin{equation*}
    \dfrac{}{\langle\Pi,\ c\rangle\ \vdash_{stmt}^\Gamma\ s}
\end{equation*}
\vspace{1pt}
defines the well-typed statements. We say that under type environment $\Pi$ within class $c$, the statement $s$ is well-typed with class map $\Gamma$.

\begin{figure}[!ht]
\fontsize{10pt}{12pt}\selectfont
\centering

\begin{equation*}
    \dfrac{x\ \notin\ \mathrm{vars}(e) \qquad \Pi\ \vdash_{expr}\ e\ :\ \textbf{int} \qquad \Pi(x)\ =\ \textbf{int}}{\langle\Pi,\ c\rangle\ \vdash_{stmt}^\Gamma\ x\ \odot\textbf{\texttt{=}}\ e}\ \textsc{T-AssVar}
\end{equation*}

\begin{equation*}
    \dfrac{\Pi\ \vdash_{expr}\ e_1\ :\ \textbf{int}\ \qquad \langle\Pi,\ c\rangle\ \vdash_{stmt}^\Gamma\ s_1\ \qquad \langle\Pi,\ c\rangle\ \vdash_{stmt}^\Gamma\ s_2\ \qquad \Pi\ \vdash_{expr}\ e_2\ :\ \textbf{int}}{\langle\Pi,\ c\rangle\ \vdash_{stmt}^\Gamma\ \textbf{if}\ e_1\ \textbf{then}\ s_1\ \textbf{else}\ s_2\ \textbf{fi}\ e_2}\ \textsc{T-If}
\end{equation*}

\begin{equation*}
    \dfrac{\Pi\ \vdash_{expr}\ e_1\ :\ \textbf{int}\ \qquad \langle\Pi,\ c\rangle\ \vdash_{stmt}^\Gamma\ s_1\ \qquad \langle\Pi,\ c\rangle\ \vdash_{stmt}^\Gamma\ s_2\ \qquad \Pi\ \vdash_{expr}\ e_2\ :\ \textbf{int}}{\langle\Pi,\ c\rangle\ \vdash_{stmt}^\Gamma\ \textbf{from}\ e_1\ \textbf{do}\ s_1\ \textbf{loop}\ s_2\ \textbf{until}\ e_2}\ \textsc{T-Loop}
\end{equation*}

\begin{equation*}
    \dfrac{\langle\Pi[x\ \mapsto\ c'],\ c\rangle\ \vdash_{stmt}^\Gamma\ s}{\langle\Pi,\ c\rangle\ \vdash_{stmt}^\Gamma\ \textbf{construct}\ c'\ x \quad s \quad \textbf{destruct}\ x}\ \textsc{T-ObjBlock} \qquad \dfrac{}{\langle\Pi,\ c\rangle\ \vdash_{stmt}^\Gamma\ \textbf{skip}}\ \textsc{T-Skip}
\end{equation*}

\begin{equation*}
    \dfrac{\langle\Pi,\ c\rangle\ \vdash_{stmt}^\Gamma\ s_1 \qquad \langle\Pi,\ c\rangle\ \vdash_{stmt}^\Gamma\ s_2}{\langle\Pi,\ c\rangle\ \vdash_{stmt}^\Gamma\ s_1\ s_2}\ \textsc{T-Seq} \qquad \dfrac{\Pi(x_1)\ =\ \Pi(x_2)}{\langle\Pi,\ c\rangle\ \vdash_{stmt}^\Gamma\ x_1\ \textbf{\texttt{<=>}}\ x_2}\ \textsc{T-SwpVar}
\end{equation*}

\begin{equation*}
    \dfrac{
        \begin{gathered}
            \Gamma(c) = \Big(\ fields,\ methods\ \Big) \qquad \Big(\ \textbf{method}\ q\textbf{\texttt{(}}t_1\ y_1,\ \dots,\ t_n\ y_n\textbf{\texttt{)}}\ s\ \Big)\ \in\ methods\\
            \big\{\ x_1,\ \dots,\ x_n\ \big\}\ \cap\ fields\ =\ \emptyset \qquad i\ \neq\ j \implies x_i\ \neq\ x_j \qquad \Pi(x_1) \prec: t_1\ \cdots\ \Pi(x_n) \prec: t_n
        \end{gathered}
    }{
        \langle\Pi,\ c\rangle\ \vdash_{stmt}^\Gamma\ \textbf{call}\ q\textbf{\texttt{(}}x_1,\ \dots,\ x_n\textbf{\texttt{)}}
    }\ \textsc{T-Call}
\end{equation*}

\begin{equation*}
    \dfrac{
        \begin{gathered}
            \Gamma(\Pi(x_0)) = \Big(\ fields,\ methods\ \Big) \qquad \Big(\ \textbf{method}\ q\textbf{\texttt{(}}t_1\ y_1,\ \dots,\ t_n\ y_n\textbf{\texttt{)}}\ s\ \Big)\ \in\ methods\\
            i\ \neq\ j \implies x_i\ \neq\ x_j \qquad \Pi(x_1) \prec: t_1\ \cdots\ \Pi(x_n) \prec: t_n
        \end{gathered}
    }{
        \langle\Pi,\ c\rangle\ \vdash_{stmt}^\Gamma\ \textbf{call}\ x_0\textbf{\texttt{::}}q\textbf{\texttt{(}}x_1,\ \dots,\ x_n\textbf{\texttt{)}}
    }\ \textsc{T-CallO}
\end{equation*}

\begin{equation*}
    \dfrac{\langle\Pi,\ c\rangle\ \vdash_{stmt}^\Gamma\ \textbf{call}\ q\textbf{\texttt{(}}x_1,\ \dots,\ x_n\textbf{\texttt{)}}}{\langle\Pi,\ c\rangle\ \vdash_{stmt}^\Gamma\ \textbf{uncall}\ q\textbf{\texttt{(}}x_1,\ \dots,\ x_n\textbf{\texttt{)}}}\ \textsc{T-UC} \qquad \dfrac{\langle\Pi,\ c\rangle\ \vdash_{stmt}^\Gamma\ \textbf{call}\ x_0\textbf{\texttt{::}}q\textbf{\texttt{(}}x_1,\ \dots,\ x_n\textbf{\texttt{)}}}{\langle\Pi,\ c\rangle\ \vdash_{stmt}^\Gamma\ \textbf{uncall}\ x_0\textbf{\texttt{::}}q\textbf{\texttt{(}}x_1,\ \dots,\ x_n\textbf{\texttt{)}}}\ \textsc{T-UCO}
\end{equation*}

\caption{Typing rules for ROOPL statements}
\label{fig:statement-typing}
\end{figure}

The type rule \textsc{T-AssVar} defines well-typed variable assignments as only those where both sides of the assignment are of type \textbf{int} and the assignee identifier $x$ does not occur in the expression $e$. Rules \textsc{T-If} and \textsc{T-Loop} define the set of well-typed conditionals and loop statements - the entry and exit conditions must be integers, while the branch and loop statements should be well-typed themselves. An object block is well-typed if the block statement is, with the new object $x$ bound in the type environment. The skip statement is always well-typed while a statement sequence is well-typed provided each of its constituent statements are as well. A variable swap statement is well-typed only if both of its operands have the same type.

A local method invocation is well-typed, in accordance with type rule \textsc{T-Call}, only if:

\begin{itemize}
    \renewcommand\labelitemi{\normalfont\bfseries \textendash}
    \item The number of arguments matches the arity of the method
    \item No class fields are passed as arguments to the method (See Section~\ref{sec:argument-aliasing})
    \item There are no duplicate arguments (See Section~\ref{sec:argument-aliasing})
    \item Each argument is a subtype of the type of the equivalent formal parameter
\end{itemize}

The type rule \textsc{T-CallO} establishes similar conditions for foreign method invocations, for which there is no restriction on class fields being used as arguments. There is however, the condition that the callee object $x_0$ is not also passed as an argument. The type rules \textsc{T-UC} and \textsc{T-UCO} describe the conditions for uncalling methods and they are both defined in terms of their inverse counterparts.

\subsection{Programs}

\begin{figure}[ht]
\fontsize{10pt}{12pt}\selectfont
\centering

\begin{equation*}
    \dfrac{\langle\Pi[x_1\ \mapsto\ t_1,\ \dots,\ x_n\ \mapsto\ t_n],\ c\rangle\ \vdash_{stmt}^\Gamma\ s}{\langle\Pi,\ c\rangle\ \vdash_{meth}^\Gamma\ \textbf{method}\ q\textbf{\texttt{(}}t_1\ x_1,\ \dots,\ t_n\ x_n\textbf{\texttt{)}}\ s}\ \textsc{T-Method}
\end{equation*}

\begin{equation*}
    \dfrac{
        \begin{gathered}
            \Gamma(c) = \Big(\ \overbrace{\{\langle t_1,\ f_1\rangle,\ \dots,\ \langle t_i,\ f_i\rangle\}}^{fields},\ \overbrace{\{m_1,\ \dots,\ m_n\}}^{methods}\ \Big)\\
            \Pi = [f_1\ \mapsto\ t_1,\ \dots,\ f_i\ \mapsto t_i] \qquad \langle\Pi,\ c\rangle\ \vdash_{meth}^\Gamma m_1 \quad \cdots \quad \langle\Pi,\ c\rangle\ \vdash_{meth}^\Gamma m_n
        \end{gathered}
    }{
        \vdash_{class}^\Gamma c
    }\ \textsc{T-Class}
\end{equation*}

\begin{equation*}
    \dfrac{
        \begin{gathered}
            \Big(\ \textbf{method main}\textbf{\texttt{()}}\ s\ \Big)\ \in\ \bigcup\limits_{i = 1}^n\ \mathrm{methods}(c_i)\\
            \Gamma = \mathrm{gen}(c_1,\ \dots,\ c_n) \qquad \vdash_{class}^\Gamma c_1 \quad \cdots \quad \vdash_{class}^\Gamma c_n
        \end{gathered}
    }{
        \vdash_{prog} c_1\ \cdots\ c_n
    }\ \textsc{T-Prog}
\end{equation*}

\caption{Typing rules for ROOPL methods, classes and programs}
\label{fig:program-typing}
\end{figure}
The type rules \textsc{T-Prog}, \textsc{T-Class} and \textsc{T-Method} defines the set of well-typed programs, classes and methods respectively.

A class is well-typed iff each of its methods are well-typed with all class fields bound to their respective types in the type environment. A method is well-typed iff its body is well-typed with all parameters bound to their respective types in the type environment. A ROOPL program is well-typed iff all of its classes are well-typed and there exists a nullary method named \textbf{main}. See Figure~\ref{fig:class-map-aux} for the definition of function \textit{methods}.

\section{Language Semantics}
\label{sec:semantics}

The operational semantics of ROOPL are specified by the syntax-directed inference rules shown in the following sections. There are three main judgments: the evaluation of ROOPL expressions, the execution of ROOPL statements and the execution of ROOPL programs. The next section establishes the notation and presents some auxiliary definitions.

\subsection{Preliminaries}

Let $\mathbb{N}_0$ be the set of non-negative integers. A \textit{memory location} $l \in \mathbb{N}_0$ refers to a single location in program memory. An \textit{environment} $\gamma$ is a partial function mapping variable identifiers to memory locations. A \textit{store} $\mu$ is a partial function mapping memory locations to values. An \textit{object} is a tuple consisting of the class name of the object and an environment mapping the object fields to memory locations. A \textit{value} $v$ is either an integer, an object or a memory location.

The application of an environment $\gamma$ to some variable identifier $x$ is denoted by $\gamma(x)$. Update $\gamma' = \gamma[x\ \mapsto\ l]$ defines an environment $\gamma'$ such that $\gamma'(x) = l$ and $\gamma'(y) = \gamma(y)$ if $y \neq x$. The empty environment is written $[\ ]$. The same notation is used for stores.

\begin{figure}[ht]
\centering

\begin{alignat*}{2}
l \in\ &\text{Locs}\ &&=\quad \mathbb{N}_0\\
\gamma \in\ &\text{Envs}\ &&=\quad \text{VarIDs}\ \rightharpoonup\ \text{Locs}\\
\mu \in\ &\text{Stores}\ &&=\quad \text{Locs}\ \rightharpoonup\ \text{Values}\\
&\text{Objects}\ &&=\quad \{\ \langle c_f,\ \gamma_f \rangle\ |\ c_f\ \in\ \text{ClassIDs}\ \wedge\ \gamma_f\ \in\ \text{Envs}\ \}\\
v \in\ &\text{Values}\ &&=\quad \mathbb{Z}\ \cup\ \text{Objects}\ \cup\ \text{Locs}
\end{alignat*}

\caption{Semantic values}
\label{fig:semantic-values}
\end{figure}

\subsection{Expressions}

The judgment:
\begin{equation*}
    \langle\gamma,\ \mu\rangle\ \vdash_{expr}\ e\ \Rightarrow\ v
\end{equation*}
defines the meaning of expressions. We say that under environment $\gamma$ and store $\mu$, expression $e$ evaluates to the value $v$.

\begin{figure}[ht]
\fontsize{10pt}{12pt}\selectfont
\centering

\begin{equation*}
\dfrac{}{\langle\gamma,\ \mu\rangle\ \vdash_{expr}\ n\ \Rightarrow\ \overline{n}}\ \textsc{Con} \qquad \dfrac{}{\langle\gamma,\ \mu\rangle\ \vdash_{expr}\ x\ \Rightarrow\ \mu(\gamma(x))}\ \textsc{Var} \qquad \dfrac{}{\langle\gamma,\ \mu\rangle\ \vdash_{expr}\ \textbf{\texttt{nil}}\ \Rightarrow\ 0}\ \textsc{Nil}
\end{equation*}

\begin{equation*}
    \dfrac{\langle\gamma,\ \mu\rangle\ \vdash_{expr}\ e_1\ \Rightarrow\ v_1 \qquad \langle\gamma,\ \mu\rangle\ \vdash_{expr}\ e_2\ \Rightarrow\ v_2 \qquad \llbracket \otimes \rrbracket(v_1,\ v_2)\ =\ v}{\langle\gamma,\ \mu\rangle\ \vdash_{expr}\ e_1 \otimes e_2\ \Rightarrow\ v}\ \textsc{BinOp}
\end{equation*}

\caption{Semantic inference rules for evaluation of ROOPL expressions}
\label{fig:expression-semantics}
\end{figure}

There are no side effects on the store when evaluating a ROOPL expression. Like in Janus, the logic value \textit{true} is represented by any non-zero integer, while \textit{false} is represented by zero. For the sake of simplicity, \textbf{nil} evaluates to $0$, which can never be the value of a non-nil reference, thereby ensuring that the equality and inequality operators behave as expected.

\begin{figure}[ht]
\fontsize{10pt}{12pt}\selectfont
\centering

\begin{align*}
    \llbracket \textbf{\texttt{+}} \rrbracket(v_1,\ v_2)\ &=\ v_1\ +\ v_2 & \llbracket \textbf{\texttt{\%}} \rrbracket(v_1,\ v_2)\ &=\ v_1\ mod\ v_2\\[0.8ex]
    \llbracket \textbf{\texttt{-}} \rrbracket(v_1,\ v_2)\ &=\ v_1\ -\ v_2 & \llbracket \textbf{\texttt{\&}} \rrbracket(v_1,\ v_2)\ &=\ v_1\ and\ v_2\\[0.8ex]
    \llbracket \textbf{\texttt{*}} \rrbracket(v_1,\ v_2)\ &=\ v_1\ \times\ v_2 & \llbracket \textbf{\texttt{|}} \rrbracket(v_1,\ v_2)\ &=\ v_1\ or\ v_2\\[0.8ex]
    \llbracket \textbf{\texttt{/}} \rrbracket(v_1,\ v_2)\ &=\ \frac{v_1}{v_2} & \llbracket \textbf{\texttt{\^}} \rrbracket(v_1,\ v_2)\ &=\ v_1\ xor\ v_2\\[0.8ex]
    \llbracket \textbf{\texttt{\&\&}} \rrbracket(v_1,\ v_2)\ &=\ \begin{cases}0 &\mbox{if  } v_1\ =\ 0\ \vee\ v_2\ =\ 0\\1 &\mbox{otherwise}\end{cases} & \llbracket \textbf{\texttt{<=}} \rrbracket(v_1,\ v_2)\ &=\ \begin{cases}1 &\mbox{if  } v_1\ \leq\ v_2\\0 &\mbox{otherwise}\end{cases}\\[0.8ex]
    \llbracket \textbf{\texttt{||}} \rrbracket(v_1,\ v_2)\ &=\ \begin{cases}0 &\mbox{if  } v_1\ =\ v_2\ =\ 0\\1 &\mbox{otherwise}\end{cases} & \llbracket \textbf{\texttt{>=}} \rrbracket(v_1,\ v_2)\ &=\ \begin{cases}1 &\mbox{if  } v_1\ \geq\ v_2\\0 &\mbox{otherwise}\end{cases}\\[0.8ex]
    \llbracket \textbf{\texttt{<}} \rrbracket(v_1,\ v_2)\ &=\ \begin{cases}1 &\mbox{if  } v_1\ <\ v_2\\0 &\mbox{otherwise}\end{cases} & \llbracket \textbf{\texttt{=}} \rrbracket(v_1,\ v_2)\ &=\ \begin{cases}1 &\mbox{if  } v_1\ =\ v_2\\0 &\mbox{otherwise}\end{cases}\\[0.8ex]
    \llbracket \textbf{\texttt{>}} \rrbracket(v_1,\ v_2)\ &=\ \begin{cases}1 &\mbox{if  } v_1\ >\ v_2\\0 &\mbox{otherwise}\end{cases} & \llbracket \textbf{\texttt{!=}} \rrbracket(v_1,\ v_2)\ &=\ \begin{cases}1 &\mbox{if  } v_1\ \neq\ v_2\\0 &\mbox{otherwise}\end{cases}
\end{align*}

\caption[Definition of the functions $\llbracket \otimes \rrbracket$]{Definition of the functions $\llbracket \otimes \rrbracket$, where $\otimes$ represents any of the binary expression operators}
\label{fig:expression-semantics-auxiliary}
\end{figure}

The inference rules \textsc{Con}, \textsc{Var} and \textsc{Nil} defines the meaning of expressions containing simple values or variables, while \textsc{BinOp} defines the meaning of expressions containing any of the arithmetic operators $\{\textbf{\texttt{+}},\ \textbf{\texttt{-}},\ \textbf{\texttt{*}},\ \textbf{\texttt{/}},\ \textbf{\texttt{\%}}\}$, bitwise operators $\{\textbf{\texttt{\&}},\ \textbf{\texttt{|}},\ \textbf{\texttt{\textasciicircum}}\}$, logical operators $\{\textbf{\texttt{\&\&}},\ \textbf{\texttt{||}}\}$ or relational operators $\{\textbf{\texttt{<}},\ \textbf{\texttt{>}},\ \textbf{\texttt{=}},\ \textbf{\texttt{!=}},\ \textbf{\texttt{<=}},\ \textbf{\texttt{>=}}\}$, all of which are defined in Figure~\ref{fig:expression-semantics-auxiliary}.

\subsection{Statements}

The judgment:
\begin{equation*}
    \langle l,\ \gamma\rangle\ \vdash_{stmt}^\Gamma\ s\ :\ \mu\ \rightleftharpoons\ \mu'
\end{equation*}
defines the meaning of statements. We say that under environment $\gamma$ and object $l$, statement $s$ with class map $\Gamma$ reversibly transforms store $\mu$ to store $\mu'$. The location $l$ is simply the location in the store $\mu$ of the \textit{current object}. It is equivalent to the value of the \textit{this} or \textit{self} keywords of other OOP languages but cannot be referred to explicitly in ROOPL. Figure~\ref{fig:statement-semantics-a} on page~\pageref{fig:statement-semantics-a} and Figure~\ref{fig:statement-semantics-b} on page~\pageref{fig:statement-semantics-b} shows the operational semantics of ROOPL statements.

\begin{subfigures}
\begin{figure}[!ht]
\fontsize{10pt}{12pt}\selectfont
\centering

\begin{equation*}
    \dfrac{}{\langle l,\ \gamma\rangle\ \vdash_{stmt}^\Gamma\ \textbf{skip}\ :\ \mu\ \rightleftharpoons\ \mu}\ \textsc{Skip}
\end{equation*}

\begin{equation*}
    \dfrac{\langle l,\ \gamma\rangle\ \vdash_{stmt}^\Gamma\ s_1\ :\ \mu\ \rightleftharpoons\ \mu' \qquad \langle l,\ \gamma\rangle\ \vdash_{stmt}^\Gamma\ s_2\ :\ \mu'\ \rightleftharpoons\ \mu''}{\langle l,\ \gamma\rangle\ \vdash_{stmt}^\Gamma\ s_1\ s_2\ :\ \mu\ \rightleftharpoons\ \mu''}\ \textsc{Seq}
\end{equation*}

\begin{equation*}
    \dfrac{\langle\gamma,\ \mu\rangle\ \vdash_{expr}\ e\ \Rightarrow\ v \qquad \llbracket \odot \rrbracket(\mu(\gamma(x)),\ v)\ =\ v'}{\langle l,\ \gamma\rangle\ \vdash_{stmt}^\Gamma\ x \odot \textbf{\texttt{=}}\ e\ :\ \mu\ \rightleftharpoons\ \mu[\gamma(x)\ \mapsto\ v']}\ \textsc{AssVar}
\end{equation*}

\begin{equation*}
    \dfrac{\mu(\gamma(x_1))\ =\ v_1 \qquad \mu(\gamma(x_2))\ =\ v_2}{\langle l,\ \gamma\rangle\ \vdash_{stmt}^\Gamma\ x_1\ \textbf{\texttt{<=>}}\ x_2\ :\ \mu\ \rightleftharpoons\ \mu[\gamma(x_1)\ \mapsto\ v_2,\ \gamma(x_2)\ \mapsto\ v_1]}\ \textsc{SwpVar}
\end{equation*}

\begin{equation*}
    \dfrac{
        \begin{gathered}
            \langle\gamma,\ \mu\rangle\ \vdash_{expr}\ e_1\ \centernot\Rightarrow\ 0 \qquad \langle l,\ \gamma\rangle\ \vdash_{stmt}^\Gamma\ s_1\ :\ \mu\ \rightleftharpoons\ \mu'\\
            \langle l,\ \gamma\rangle\ \vdash_{loop}^\Gamma\ (e_1,\ s_1,\ s_2,\ e_2)\ :\ \mu'\ \rightleftharpoons\ \mu''
        \end{gathered}
    }{
        \langle l,\ \gamma\rangle\ \vdash_{stmt}^\Gamma\ \textbf{from}\ e_1\ \textbf{do}\ s_1\ \textbf{loop}\ s_2\ \textbf{until}\ e_2\ :\ \mu\ \rightleftharpoons\ \mu''
    }\ \textsc{LoopMain}
\end{equation*}

\begin{equation*}
    \dfrac{\langle\gamma,\ \mu\rangle\ \vdash_{expr}\ e_2\ \centernot\Rightarrow\ 0}{\langle l,\ \gamma\rangle\ \vdash_{loop}^\Gamma\ (e_1,\ s_1,\ s_2,\ e_2)\ :\ \mu\ \rightleftharpoons\ \mu}\ \textsc{LoopBase}
\end{equation*}

\begin{equation*}
    \dfrac{
        \begin{gathered}
            \langle\gamma,\ \mu\rangle\ \vdash_{expr}\ e_2\ \Rightarrow\ 0 \qquad \langle l,\ \gamma\rangle\ \vdash_{stmt}^\Gamma\ s_2\ :\ \mu\ \rightleftharpoons\ \mu'\\
            \langle\gamma,\ \mu'\rangle\ \vdash_{expr}\ e_1\ \Rightarrow\ 0 \qquad \langle l,\ \gamma\rangle\ \vdash_{stmt}^\Gamma\ s_1\ :\ \mu'\ \rightleftharpoons\ \mu''\\
            \langle l,\ \gamma\rangle\ \vdash_{loop}^\Gamma\ (e_1,\ s_1,\ s_2,\ e_2)\ :\ \mu''\ \rightleftharpoons\ \mu'''
        \end{gathered}
    }{
        \langle l,\ \gamma\rangle\ \vdash_{loop}^\Gamma\ (e_1,\ s_1,\ s_2,\ e_2)\ :\ \mu\ \rightleftharpoons\ \mu'''
    }\ \textsc{LoopRec}
\end{equation*}

\begin{equation*}
    \dfrac{\langle\gamma,\ \mu\rangle\ \vdash_{expr}\ e_1\ \centernot\Rightarrow\ 0 \qquad \langle l,\ \gamma\rangle\ \vdash_{stmt}^\Gamma\ s_1\ :\ \mu\ \rightleftharpoons\ \mu' \qquad \langle\gamma,\ \mu'\rangle\ \vdash_{expr}\ e_2\ \centernot\Rightarrow\ 0}{\langle l,\ \gamma\rangle\ \vdash_{stmt}^\Gamma\ \textbf{if}\ e_1\ \textbf{then}\ s_1\ \textbf{else}\ s_2\ \textbf{fi}\ e_2\ :\ \mu\ \rightleftharpoons\ \mu'}\ \textsc{IfTrue}
\end{equation*}

\begin{equation*}
    \dfrac{\langle\gamma,\ \mu\rangle\ \vdash_{expr}\ e_1\ \Rightarrow\ 0 \qquad \langle l,\ \gamma\rangle\ \vdash_{stmt}^\Gamma\ s_2\ :\ \mu\ \rightleftharpoons\ \mu' \qquad \langle\gamma,\ \mu'\rangle\ \vdash_{expr}\ e_2\ \Rightarrow\ 0}{\langle l,\ \gamma\rangle\ \vdash_{stmt}^\Gamma\ \textbf{if}\ e_1\ \textbf{then}\ s_1\ \textbf{else}\ s_2\ \textbf{fi}\ e_2\ :\ \mu\ \rightleftharpoons\ \mu'}\ \textsc{IfFalse}
\end{equation*}

\caption{Semantic inference rules for execution of ROOPL statements}
\label{fig:statement-semantics-a}
\end{figure}

\begin{figure}[!ht]
\fontsize{10pt}{12pt}\selectfont
\centering

\begin{equation*}
    \dfrac{
        \begin{gathered}
            \mu(l) = \langle c,\ \gamma'\rangle \qquad \Gamma(c) = \Big(\ fields,\ methods\ \Big)\\
            \Big(\ \textbf{method}\ q\textbf{\texttt{(}}t_1\ y_1,\ \dots,\ t_n\ y_n\textbf{\texttt{)}}\ s\ \Big)\ \in\ methods\\
            \langle l,\ \gamma'[y_1\ \mapsto\ \gamma(x_1),\ \dots,\ y_n\ \mapsto\ \gamma(x_n)]\rangle\ \vdash_{stmt}^\Gamma\ s\ :\ \mu\ \rightleftharpoons\ \mu'
        \end{gathered}
    }{
        \langle l,\ \gamma\rangle\ \vdash_{stmt}^\Gamma\ \textbf{call}\ q\textbf{\texttt{(}}x_1,\ \dots,\ x_n\textbf{\texttt{)}}\ :\ \mu\ \rightleftharpoons\ \mu'
    }\ \textsc{Call}
\end{equation*}

\begin{equation*}
    \dfrac{\langle l,\ \gamma\rangle\ \vdash_{stmt}^\Gamma\ \textbf{call}\ q\textbf{\texttt{(}}x_1,\ \dots,\ x_n\textbf{\texttt{)}}\ :\ \mu'\ \rightleftharpoons\ \mu}{\langle l,\ \gamma\rangle\ \vdash_{stmt}^\Gamma\ \textbf{uncall}\ q\textbf{\texttt{(}}x_1,\ \dots,\ x_n\textbf{\texttt{)}}\ :\ \mu\ \rightleftharpoons\ \mu'}\ \textsc{Uncall}
\end{equation*}

\begin{equation*}
    \dfrac{
        \begin{gathered}
            l' = \mu(\gamma(x_0)) \qquad \mu(l') = \langle c,\ \gamma'\rangle \qquad \Gamma(c) = \Big(\ fields,\ methods\ \Big)\\
            \Big(\ \textbf{method}\ q\textbf{\texttt{(}}t_1\ y_1,\ \dots,\ t_n\ y_n\textbf{\texttt{)}}\ s\ \Big)\ \in\ methods\\
            \langle l',\ \gamma'[y_1\ \mapsto\ \gamma(x_1),\ \dots,\ y_n\ \mapsto\ \gamma(x_n)]\rangle\ \vdash_{stmt}^\Gamma\ s\ :\ \mu\ \rightleftharpoons\ \mu'
        \end{gathered}
    }{
        \langle l,\ \gamma\rangle\ \vdash_{stmt}^\Gamma\ \textbf{call}\ x_0\textbf{\texttt{::}}q\textbf{\texttt{(}}x_1,\ \dots,\ x_n\textbf{\texttt{)}}\ :\ \mu\ \rightleftharpoons\ \mu'
    }\ \textsc{CallOBj}
\end{equation*}

\begin{equation*}
    \dfrac{\langle l,\ \gamma\rangle\ \vdash_{stmt}^\Gamma\ \textbf{call}\ x_0\textbf{\texttt{::}}q\textbf{\texttt{(}}x_1,\ \dots,\ x_n\textbf{\texttt{)}}\ :\ \mu'\ \rightleftharpoons\ \mu}{\langle l,\ \gamma\rangle\ \vdash_{stmt}^\Gamma\ \textbf{uncall}\ x_0\textbf{\texttt{::}}q\textbf{\texttt{(}}x_1,\ \dots,\ x_n\textbf{\texttt{)}}\ :\ \mu\ \rightleftharpoons\ \mu'}\ \textsc{UncallObj}
\end{equation*}

\begin{equation*}
    \dfrac{
        \begin{gathered}
            \Gamma(c) = \Big(\ \overbrace{\{\langle t_1,\ f_1\rangle,\ \dots,\ \langle t_n,\ f_n\rangle\}}^{fields},\ methods\ \Big) \quad \gamma' = [f_1\ \mapsto\ a_1,\ \dots,\ f_n\ \mapsto\ a_n]\\
            \big\{\ l',\ r,\ a_1,\ \dots,\ a_n\ \big\}\ \cap\ \mathrm{dom}(\mu)\ =\ \emptyset \qquad \big|\big\{\ l',\ r,\ a_1,\ \dots,\ a_n\ \big\}\big|\ =\ n\ +\ 2\\
            \mu' = \mu\Bigg[\begin{gathered}a_1\ \mapsto\ 0,\ \dots,\ a_n\ \mapsto\ 0\\l'\ \mapsto\ \langle c,\ \gamma'\rangle,\ r\ \mapsto\ l'\end{gathered}\Bigg] \quad \begin{gathered}\langle l,\ \gamma[x\ \mapsto\ r]\rangle\ \vdash_{stmt}^\Gamma\ s\ :\ \mu'\ \rightleftharpoons\ \mu''\\\mu''(a_1) = 0 \quad \cdots \quad \mu''(a_n) = 0\end{gathered}\\[0.8ex]
        \end{gathered}
    }{
        \langle l,\ \gamma\rangle\ \vdash_{stmt}^\Gamma\ \textbf{construct}\ c\ x \quad s \quad \textbf{destruct}\ x\ :\ \mu\ \rightleftharpoons\ \mu''{\restriction_{\mathrm{dom}(\mu)}}
    }\ \textsc{ObjBlock}
\end{equation*}

\caption{Semantic inference rules for execution of ROOPL statements (\textit{cont.})}
\label{fig:statement-semantics-b}
\end{figure}
\end{subfigures}

Rule \textsc{Skip} defines the meaning of the skip statement which has no effect on the store $\mu$. Rule \textsc{Seq} defines the meaning of statement sequences and rule \textsc{AssVar} defines reversible assignments.

The rules \textsc{LoopMain}, \textsc{LoopBase} and \textsc{LoopRec} defines the meaning of loops. If assertion $e_1$ holds, the loop is entered by rule \textsc{LoopMain}. Then the loop iterates by rule \textsc{LoopRec} until $e_2$ does not hold, terminating the loop by rule \textsc{LoopBase}. Since conditionals and loops in ROOPL are comparable to those in Janus, these rules are similar to those presented in~\cite{ty:ejanus}.

The semantics of conditional statements are given by rules \textsc{IfTrue} and \textsc{IfFalse}. If the entry condition evaluates to \textit{true} (non-zero), then the \textbf{then}-branch is executed and the exit assertion should also evaluate to \textit{true}. If the entry condition evaluates to \textit{false}, the \textbf{else}-branch is executed and the exit assertion should evaluate to \textit{false}.

Rule \textsc{Call} defines the meaning of invoking a method local to the current object. The method $q$ in the current class $c$ should have exactly $n$ formal parameters $y_1,\ \dots,\ y_n$, matching the $n$ arguments $x_1,\ \dots,\ x_n$. The resulting store $\mu'$ is the store obtained from executing the method body $s$ in the object environment $\gamma'$ with the arguments bound to the formal parameters.

Rule \textsc{Uncall} essentially reverses the direction of execution by requiring the input store of a \textbf{call} statement to serve as the output store of the inverse \textbf{uncall} statement. A similar technique was used in~\cite{ty:janus, ty:ejanus}.

Rule \textsc{CallObj} governs invocation of methods not local to the current object. The resulting store $\mu'$ is the store obtained from executing the method body $s$ in the environment $\gamma'$ of the object $x_0$, with the arguments bound to the formal parameters. The inverse rule \textsc{UncallObj} is defined using the same approach used for rule \textsc{Uncall}.

Even if $x_0$ has been upcast to a base class (as allowed by the type system, see Section~\ref{sec:type-system}) earlier in the program, the class name $c$ refers to the \textit{dynamic} type of $x_0$. As a result, the method lookup will correctly yield the appropriate method from the derived class - in accordance with the concept of subtype-polymorphism (the actual mechanism used to achieve dynamic dispatch, virtual lookup tables, are considered an implementation detail at this point). Method dispatch in ROOPL depends only on the name of the method and the type of the callee object, not on the number of arguments nor their individual types (\textit{single dispatch}).

Rule \textsc{ObjBlock} defines the meaning of a \textbf{construct}/\textbf{destruct} block and the semantics of object construction and destruction. The \textbf{construct}/\textbf{destruct} blocks of ROOPL are similar to the \textbf{local}/\textbf{delocal} blocks of Janus. In both cases, it is the program itself that is responsible for reversibly returning the memory to a state where it can be reclaimed by the system and in the presence of recursion, there is no upper bound on the size the store can grow to. Like in Janus, if $x$ is already in scope when a block scope is entered, that variable is shadowed by the new object $x$ within the statement block (\textit{static lexical scoping}).

The new memory locations $l'$, $r$ and $a_1,\ \dots,\ a_n$ should be unused in the store $\mu$ and they should all represent distinct memory locations. The identifiers $f_1,\ \dots,\ f_n$ representing the fields of the new object are bound to the unused memory locations $a_1,\ \dots,\ a_n$ in the new object environment $\gamma'$. Next, we let $\mu'$ be the updated store containing:

\begin{itemize}
    \renewcommand\labelitemi{\normalfont\bfseries \textendash}
    \item The location $l$ mapped to the object tuple $\langle c,\ \gamma'\rangle$
    \item The object reference $r$ mapped to the location $l$
    \item The $n$ new object fields mapped to $0$
\end{itemize}

The result store $\mu''$ (restricted to the domain of $\mu$) is the store obtained from executing the block statement $s$ in store $\mu'$ under environment $\gamma$ mapping $x$ to the object reference $r$, provided all object fields are zero-cleared in $\mu''$ afterwards (otherwise the statement is undefined).

\subsection{Programs}

The judgment:
\begin{equation*}
    \vdash_{prog}\ p\ \Rightarrow\ \sigma
\end{equation*}
defines the meaning of ROOPL programs. Whichever class in $p$ contains the main method is instantiated and the main method body is executed. The result is a partial function $\sigma$ mapping identifiers to values, corresponding to the class fields of the main class.

\begin{figure}[ht]
\centering

\begin{equation*}
\dfrac{
    \begin{gathered}
        \Gamma = \mathrm{gen}(c_1,\ \dots,\ c_n) \qquad \Gamma(c) = \Big(\ \overbrace{\{\langle t_1,\ f_1\rangle,\ \dots,\ \langle t_i,\ f_i\rangle\}}^{fields},\ methods\ \Big)\\
        \Big(\ \textbf{method main}\textbf{\texttt{()}}\ s\ \Big)\ \in\ methods \qquad \gamma = [f_1\ \mapsto\ 1,\ \dots,\ f_i\ \mapsto\ i]\\
        \mu = [1\ \mapsto\ 0,\ \dots,\ i\ \mapsto\ 0,\ i + 1\ \mapsto\ \langle c,\ \gamma\rangle] \qquad \langle i + 1,\ \gamma\rangle\ \vdash_{stmt}^\Gamma\ s\ :\ \mu\ \rightleftharpoons\ \mu'
    \end{gathered}
}{
    \vdash_{prog}\ c_1\ \cdots\ c_n\ \Rightarrow\ (\mu'\ \circ\ \gamma)
}\ \textsc{Main}
\end{equation*}

\caption{Semantic inference rule for execution of ROOPL programs}
\label{fig:program-semantics}
\end{figure}

Rule \textsc{Main} defines the meaning of a ROOPL program. The fields $f_1,\ \dots,\ f_i$ of the class $c$ containing the main method are bound in a new environment $\gamma$ to the first $i$ memory addresses (excluding address $0$ which is reserved for \textbf{nil}). The first $i$ memory addresses are then initialized to $0$ in a new environment $\mu$ as well as the address $i + 1$ which maps to the new instance of the main object. The modified store $\mu'$ is obtained from executing the body $s$ of the main method. The composite function $(\mu'\ \circ\ \gamma)$, which maps each class field to its final value, serves as the output of executing $p$.

\section{Program Inversion}
\label{sec:inversion}

A common formulation of the Church-Turing thesis states that a function $f$ is computable iff there exists some Turing Machine that computes it. By extension, if some program $p$, written in a Turing-equivalent programming language\footnote{or indeed any algorithm specified in a Turing-equivalent model of computation}, computes a function $f$ then $f$ is computable.

Program inversion is the process of determining an inverse program of $p$, computing the function $f^{-1}$. Given a computable function $f\ :\ X\ \rightarrow\ Y$, we wish to find a program computing the function $f'\ :\ Y\ \rightarrow\ X$ such that:
\begin{equation*}
    f(x) = y\quad \Leftrightarrow\quad f'(y) = x
\end{equation*}
Since $f$ is computable, we can compute $f'(y)$ by simulating $f$ on all inputs $x\ \in\ X$ until the result is $y$. This is a variation of McCarthy's generate-and-test technique~\cite{jm:gat}, which implies that we can always find the inverse program if $f$ is computable. Unfortunately, this is a completely impractical approach to program inversion. McCarthy himself described his approach in the following terms:
\begin{displayquote}[jm:gat][.]
\textelp{} this procedure is extremely inefficient. It corresponds to looking for a proof of a conjecture by checking in some order all possible English essays
\end{displayquote}
Recently, more practical methods for automatic program inversion of irreversible programs have superseded the generate-and-test algorithm~\cite{rg:global}. In the context of reversible programming languages, program inversion is both simple and efficient. Reversible languages like Janus and ROOPL support \textit{local inversion} of program statements - no contextual information or whole-program analysis is needed~\cite{rg:local}. This is a property of reversible languages that follows from the nature of their design and the constraints they impose on the programmer. The statement inverter $\mathcal{I}$ in Figure~\ref{fig:roopl-stmt-inverter} maps ROOPL statements to their inverse counterparts.

\begin{figure}[ht]
\fontsize{10pt}{12pt}\selectfont
\centering

\begin{alignat*}{2}
    &\mathcal{I}\ \llbracket \textbf{skip} \rrbracket\ =\ \textbf{skip} \qquad &&\mathcal{I}\ \llbracket s_1\ s_2 \rrbracket\ =\ \mathcal{I}\llbracket s_2 \rrbracket \enspace \mathcal{I}\llbracket s_1 \rrbracket\\[1.2ex]
    &\mathcal{I}\ \llbracket x\ \textbf{\texttt{+=}}\ e \rrbracket\ =\ x\ \textbf{\texttt{-=}}\ e \qquad &&\mathcal{I}\ \llbracket x\ \textbf{\texttt{-=}}\ e \rrbracket\ =\ x\ \textbf{\texttt{+=}}\ e\\[1.2ex]
    &\mathcal{I}\ \llbracket x\ \textbf{\texttt{\textasciicircum=}}\ e \rrbracket\ =\ x\ \textbf{\texttt{\textasciicircum=}}\ e \qquad &&\mathcal{I}\ \llbracket x_1\ \textbf{\texttt{<=>}}\ x_2 \rrbracket\ =\ x_1\ \textbf{\texttt{<=>}}\ x_2\\[1.2ex]
    &\mathcal{I}\ \llbracket \textbf{call}\ q\textbf{\texttt{(}}\dots\textbf{\texttt{)}} \rrbracket\ =\ \textbf{uncall}\ q\textbf{\texttt{(}}\dots\textbf{\texttt{)}} \qquad &&\mathcal{I}\ \llbracket \textbf{call}\ x\textbf{\texttt{::}}q\textbf{\texttt{(}}\dots\textbf{\texttt{)}} \rrbracket\ =\ \textbf{uncall}\ x\textbf{\texttt{::}}q\textbf{\texttt{(}}\dots\textbf{\texttt{)}}\\[1.2ex]
    &\mathcal{I}\ \llbracket \textbf{uncall}\ q\textbf{\texttt{(}}\dots\textbf{\texttt{)}} \rrbracket\ =\ \textbf{call}\ q\textbf{\texttt{(}}\dots\textbf{\texttt{)}} \qquad &&\mathcal{I}\ \llbracket \textbf{uncall}\ x\textbf{\texttt{::}}q\textbf{\texttt{(}}\dots\textbf{\texttt{)}} \rrbracket\ =\ \textbf{call}\ x\textbf{\texttt{::}}q\textbf{\texttt{(}}\dots\textbf{\texttt{)}}\\[1.2ex]
    &\mathcal{I}\ \llbracket \textbf{if}\ e_1\ \textbf{then}\ s_1\ \textbf{else}\ s_2\ \textbf{fi}\ e_2 \rrbracket\ &&=\ \textbf{if}\ e_1\ \textbf{then}\ \mathcal{I}\llbracket s_1 \rrbracket\ \textbf{else}\ \mathcal{I}\llbracket s_2 \rrbracket\ \textbf{fi}\ e_2\\[1.2ex]
    &\mathcal{I}\ \llbracket \textbf{from}\ e_1\ \textbf{do}\ s_1\ \textbf{loop}\ s_2\ \textbf{until}\ e_2 \rrbracket\ &&=\ \textbf{from}\ e_1\ \textbf{do}\ \mathcal{I}\llbracket s_1 \rrbracket\ \textbf{loop}\ \mathcal{I}\llbracket s_2 \rrbracket\ \textbf{until}\ e_2\\[1.2ex]
    &\mathcal{I}\ \left\llbracket\textbf{construct}\ c\ x\quad s\quad \textbf{destruct}\ x\right\rrbracket\ &&=\ \textbf{construct}\ c\ x\quad\mathcal{I}\llbracket s \rrbracket\quad\textbf{destruct}\ x
\end{alignat*}

\caption{Statement inverter for ROOPL statements}
\label{fig:roopl-stmt-inverter}
\end{figure}

In ROOPL, statement inversion does not change the size of statements and as a consequence, a ROOPL program is exactly the same size as its own inverse. Furthermore, provided that every statement has the same computational complexity as its inverse, it follows that ROOPL programs have the same computational complexity as their inverted counterparts.

\begin{figure}[ht]
\fontsize{10pt}{12pt}\selectfont
\centering

\begin{equation*}
    \mathcal{I}_c\ \left\llbracket
        \begin{aligned}
            \textbf{class}\ &c\ \cdots\\
            &\textbf{method}\ q_1\ \textbf{\texttt{(}}\dots\textbf{\texttt{)}}\ s_1\\
            &\qquad\vdots\\
            &\textbf{method}\ q_n\ \textbf{\texttt{(}}\dots\textbf{\texttt{)}}\ s_n\\
        \end{aligned}
    \right\rrbracket\ =
        \begin{aligned}
            \textbf{class}\ &c\ \cdots\\
            &\textbf{method}\ q_1\ \textbf{\texttt{(}}\dots\textbf{\texttt{)}}\ \mathcal{I}'\llbracket s_1 \rrbracket\\
            &\qquad\vdots\\
            &\textbf{method}\ q_n\ \textbf{\texttt{(}}\dots\textbf{\texttt{)}}\ \mathcal{I}'\llbracket s_n \rrbracket\\
        \end{aligned}
\end{equation*}

\begin{equation*}
    \mathcal{I}_{prog}\ \llbracket cl_1\ \cdots\ cl_n \rrbracket\ =\ \mathcal{I}_c\llbracket cl_1 \rrbracket\ \cdots\ \mathcal{I}_c\llbracket cl_n \rrbracket
\end{equation*}

\caption{Program and class inverters for ROOPL}
\label{fig:roopl-prog-inverter}
\end{figure}

Whole-program inversion is accomplished by straightforward recursive descent over the components and statements of the program. Figure~\ref{fig:roopl-prog-inverter} shows the definition of the ROOPL program inverter $\mathcal{I}_{prog}$, which inverts each method in each class to produce the inverse program. The program inverter $\mathcal{I}_{prog}$ is an \textit{involution}, so inverting a program twice will yield the original program.

\begin{figure}[ht]
\fontsize{10pt}{12pt}\selectfont
\centering

\begin{alignat*}{2}
    &\mathcal{I}'\ \llbracket \textbf{call}\ q\textbf{\texttt{(}}\dots\textbf{\texttt{)}} \rrbracket\ =\ \textbf{call}\ q\textbf{\texttt{(}}\dots\textbf{\texttt{)}} \qquad &&\mathcal{I}'\ \llbracket \textbf{call}\ x\textbf{\texttt{::}}q\textbf{\texttt{(}}\dots\textbf{\texttt{)}} \rrbracket\ =\ \textbf{call}\ x\textbf{\texttt{::}}q\textbf{\texttt{(}}\dots\textbf{\texttt{)}}\\[1.2ex]
    &\mathcal{I}'\ \llbracket \textbf{uncall}\ q\textbf{\texttt{(}}\dots\textbf{\texttt{)}} \rrbracket\ =\ \textbf{uncall}\ q\textbf{\texttt{(}}\dots\textbf{\texttt{)}} \qquad &&\mathcal{I}'\ \llbracket \textbf{uncall}\ x\textbf{\texttt{::}}q\textbf{\texttt{(}}\dots\textbf{\texttt{)}} \rrbracket\ =\ \textbf{uncall}\ x\textbf{\texttt{::}}q\textbf{\texttt{(}}\dots\textbf{\texttt{)}}\\[1.2ex]
    &\omit\hfil $\mathcal{I}'\ \llbracket s \rrbracket\ $ &&\omit $=\ \mathcal{I}\llbracket s \rrbracket$ \hfil
\end{alignat*}

\caption{Modified statement inverter for ROOPL statements}
\label{fig:roopl-modified-stmt-inverter}
\end{figure}

Because calling a method is equivalent to uncalling the same method inverted, if we change call-statements into uncall-statements and vice-versa, the inversion of the method body is cancelled out.

To fix this issue, we use a modified version of the statement inverter for the whole-program inversion, that does not invert calls and uncalls. Figure~\ref{fig:roopl-modified-stmt-inverter} shows the modified statement inverter $\mathcal{I}'$.

\subsection{Invertibility of Statements}

Theorem~\ref{thm:statement-invertibility} shows that $\mathcal{I}$ is in fact a statement inverter. If executing statement $s$ in store $\mu$ yields $\mu'$, then executing statement $\mathcal{I}\llbracket s\rrbracket$ in store $\mu'$ should yield $\mu$.

\begin{theorem}\label{thm:statement-invertibility}(Invertibility of statements)
\begin{equation*}
    \overbrace{\langle l,\ \gamma\rangle\ \vdash_{stmt}^\Gamma\ s\ :\ \mu\ \rightleftharpoons\ \mu'}^{\mathcal{S}}\ \iff\ \overbrace{\langle l,\ \gamma\rangle\ \vdash_{stmt}^\Gamma\ \mathcal{I}\llbracket s\rrbracket\ :\ \mu'\ \rightleftharpoons\ \mu}^{\mathcal{S}'}
\end{equation*}
\end{theorem}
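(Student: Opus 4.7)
The plan is to prove the biconditional by first observing that $\mathcal{I}$ is an involution on ROOPL statements, which reduces the claim to the single implication $\mathcal{S} \Rightarrow \mathcal{S}'$. A routine structural induction on $s$ establishes $\mathcal{I}\llbracket \mathcal{I}\llbracket s \rrbracket \rrbracket = s$: sequence inversion reverses order and thus reverts on a second pass, \textbf{+=} and \textbf{-=} are paired, \textbf{\texttt{\textasciicircum=}} and \textbf{\texttt{<=>}} are self-inverse, \textbf{call} and \textbf{uncall} are paired, and the compound constructors (conditional, loop, construct/destruct) merely propagate $\mathcal{I}$ into their substatements. Once involutivity is in hand, the $\Leftarrow$ direction of the theorem follows from the $\Rightarrow$ direction applied to $\mathcal{I}\llbracket s \rrbracket$ in place of $s$.

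For the forward direction I would proceed by induction on the derivation of $\mathcal{S}$. The rules \textsc{Skip} and \textsc{SwpVar} are immediate, since the inverted statements are syntactically identical and the corresponding transitions are their own inverses. The \textsc{Seq} case uses $\mathcal{I}\llbracket s_1\,s_2 \rrbracket = \mathcal{I}\llbracket s_2 \rrbracket\,\mathcal{I}\llbracket s_1 \rrbracket$ together with two appeals to the inductive hypothesis, stitched through the intermediate store $\mu'$ in reverse. For \textsc{AssVar}, the critical observation is that $e$ evaluates to the same value in $\mu$ and in $\mu[\gamma(x) \mapsto v']$, which holds because well-typed ROOPL programs prohibit $x \in \mathrm{vars}(e)$ by rule \textsc{T-AssVar} and forbid the corresponding aliasing between $x$ and other identifiers appearing in $e$ (Section~\ref{sec:argument-aliasing}); the inverse operator then suffices, since $\llbracket \mathcal{I}(\odot) \rrbracket(\llbracket \odot \rrbracket(a, v), v) = a$ for each $\odot \in \{\textbf{\texttt{+}}, \textbf{\texttt{-}}, \textbf{\texttt{\textasciicircum}}\}$. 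For \textsc{IfTrue} and \textsc{IfFalse}, expression evaluation has no side effects on the store, so the entry and exit expressions evaluate identically when the inverted branch is re-executed from $\mu'$, and the inductive hypothesis on the taken branch completes the case.

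The method-invocation and object-block cases follow the same pattern. The semantic rules \textsc{Uncall} and \textsc{UncallObj} are already defined in terms of their \textbf{call} counterparts run in reverse, so the inverter's swap of \textbf{call} and \textbf{uncall} directly matches the operational semantics, and the inductive hypothesis on the method body handles the remainder. For \textsc{ObjBlock}, the fresh locations $l'$, $r$, $a_1, \ldots, a_n$ chosen in the forward derivation can be reused in the inverse derivation; the post-condition that each $a_i$ is $0$ in $\mu''$ mirrors the pre-condition that each $a_i$ is $0$ in $\mu'$, which makes the allocation and reclamation steps symmetric, and the inductive hypothesis applied to the block body supplies the inner execution.

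The principal obstacle is the loop case, since \textsc{LoopMain} delegates to the auxiliary judgment $\vdash_{loop}^\Gamma$ with both a base case \textsc{LoopBase} and a recursive case \textsc{LoopRec}. I would establish a companion lemma by induction on the height of the $\vdash_{loop}^\Gamma$ derivation, stating that $\langle l, \gamma \rangle \vdash_{loop}^\Gamma (e_1, s_1, s_2, e_2) : \mu \rightleftharpoons \mu''$ implies the corresponding inverted judgment with $s_1$ and $s_2$ replaced by their $\mathcal{I}$-images and with the stores swapped, and then combine this result with the initial $s_1$ step provided by \textsc{LoopMain}. The delicate point is to track the interleaving of the assertion checks on $e_1$ and the condition checks on $e_2$ at each level so that the entry and exit tests of the inverted loop line up with those of the original execution; verifying this bookkeeping, together with the absence of side effects in expression evaluation, is where the most care is required.
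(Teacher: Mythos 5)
Your overall strategy coincides with the one the paper announces and then omits: reduce the biconditional to the forward implication via the involutivity of $\mathcal{I}$, then induct on the derivation of $\mathcal{S}$, with a companion lemma for the auxiliary judgment $\vdash_{loop}^\Gamma$. Most cases are handled correctly, and your observation that the \textsc{AssVar} case silently relies on well-typedness (to guarantee $x \notin \mathrm{vars}(e)$ and the absence of aliasing) identifies a hypothesis the theorem statement leaves implicit.

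There is, however, a genuine gap in the conditional and loop cases. You justify \textsc{IfTrue}/\textsc{IfFalse} by saying that, since expression evaluation is pure, the entry and exit expressions ``evaluate identically'' when the inverted statement is run from $\mu'$. But the entry test of the inverted conditional is evaluated in $\mu'$, whereas in the forward derivation $e_1$ was evaluated in $\mu$; since $s_1$ may modify exactly the variables occurring in $e_1$, purity buys you nothing. With the inverter exactly as defined in Figure~\ref{fig:roopl-stmt-inverter} (which keeps $e_1$ as the entry test and $e_2$ as the exit assertion of the inverted conditional), the case fails outright: for $\mu$ with $\mu(\gamma(x)) = 0$ and $s = \textbf{if}\ x\ \textbf{\texttt{=}}\ 0\ \textbf{then}\ x\ \textbf{\texttt{+=}}\ 1\ \textbf{else}\ \textbf{skip}\ \textbf{fi}\ x\ \textbf{\texttt{=}}\ 1$, the judgment $\mathcal{S}$ holds with $\mu' = \mu[\gamma(x) \mapsto 1]$, yet $\mathcal{I}\llbracket s \rrbracket$ run from $\mu'$ selects the \textbf{else}-branch and its exit assertion fails, so $\mathcal{S}'$ has no derivation. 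The case (and the theorem) only goes through if the inverter swaps the entry and exit expressions, i.e.\ $\mathcal{I}\llbracket \textbf{if}\ e_1\ \textbf{then}\ s_1\ \textbf{else}\ s_2\ \textbf{fi}\ e_2\rrbracket = \textbf{if}\ e_2\ \textbf{then}\ \mathcal{I}\llbracket s_1\rrbracket\ \textbf{else}\ \mathcal{I}\llbracket s_2\rrbracket\ \textbf{fi}\ e_1$, and likewise for loops, as in the standard Janus inverter; with that definition the premises needed for $\mathcal{S}'$ are precisely the facts $e_2\ \centernot\Rightarrow\ 0$ in $\mu'$ and $e_1\ \centernot\Rightarrow\ 0$ in $\mu$ already supplied by the forward derivation, plus the inductive hypothesis on the taken branch. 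You should either adopt the swapped inverter explicitly or flag the discrepancy with the paper's figure; the bookkeeping in your loop lemma needs the same correction.
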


\noindent \textit{Proof.} The proof is by structural induction on the semantic derivation of $\mathcal{S}$ but is omitted. It suffices to show that $\mathcal{S}$ implies $\mathcal{S}'$ - since this can also serve as proof that $\mathcal{S}'$ implies $\mathcal{S}$ because $\mathcal{I}$ is an involution.

\subsection{Type-Safe Statement Inversion}

When given a well-typed statement, the statement inverter $\mathcal{I}$ should always produce a well-typed (inverse) statement. This is an important property of the language as it prevents situations where some method can be \textit{called} successfully, but \textit{uncalling} the same method produces an error or undefined behaviour. The following theorem expresses this property:

\begin{theorem}\label{thm:type-inversion}(Inversion of well-typed statements)
\begin{equation*}
    \overbrace{\langle \Pi,\ c\rangle\ \vdash_{stmt}^\Gamma\ s}^{\mathcal{T}}\ \implies\ \overbrace{\langle \Pi,\ c\rangle\ \vdash_{stmt}^\Gamma\ \mathcal{I}\llbracket s\rrbracket}^{\mathcal{T}'}
\end{equation*}
\end{theorem}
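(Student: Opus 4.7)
The plan is to proceed by structural induction on the typing derivation $\mathcal{T}$, matching each clause of the statement inverter $\mathcal{I}$ from Figure~\ref{fig:roopl-stmt-inverter} against the typing rules from Figure~\ref{fig:statement-typing}. In each case one inspects the final rule of $\mathcal{T}$, applies the induction hypothesis to any well-typed sub-statement, and reassembles a derivation of $\mathcal{T}'$ using the corresponding typing rule. Since $\mathcal{I}$ is an involution on statements, proving the forward implication automatically yields the converse as well.

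The base cases are immediate: \textbf{skip} and $x_1\ \texttt{<=>}\ x_2$ are self-inverse, so $\mathcal{T}' = \mathcal{T}$. For a modification assignment $x\ \odot\texttt{=}\ e$ the inverter only rewrites $\odot$ (e.g.\ $\texttt{+=}$ into $\texttt{-=}$, $\texttt{\^{}=}$ into itself); rule \textsc{T-AssVar} constrains only $\Pi(x) = \textbf{int}$, $\Pi \vdash_{expr} e : \textbf{int}$ and $x \notin \mathrm{vars}(e)$, none of which mentions $\odot$, so the same rule re-types the inverted assignment.

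For the compound constructs the inverter merely descends into sub-statements and, in the case of sequences, reorders them. Taking the conditional as representative: the induction hypothesis applied to the premises of \textsc{T-If} gives well-typedness of $\mathcal{I}\llbracket s_1\rrbracket$ and $\mathcal{I}\llbracket s_2\rrbracket$, the two expression premises on $e_1$ and $e_2$ carry over unchanged since the inverter touches neither sub-expression, and re-applying \textsc{T-If} closes the case. The loop case is identical in form. The statement-sequence case uses the IH on both sub-derivations and the fact that \textsc{T-Seq} is symmetric in its two conjuncts. For the object block, the environment $\Pi[x\mapsto c']$ in which $s$ was checked is precisely the environment in which $\mathcal{I}\llbracket s\rrbracket$ must now be checked, so the IH applies directly and \textsc{T-ObjBlock} reassembles the result.

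The four invocation cases require one small observation about derivation inversion. For $\mathcal{I}\llbracket \textbf{call}\ q(\ldots)\rrbracket = \textbf{uncall}\ q(\ldots)$, a single application of \textsc{T-UC} to the hypothesis yields the conclusion; \textsc{T-UCO} handles the remote variant. For the opposite orientation, $\mathcal{I}\llbracket \textbf{uncall}\ q(\ldots)\rrbracket = \textbf{call}\ q(\ldots)$, we must recover well-typedness of the underlying call from that of the uncall. Since \textsc{T-UC} is the \emph{only} typing rule whose conclusion matches $\textbf{uncall}\ q(\ldots)$, and its unique premise is exactly $\langle\Pi,c\rangle\ \vdash_{stmt}^\Gamma\ \textbf{call}\ q(\ldots)$, we may extract this premise from the given derivation and use it as the new conclusion; \textsc{T-UCO} is treated the same way. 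This derivation-inversion step is the only mildly subtle point in the proof. Everything else is a mechanical match between the clauses of $\mathcal{I}$ and the typing rules, so no case presents a genuine obstacle.
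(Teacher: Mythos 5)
Your proposal is correct and follows essentially the same route as the paper: structural induction on the typing derivation, with the self-inverse and compound cases handled by reusing premises and the induction hypothesis, and the call/uncall cases handled by wrapping the given derivation in \textsc{T-UC}/\textsc{T-UCO} or, in the reverse direction, extracting the unique premise of the \textsc{T-UC}/\textsc{T-UCO} derivation. The observations you flag as the only subtle points (the rule \textsc{T-AssVar} being insensitive to $\odot$, and the derivation-inversion step for uncall) are exactly the ones the paper relies on.
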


\begin{proof}
By structural induction on $\mathcal{T}$:

\vspace{5mm}

\noindent \textbf{Case} {\fontsize{10pt}{12pt}\selectfont $\mathcal{T}\ =\ \dfrac{\overbrace{x\ \notin\ \mathrm{vars}(e)}^{\mathcal{C}_1} \qquad \overbrace{\Pi\ \vdash_{expr}\ e\ :\ \textbf{int}}^{\mathcal{E}} \qquad \overbrace{\Pi(x)\ =\ \textbf{int}}^{\mathcal{C}_2}}{\langle\Pi,\ c\rangle\ \vdash_{stmt}^\Gamma\ x\ \odot\textbf{\texttt{=}}\ e}\ \textsc{AssVar}$}

\vspace{3mm}

\begin{case}
In this case, $\mathcal{I}\llbracket x\ \odot\textbf{\texttt{=}}\ e \rrbracket\ =\ x\ \odot'\textbf{\texttt{=}}\ e$ for some $\odot'$, so $\mathcal{T}'$ will also be a derivation of rule $\textsc{AssVar}$. Therefore we can just reuse the expression derivation $\mathcal{E}$ and the conditions $\mathcal{C}_1$ and $\mathcal{C}_2$ to construct $\mathcal{T}'$:

\vspace{3mm}

{\fontsize{10pt}{12pt}\selectfont $\mathcal{T}'\ =\ \dfrac{\overbrace{x\ \notin\ \mathrm{vars}(e)}^{\mathcal{C}_1} \qquad \overbrace{\Pi\ \vdash_{expr}\ e\ :\ \textbf{int}}^{\mathcal{E}} \qquad \overbrace{\Pi(x)\ =\ \textbf{int}}^{\mathcal{C}_2}}{\langle\Pi,\ c\rangle\ \vdash_{stmt}^\Gamma\ x\ \odot'\textbf{\texttt{=}}\ e}$}
\end{case}

\vspace{5mm}

\noindent \textbf{Case} {\fontsize{10pt}{12pt}\selectfont $\mathcal{T}\ =\ \dfrac{\Pi(x_1)\ =\ \Pi(x_2)}{\langle\Pi,\ c\rangle\ \vdash_{stmt}^\Gamma\ x_1\ \textbf{\texttt{<=>}}\ x_2}\ \textsc{T-SwpVar}$}

\vspace{3mm}

\begin{case}
Since $\mathcal{I}\llbracket x_1\ \textbf{\texttt{<=>}}\ x_2 \rrbracket\ =\ x_1\ \textbf{\texttt{<=>}}\ x_2$, we can just use the derivation of $\mathcal{T}$ for $\mathcal{T}'$:

\vspace{3mm}

{\fontsize{10pt}{12pt}\selectfont $\mathcal{T}'\ =\ \dfrac{\Pi(x_1)\ =\ \Pi(x_2)}{\langle\Pi,\ c\rangle\ \vdash_{stmt}^\Gamma\ x_1\ \textbf{\texttt{<=>}}\ x_2}$}
\end{case}

\vspace{5mm}

\noindent \textbf{Case} {\fontsize{10pt}{12pt}\selectfont $\mathcal{T}\ =\ \dfrac{\overbrace{\Pi\ \vdash_{expr}\ e_1\ :\ \textbf{int}}^{\mathcal{E}_1}\ \quad \overbrace{\langle\Pi,\ c\rangle\ \vdash_{stmt}^\Gamma\ s_1}^{\mathcal{S}_1}\ \quad \overbrace{\langle\Pi,\ c\rangle\ \vdash_{stmt}^\Gamma\ s_2}^{\mathcal{S}_2}\ \quad \overbrace{\Pi\ \vdash_{expr}\ e_2\ :\ \textbf{int}}^{\mathcal{E}_2}}{\langle\Pi,\ c\rangle\ \vdash_{stmt}^\Gamma\ \textbf{if}\ e_1\ \textbf{then}\ s_1\ \textbf{else}\ s_2\ \textbf{fi}\ e_2}\ \textsc{T-If}$}

\vspace{3mm}

\begin{case}
We have: \quad $\mathcal{I}\ \llbracket \textbf{if}\ e_1\ \textbf{then}\ s_1\ \textbf{else}\ s_2\ \textbf{fi}\ e_2 \rrbracket\ =\ \textbf{if}\ e_1\ \textbf{then}\ \mathcal{I}\llbracket s_1 \rrbracket\ \textbf{else}\ \mathcal{I}\llbracket s_2 \rrbracket\ \textbf{fi}\ e_2$

\vspace{3mm}

By the induction hypothesis on $\mathcal{S}_1$ we get: \quad $\mathcal{S}'_1\ =\ \langle\Pi,\ c\rangle\ \vdash_{stmt}^\Gamma\ \mathcal{I}\llbracket s_1\rrbracket$

\vspace{3mm}

By the induction hypothesis on $\mathcal{S}_2$ we get: \quad $\mathcal{S}'_2\ =\ \langle\Pi,\ c\rangle\ \vdash_{stmt}^\Gamma\ \mathcal{I}\llbracket s_2\rrbracket$

\vspace{3mm}

Using $\mathcal{E}_1$, $\mathcal{S}'_1$, $\mathcal{S}'_2$ and $\mathcal{E}_2$ we can construct $\mathcal{T}'$:

\vspace{3mm}

{\fontsize{10pt}{12pt}\selectfont $\mathcal{T}'\ =\ \dfrac{\overbrace{\Pi\ \vdash_{expr}\ e_1\ :\ \textbf{int}}^{\mathcal{E}_1}\ \quad \overbrace{\langle\Pi,\ c\rangle\ \vdash_{stmt}^\Gamma\ \mathcal{I}\llbracket s_1 \rrbracket}^{\mathcal{S}'_1}\ \quad \overbrace{\langle\Pi,\ c\rangle\ \vdash_{stmt}^\Gamma\ \mathcal{I}\llbracket s_2 \rrbracket}^{\mathcal{S}'_2}\ \quad \overbrace{\Pi\ \vdash_{expr}\ e_2\ :\ \textbf{int}}^{\mathcal{E}_2}}{\langle\Pi,\ c\rangle\ \vdash_{stmt}^\Gamma\ \textbf{if}\ e_1\ \textbf{then}\ \mathcal{I}\llbracket s_1 \rrbracket\ \textbf{else}\ \mathcal{I}\llbracket s_2 \rrbracket\ \textbf{fi}\ e_2}$}
\end{case}

\vspace{5mm}

\noindent \textbf{Case} {\fontsize{10pt}{12pt}\selectfont $\mathcal{T}\ =\ \dfrac{\overbrace{\Pi\ \vdash_{expr}\ e_1\ :\ \textbf{int}}^{\mathcal{E}_1}\ \quad \overbrace{\langle\Pi,\ c\rangle\ \vdash_{stmt}^\Gamma\ s_1}^{\mathcal{S}_1}\ \quad \overbrace{\langle\Pi,\ c\rangle\ \vdash_{stmt}^\Gamma\ s_2}^{\mathcal{S}_2}\ \quad \overbrace{\Pi\ \vdash_{expr}\ e_2\ :\ \textbf{int}}^{\mathcal{E}_2}}{\langle\Pi,\ c\rangle\ \vdash_{stmt}^\Gamma\ \textbf{from}\ e_1\ \textbf{do}\ s_1\ \textbf{loop}\ s_2\ \textbf{until}\ e_2}\ \textsc{T-Loop}$}

\vspace{3mm}

\begin{case}
We have: \quad $\mathcal{I}\ \llbracket \textbf{from}\ e_1\ \textbf{do}\ s_1\ \textbf{loop}\ s_2\ \textbf{until}\ e_2 \rrbracket\ =\ \textbf{from}\ e_1\ \textbf{do}\ \mathcal{I}\llbracket s_1 \rrbracket\ \textbf{loop}\ \mathcal{I}\llbracket s_2 \rrbracket\ \textbf{until}\ e_2$

\vspace{3mm}

By the induction hypothesis on $\mathcal{S}_1$ we get: \quad $\mathcal{S}'_1\ =\ \langle\Pi,\ c\rangle\ \vdash_{stmt}^\Gamma\ \mathcal{I}\llbracket s_1\rrbracket$

\vspace{3mm}

By the induction hypothesis on $\mathcal{S}_2$ we get: \quad $\mathcal{S}'_2\ =\ \langle\Pi,\ c\rangle\ \vdash_{stmt}^\Gamma\ \mathcal{I}\llbracket s_2\rrbracket$

\vspace{3mm}

Using $\mathcal{E}_1$, $\mathcal{S}'_1$, $\mathcal{S}'_2$ and $\mathcal{E}_2$ we can construct $\mathcal{T}'$:

\vspace{3mm}

{\fontsize{10pt}{12pt}\selectfont $\mathcal{T}'\ =\ \dfrac{\overbrace{\Pi\ \vdash_{expr}\ e_1\ :\ \textbf{int}}^{\mathcal{E}_1}\ \quad \overbrace{\langle\Pi,\ c\rangle\ \vdash_{stmt}^\Gamma\ \mathcal{I}\llbracket s_1 \rrbracket}^{\mathcal{S}'_1}\ \quad \overbrace{\langle\Pi,\ c\rangle\ \vdash_{stmt}^\Gamma\ \mathcal{I}\llbracket s_2 \rrbracket}^{\mathcal{S}'_2}\ \quad \overbrace{\Pi\ \vdash_{expr}\ e_2\ :\ \textbf{int}}^{\mathcal{E}_2}}{\langle\Pi,\ c\rangle\ \vdash_{stmt}^\Gamma\ \textbf{from}\ e_1\ \textbf{do}\ \mathcal{I}\llbracket s_1 \rrbracket\ \textbf{loop}\ \mathcal{I}\llbracket s_2 \rrbracket\ \textbf{until}\ e_2}$}
\end{case}

\vspace{5mm}

\noindent \textbf{Case} {\fontsize{10pt}{12pt}\selectfont $\mathcal{T}\ =\ \dfrac{\overbrace{\langle\Pi,\ c\rangle\ \vdash_{stmt}^\Gamma\ s_1}^{\mathcal{S}_1} \qquad \overbrace{\langle\Pi,\ c\rangle\ \vdash_{stmt}^\Gamma\ s_2}^{\mathcal{S}_2}}{\langle\Pi,\ c\rangle\ \vdash_{stmt}^\Gamma\ s_1\ s_2}\ \textsc{T-Seq}$}

\vspace{3mm}

\begin{case}
We have: \quad $\mathcal{I}\ \llbracket s_1\ s_2 \rrbracket\ =\ \mathcal{I}\llbracket s_2 \rrbracket \enspace \mathcal{I}\llbracket s_1 \rrbracket$

\vspace{3mm}

By the induction hypothesis on $\mathcal{S}_1$ we get: \quad $\mathcal{S}'_1\ =\ \langle\Pi,\ c\rangle\ \vdash_{stmt}^\Gamma\ \mathcal{I}\llbracket s_1\rrbracket$

\vspace{3mm}

By the induction hypothesis on $\mathcal{S}_2$ we get: \quad $\mathcal{S}'_2\ =\ \langle\Pi,\ c\rangle\ \vdash_{stmt}^\Gamma\ \mathcal{I}\llbracket s_2\rrbracket$

\vspace{3mm}

Using $\mathcal{S}'_1$ and $\mathcal{S}'_2$ we can construct $\mathcal{T}'$:

\vspace{3mm}

{\fontsize{10pt}{12pt}\selectfont $\mathcal{T}'\ =\ \dfrac{\overbrace{\langle\Pi,\ c\rangle\ \vdash_{stmt}^\Gamma\ \mathcal{I}\llbracket s_2\rrbracket}^{\mathcal{S}'_2} \qquad \overbrace{\langle\Pi,\ c\rangle\ \vdash_{stmt}^\Gamma\ \mathcal{I}\llbracket s_1\rrbracket}^{\mathcal{S}'_1}}{\langle\Pi,\ c\rangle\ \vdash_{stmt}^\Gamma\ \mathcal{I}\llbracket s_2 \rrbracket \enspace \mathcal{I}\llbracket s_1 \rrbracket}$}
\end{case}

\vspace{5mm}

\noindent \textbf{Case} {\fontsize{10pt}{12pt}\selectfont $\mathcal{T}\ =\ \dfrac{}{\langle\Pi,\ c\rangle\ \vdash_{stmt}^\Gamma\ \textbf{skip}}\ \textsc{T-Skip}$}

\vspace{3mm}

\begin{case}
Since $\mathcal{I}\ \llbracket \textbf{skip} \rrbracket\ =\ \textbf{skip}$, and $\textsc{T-Skip}$ is axiomatic, we can choose $\mathcal{T}$ as:

\vspace{3mm}

{\fontsize{10pt}{12pt}\selectfont $\mathcal{T}'\ =\ \dfrac{}{\langle\Pi,\ c\rangle\ \vdash_{stmt}^\Gamma\ \textbf{skip}}$}
\end{case}

\vspace{5mm}

\noindent \textbf{Case} {\fontsize{10pt}{12pt}\selectfont $\mathcal{T}\ =\ \dfrac{\overbrace{\langle\Pi[x\ \mapsto\ c'],\ c\rangle\ \vdash_{stmt}^\Gamma\ s}^{\mathcal{S}}}{\langle\Pi,\ c\rangle\ \vdash_{stmt}^\Gamma\ \textbf{construct}\ c'\ x \quad s \quad \textbf{destruct}\ x}\ \textsc{T-ObjBlock}$}

\vspace{3mm}

\begin{case}
We have: \quad $\mathcal{I}\ \left\llbracket\textbf{construct}\ c\ x\quad s\quad \textbf{destruct}\ x\right\rrbracket\ =\ \textbf{construct}\ c\ x\quad\mathcal{I}\llbracket s \rrbracket\quad\textbf{destruct}\ x$

\vspace{3mm}

By the induction hypothesis on $\mathcal{S}$ we get: \quad $\mathcal{S}'\ =\ \langle\Pi[x\ \mapsto\ c'],\ c\rangle\ \vdash_{stmt}^\Gamma\ \mathcal{I}\llbracket s\rrbracket$

\vspace{3mm}

Which we can use to construct $\mathcal{T}'$:

\vspace{3mm}

{\fontsize{10pt}{12pt}\selectfont $\mathcal{T}'\ =\ \dfrac{\overbrace{\langle\Pi[x\ \mapsto\ c'],\ c\rangle\ \vdash_{stmt}^\Gamma\ \mathcal{I}\llbracket s\rrbracket}^{\mathcal{S}'}}{\langle\Pi,\ c\rangle\ \vdash_{stmt}^\Gamma\ \textbf{construct}\ c'\ x \quad \mathcal{I}\llbracket s\rrbracket \quad \textbf{destruct}\ x}$}
\end{case}

\vspace{5mm}

\noindent \textbf{Case} {\fontsize{10pt}{12pt}\selectfont $\mathcal{T}\ =\ \dfrac{\cdots}{\langle\Pi,\ c\rangle\ \vdash_{stmt}^\Gamma\ \textbf{call}\ q\textbf{\texttt{(}}x_1,\ \dots,\ x_n\textbf{\texttt{)}}}\ \textsc{T-Call}$}

\vspace{3mm}

\begin{case}
We have: \quad $\mathcal{I}\ \llbracket \textbf{call}\ q\textbf{\texttt{(}}\dots\textbf{\texttt{)}} \rrbracket\ =\ \textbf{uncall}\ q\textbf{\texttt{(}}\dots\textbf{\texttt{)}}$

\vspace{3mm}

Which means $\mathcal{T}'$ must be of the form:

\vspace{3mm}

{\fontsize{10pt}{12pt}\selectfont $\mathcal{T}'\ =\ \dfrac{\overbrace{\langle\Pi,\ c\rangle\ \vdash_{stmt}^\Gamma\ \textbf{call}\ q\textbf{\texttt{(}}x_1,\ \dots,\ x_n\textbf{\texttt{)}}}^{\mathcal{S}}}{\langle\Pi,\ c\rangle\ \vdash_{stmt}^\Gamma\ \textbf{uncall}\ q\textbf{\texttt{(}}x_1,\ \dots,\ x_n\textbf{\texttt{)}}}$}

\vspace{3mm}

Where we can simply use the derivation of $\mathcal{T}$ in place of $\mathcal{S}$.
\end{case}

\vspace{5mm}

\noindent \textbf{Case} {\fontsize{10pt}{12pt}\selectfont $\mathcal{T}\ =\ \dfrac{\cdots}{\langle\Pi,\ c\rangle\ \vdash_{stmt}^\Gamma\ \textbf{call}\ x_0\textbf{\texttt{::}}q\textbf{\texttt{(}}x_1,\ \dots,\ x_n\textbf{\texttt{)}}}\ \textsc{T-CallO}$}

\vspace{3mm}

\begin{case}
We have: \quad $\mathcal{I}\ \llbracket \textbf{call}\ x\textbf{\texttt{::}}q\textbf{\texttt{(}}\dots\textbf{\texttt{)}} \rrbracket\ =\ \textbf{uncall}\ x\textbf{\texttt{::}}q\textbf{\texttt{(}}\dots\textbf{\texttt{)}}$

\vspace{3mm}

Which means $\mathcal{T}'$ must be of the form:

\vspace{3mm}

{\fontsize{10pt}{12pt}\selectfont $\mathcal{T}'\ =\ \dfrac{\overbrace{\langle\Pi,\ c\rangle\ \vdash_{stmt}^\Gamma\ \textbf{call}\ x_0\textbf{\texttt{::}}q\textbf{\texttt{(}}x_1,\ \dots,\ x_n\textbf{\texttt{)}}}^{\mathcal{S}}}{\langle\Pi,\ c\rangle\ \vdash_{stmt}^\Gamma\ \textbf{uncall}\ x_0\textbf{\texttt{::}}q\textbf{\texttt{(}}x_1,\ \dots,\ x_n\textbf{\texttt{)}}}$}

\vspace{3mm}

Where we can simply use the derivation of $\mathcal{T}$ in place of $\mathcal{S}$.
\end{case}

\vspace{5mm}

\noindent \textbf{Case} {\fontsize{10pt}{12pt}\selectfont $\mathcal{T}\ =\ \dfrac{\overbrace{\langle\Pi,\ c\rangle\ \vdash_{stmt}^\Gamma\ \textbf{call}\ q\textbf{\texttt{(}}x_1,\ \dots,\ x_n\textbf{\texttt{)}}}^{\mathcal{S}}}{\langle\Pi,\ c\rangle\ \vdash_{stmt}^\Gamma\ \textbf{uncall}\ q\textbf{\texttt{(}}x_1,\ \dots,\ x_n\textbf{\texttt{)}}}\ \textsc{T-UC}$}

\vspace{3mm}

\begin{case}
We have: \quad $\mathcal{I}\ \llbracket \textbf{uncall}\ q\textbf{\texttt{(}}\dots\textbf{\texttt{)}} \rrbracket\ =\ \textbf{call}\ q\textbf{\texttt{(}}\dots\textbf{\texttt{)}}$

\vspace{3mm}

Which means we can just use the derivation $\mathcal{S}$ as $\mathcal{T}'$.
\end{case}

\vspace{3mm}

\noindent \textbf{Case} {\fontsize{10pt}{12pt}\selectfont $\mathcal{T}\ =\ \dfrac{\overbrace{\langle\Pi,\ c\rangle\ \vdash_{stmt}^\Gamma\ \textbf{call}\ x_0\textbf{\texttt{::}}q\textbf{\texttt{(}}x_1,\ \dots,\ x_n\textbf{\texttt{)}}}^{\mathcal{S}}}{\langle\Pi,\ c\rangle\ \vdash_{stmt}^\Gamma\ \textbf{uncall}\ x_0\textbf{\texttt{::}}q\textbf{\texttt{(}}x_1,\ \dots,\ x_n\textbf{\texttt{)}}}\ \textsc{T-UCO}$}

\vspace{3mm}

\begin{case}
We have: \quad $\mathcal{I}\ \llbracket \textbf{uncall}\ x\textbf{\texttt{::}}q\textbf{\texttt{(}}\dots\textbf{\texttt{)}} \rrbracket\ =\ \textbf{call}\ x\textbf{\texttt{::}}q\textbf{\texttt{(}}\dots\textbf{\texttt{)}}$

\vspace{3mm}

Which means we can just use the derivation $\mathcal{S}$ as $\mathcal{T}'$.
\end{case}
\end{proof}

\noindent Using Theorem~\ref{thm:type-inversion}, we can show that well-typedness is also preserved over inversion of methods. By type rule \textsc{T-Method} (See Figure~\ref{fig:program-typing}, page~\pageref{fig:program-typing}), we see that a method is well-typed iff its body is well-typed.

The class inverter $\mathcal{I}_c$ (See Figure~\ref{fig:roopl-prog-inverter}) defines the inverse of a method $q$ with body $s$, as the same method with the body $\mathcal{I}\llbracket s \rrbracket$. By Theorem~\ref{thm:type-inversion}, we know that if $s$ is well-typed, then so is $\mathcal{I}\llbracket s \rrbracket$ - by extension, if $q$ is well-typed then so is the inverse of $q$.

By the definition of the class inverter and the program inverter, it is clear that this result also extends to inversion of classes and inversion of programs.

\section{Language Extensions}
\label{sec:sugar}

The language extensions introduced in this section are not part of the core language, but are used in the ROOPL programs we present in subsequent sections and chapters.

\subsection{Local Variables}
\label{subsec:local-blocks}

Due to the restriction prohibiting member variables being passed to methods of the same object, it is sometimes necessary to create proxy objects or needlessly complicated structures to achieve relatively simple tasks. The restriction only serves to avoid aliasing situations, so we can make the life of a ROOPL programmer easier by adding the \textbf{local}/\textbf{delocal} blocks from Janus to ROOPL:

\begin{equation*}
    \textbf{local int}\ x\ =\ e_1\quad s\quad\textbf{delocal}\ x\ =\ e_2
\end{equation*}

Unlike in Janus, only integers can be allocated this way. If $x$ is already in scope at the time this statement occurs, the new $x$ shadows the definition of the existing $x$, just as is the case for object blocks. The semantics of this statement were already covered in~\cite{ty:ejanus} and do not differ in any noticeable way in ROOPL.

\subsection{Class Constructors and Deconstructors}

In OOP, a \textit{class invariant} is a constraint placed on the internal state of an object. Consider a \textit{Date} class representing a specific day of the year, with member variables denoting the day of the month and the month of the year as integers. An obvious invariant for this class is that the day of the month should always be between $1$ and $31$ inclusively and the month should always be between $1$ and $12$ inclusively. Class invariants are an instance of \textit{contract programming}\footnote{Popularized by languages such as \textit{Eiffel} and \textit{D}, which both include support for automatically verifying class invariants at runtime.} that is especially relevant for OOP, where we wish to hide the internal constraints of a class behind the public interface.

In ROOPL, all newly created objects are always zero-initialized, which is directly at odds with the notion of class invariants. In our example, this means that all \textit{Date} objects start out representing day $0$ of month $0$ which is outside of our established invariant and inconsistent with the rules of the system we are modelling. If we, instead, allow the programmer to specify how an object should be initialized, we can make sure that class invariants are enforced throughout an objects' lifetime.

\begin{figure}[ht]
\centering

\begin{equation*}
    \begin{aligned}
        &\textbf{construct}\ c\ x\textbf{\texttt{(}}x_1,\ \dots,\ x_n\textbf{\texttt{)}}\\
        &s\\
        &\textbf{destruct}\ x\textbf{\texttt{(}}z_1,\ \dots,\ z_n\textbf{\texttt{)}}
    \end{aligned}
    \qquad \overset{\textbf{def}}{\scalebox{1.8}{=}} \qquad
    \begin{aligned}
        &\textbf{construct}\ c\ x\\
        &\textbf{call}\ x\textbf{\texttt{::}}\text{constructor}\textbf{\texttt{(}}x_1,\ \dots,\ x_n\textbf{\texttt{)}}\\
        &s\\
        &\textbf{uncall}\ x\textbf{\texttt{::}}\text{constructor}\textbf{\texttt{(}}z_1,\ \dots,\ z_n\textbf{\texttt{)}}\\
        &\textbf{destruct}\ x
    \end{aligned}
\end{equation*}

\caption{Class constructor/deconstructor extension}
\label{fig:constructor-sugar}
\end{figure}

Figure~\ref{fig:constructor-sugar} shows a new form of the \textbf{construct}/\textbf{destruct} statement, which automatically invokes the special method \textit{constructor} when a new object is created, establishing the class invariants of the object. After the block statement is executed, the constructor is then automatically uncalled (we call this the \textit{deconstructor} call) before the object is then finally deallocated. The purpose of the deconstructor is to uncompute the state accumulated within the object by the constructor (and possibly by other method invocations within $s$).

Ideally the compiler should be able to enforce that the default constructor (which zero-initializes the object) is only ever invoked when the class in question does not specify its own constructor. The proposed implementation only shows how to implement class constructors/deconstructors in terms of the core language.

\begin{figure}[ht]
\centering
\lstinputlisting[style = basic, language = roopl]{examples/date.rpl}
\caption{ROOPL class representing a calendar date}
\label{fig:date-program}
\end{figure}

Note that there is no requirement that the constructor and deconstructor are given the same arguments. The only requirements are that the class invariants are established after the constructor call and that the internal state of the object is zero-cleared after the deconstructor call. Figure~\ref{fig:date-program} shows how an implementation of a simplified \textit{Date} class might look in ROOPL, with accessors and constructor/deconstructor method included.

\subsection{Expression Arguments}

Like in both Janus and R, we permit expressions to be used as arguments to a method provided the method does not directly alter the value of the parameter in any way. If the value of the expression parameter is altered by the callee, the meaning of the call is undefined.

\begin{figure}[ht]
\centering

\begin{equation*}
    \begin{aligned}
        &\textbf{call}\ q\textbf{\texttt{(}}\ \dots,\ e,\ \dots\ \textbf{\texttt{)}}\\
    \end{aligned}
    \qquad \overset{\textbf{def}}{\scalebox{1.8}{=}} \qquad
    \begin{aligned}
        &\textbf{local int}\ x'\ = e\\
        &\textbf{call}\ q\textbf{\texttt{(}}\ \dots,\ x',\ \dots\ \textbf{\texttt{)}}\\
        &\textbf{delocal}\ x'\ = e\\
    \end{aligned}
\end{equation*}

\caption{Language extension for expressions as method arguments}
\label{fig:expression-sugar}
\end{figure}

\subsection{Method Reversal}

Because arguments are passed by reference, a method invocation can bring about changes to many or all of the argument variables in the caller. On top of this, ROOPL methods are impure and can result in alterations being made to the internal state of one or more objects.

\begin{figure}[ht]
\centering

\begin{equation*}
    \begin{aligned}
        &\textbf{reversal}\ q\textbf{\texttt{(}}x_1,\ x_2\textbf{\texttt{)}}\ s\\
    \end{aligned}
    \qquad \overset{\textbf{def}}{\scalebox{1.8}{=}} \qquad
    \begin{aligned}
        &\textbf{call}\ q\textbf{\texttt{(}}x_1,\ x_2\textbf{\texttt{)}}\\
        &s\\
        &\textbf{uncall}\ q\textbf{\texttt{(}}x_1,\ x_2\textbf{\texttt{)}}
    \end{aligned}
\end{equation*}

\caption{Language extension for single-statement method reversals}
\label{fig:method-reversal}
\end{figure}

A common pattern for reversibly dealing with side effects and extra data is to sandwich the statement block handling the result between a call and an uncall of the method in question. This allows the programmer to copy the result or utilize it in some computation without worrying about the subsequent clean up. Figure~\ref{fig:method-reversal} shows a language extension that conveniently reduces this pattern to a single statement.

\subsection{Short Form Control Flow}

For the sake of convenience, we introduce short forms for conditionals and loops.

\begin{figure}[ht]
\centering

\begin{align*}
    \textbf{if}\ e_1\ \textbf{then}\ s\ \textbf{fi}\ e_2 \qquad &\overset{\textbf{def}}{\scalebox{1.8}{=}} \qquad \textbf{if}\ e_1\ \textbf{then}\ s\ \textbf{else skip fi}\ e_2\\
    \textbf{from}\ e_1\ \textbf{do}\ s\ \textbf{until}\ e_2 \qquad &\overset{\textbf{def}}{\scalebox{1.8}{=}} \qquad \textbf{from}\ e_1\ \textbf{do}\ s\ \textbf{loop skip until}\ e_2\\
    \textbf{from}\ e_1\ \textbf{loop}\ s\ \textbf{until}\ e_2 \qquad &\overset{\textbf{def}}{\scalebox{1.8}{=}} \qquad \textbf{from}\ e_1\ \textbf{do skip}\ \textbf{loop}\ s\ \textbf{until}\ e_2
\end{align*}

\caption{Syntactic sugar for short form conditionals and loops}
\label{fig:short-sugar}
\end{figure}

\section{Language Idioms}
\label{sec:idioms}

Like in conventional programming languages, specific program patterns are used, in ROOPL, to express recurring tasks or constructs that are not built-in features of the language. Such programming idioms are discussed in the following sections.

\subsection{Zero-Cleared Copying}

Care must be taken when copying and clearing values in a reversible language. Copying the value of one variable to another can only be done reversibly if the destination variable is zero-cleared, otherwise the value of the destination variable must be overwritten, resulting in a loss of information. Likewise, clearing the value of some variable is only possible if the same value is stored elsewhere at the same point in time, also to prevent loss of information. In ROOPL, both copying and clearing can be achieved with an XOR-assignment:

\begin{equation*}
    x\ \textbf{\texttt{\textasciicircum=}}\ y
\end{equation*}

If $x = y$ before the above statement, then $x$ is zero-cleared. If $x = 0$ before the assignment, then the value of $y$ is copied into $x$. This technique was first described in~\cite{ty:janus}.

\subsection{Mutators and Accessors}

\begin{figure}[ht]
\centering

\begin{lstlisting}[style = basic, language = roopl]
class Object
    int data
    
    method get(int out)
        out ^= data
    
    method swap(int in)
        data <=> in
        
    method sub(int val)
        data -= val
    
    method add(int val)
        data += val
    
    method xor(int val)
        data ^= val
\end{lstlisting}

\caption{Basic mutator and accessor methods in ROOPL}
\label{fig:accessors}
\end{figure}

In accordance with the principle of encapsulation, the member variables of a ROOPL object are not directly accessible from outside the methods of that object. To facilitate access, we can implement special accessor and mutator methods (colloquially known as \textit{getters} and \textit{setters}).

The semantics of accessors and mutators are slightly different in a reversible language. In conventional OOP languages, a mutator will simply assign a new value to the member variable, overwriting the existing value. In ROOPL we are limited to \textit{reversible mutators}, exemplified by the methods \textit{swap}, \textit{sub}, \textit{add} and \textit{xor} in Figure~\ref{fig:accessors}.

The swap mutator works mostly like a conventional mutator, but rather than irreversibly overwriting the existing value, it places that value in the parameter, leaving the caller responsible for uncomputing or clearing it.

Since ROOPL does not support return values, we must supply the accessor method \textit{get} with an output parameter. Provided the argument variable is zero-cleared before invocation, the value of the member variable is copied into the argument and thereby made accessible to the caller, outside of the object.

\subsection{Abstract Methods}

An \textit{abstract method} is a method with only a method signature but no method body. If a class contains an abstract method, it cannot be instantiated. Instead a subclass can override the abstract method and provide a method body, in which case the subclass can be instantiated. Abstract methods are used as a way to define interfaces - the base class contains a number of abstract methods that all subclasses must implement.

\begin{figure}[ht]
\centering

\begin{lstlisting}[style = basic, language = roopl]
//Shape interface
class Shape
    method resize(int scale)
        skip //Abstract method

    method translate(int x, int y)
        skip //Abstract method

    method draw()
        skip //Abstract method
    
    method getArea(int out)
        skip //Abstract method
\end{lstlisting}

\caption{Example of an interface in ROOPL}
\label{fig:interface-program}
\end{figure}

ROOPL does not have any special facilities for supporting abstract methods (See Section~\ref{sec:object-model}) but we can simulate abstract methods and class interfaces by using the \textbf{skip} statement as a method body for the abstract methods of an interface. Figure~\ref{fig:interface-program} shows an example of a class interface defined in this manner.

\subsection{Call-Uncall}

A core tenet of modern software development is the DRY-principle~\cite{ah:pragmatic}, short for Don't Repeat Yourself. It holds that duplication in logic should be eliminated via abstraction, which usually entails using methods and procedures to facilitate code reuse in a program\footnote{In fact the DRY-principle also holds that duplication in process and testing should be eliminated by automation. In the absence of DRY, a software project is said to become WET (Write Everything Twice), which is generally considered a very error-prone approach to software development.}.

In a reversible language like ROOPL, however, every statement has two distinct meanings depending on the direction of execution and therefore twice as many possible applications for the programmer to consider. As such, the potential for code reuse in ROOPL programs is considerable - many common programming tasks have an equally common inverse (the canonical examples are the \textit{push} and \textit{pop} operations of a stack), but in ROOPL such inversions are free in terms of programming effort and code size.

Another idiomatic use of the uncall mechanism is the compute-copy-uncompute technique, which reversibly uncomputes intermediate values left over after a computation, retaining only the desired results.

\subsection{Linked Lists}
\label{sec:linked-lists}

While Janus included built-in support for arrays~\cite{ty:janus} and stacks~\cite{ty:ejanus}, ROOPL does not support any data structures or collections as language primitives\footnote{There is no inherent reason such language constructs could not be added to ROOPL, and they would likely improve the expressiveness of the language. However, they are not especially noteworthy nor interesting from an OOP perspective and were therefore not included.}. Using recursion and recursively defined data types, we can define a linked list in ROOPL even without built-in support for arrays or other types of collections.

\begin{subfigures}
\begin{figure}[!ht]
\centering

\begin{lstlisting}[style = basic, language = roopl]
class Node //Represents a single node in the list
    int data
    Node next //Reference to next node in the list

    //Constructor method
    method constructor(int d, Node n)
        data ^= d
        next <=> n

    //Accessor & mutator methods
    method add(int out)
        out += data

    method sub(int out)
        out -= data

    method xor(int out)
        out ^= data

    method swap(int out)
        out <=> data

    method swapNext(Node out)
        out <=> next
    
    method length(int out) //Finds the length of the list
        out += 1
        if next != nil then
            call next::length(out)
        fi next != nil

    method insert(int n, Node new) //Inserts a (single) new node in the list
        if n = 0 then
            next <=> new
        else
            if n = 1 then
                next <=> new
            fi n = 1
            
            if next != nil then
                n -= 1
                call next::insert(n, new)
                n += 1
            fi next != nil
        fi n = 0
\end{lstlisting}

\caption{Example of recursively defined linked lists in ROOPL}
\label{fig:list-program-a}
\end{figure}

Figure~\ref{fig:list-program-a} shows the definition of a \textit{Node} class which contains a single integer and a reference to the next node in the list, which is always \textbf{nil} for the last node in a list. The node provides a constructor and a variety of accessors to both the data and the next node.

The \textit{Node} class also implements a method \textit{length} for recursively computing the length of the list. The method \textit{insert} is used to insert a single node into the list at a given index, or alternatively, extracting a node from the list when uncalled.

\begin{figure}[h]
\centering

\begin{lstlisting}[style = basic, language = roopl]
class Iterator //Iterator interface
    int result

    //Abstract method
    method run(Node head, Node next)
        skip

    //Accessor
    method get(int out)
        out <=> result

class ListBuilder
    int n //The length of the list to build
    Iterator it //The iterator instance to run
    Node empty //Helper node

    //Constructor method
    method constructor(int len, Iterator i)
        n += len
        it <=> i

    method build(Node head)
        if n = 0 then
            if head != nil
                //List is done, run the iterator
                call it::run(head, empty)
            fi head != nil
        else
            //Not yet done, construct next node
            construct Node next(n, head)
                n -= 1
                call build(next)
                n += 1
            destruct next(n, head)
        fi n = 0
\end{lstlisting}

\caption{Example of recursively defined linked lists in ROOPL (\textit{cont.})}
\label{fig:list-program-b}
\end{figure}

The \textit{ListBuilder} class defined in Figure~\ref{fig:list-program-b} is used to recursively construct lists of arbitrary length from back to front. As a \textit{Node} is constructed, it is passed its own (1-based) index in the list and a reference to the next node in the list. When the list has been built, an iterator is invoked on the head of the list (working front-to-back). When the iterator finally returns, the list is deconstructed.

\begin{figure}[ht]
\centering

\begin{lstlisting}[style = basic, language = roopl]
class Sum inherits Iterator
    int sum

    method run(Node head, Node next)
        call head::add(sum)
        call head::swapNext(next)
        if next = nil then
            result += sum //Finished
        else
            call run(next, head) //More work to do
        fi next = nil
        uncall head::swapNext(next) //Return list to original state
        uncall head::add(sum)

class Program
    int result //Final result
    Node empty //Helper node
    
    method main()
        local int n = 5 //List length
        construct Sum it //Construct iterator
            construct ListBuilder lb(n, it) //Construct list builder
                call lb::build(empty) //Build & iterate
            destruct lb(n, it)
            call it::get(result) //Fetch result
        destruct it
        delocal n = 5
\end{lstlisting}

\caption{Example of recursively defined linked lists in ROOPL (\textit{cont.})}
\label{fig:list-program-c}
\end{figure}
\end{subfigures}

The class \textit{Sum} in Figure~\ref{fig:list-program-c} on page~\pageref{fig:list-program-c}, is an example of a class that implements the \textit{Iterator} interface. It iterates over the nodes in a list, summing up the value of their contents. The class \textit{Program} illustrates how to use \textit{ListBuilder} and \textit{Sum} to build a linked-list and iterate over it. By using the \textit{Iterator} interface we make the list builder more generic - it doesn't care what kind of operation we want to perform on the list, it only cares that the iterator object it is given conforms to the interface that it knows about.

The list is created by recursively entering a \textbf{construct}/\textbf{destruct} block. When the desired length is reached, the recursion halts, the iterator is invoked and then the list is deconstructed simply by unwinding the call stack, one call (and one corresponding list node) at a time.

This style of programming is similar to continuation-passing style (CPS) - the iterator acts as a continuation that the builder can pass the list on to after it has been constructed. There is no way for the builder to return the list back to the initial caller, as that would involve unwinding the call stack and thus deconstructing the list in the process. The main difference between this approach and CPS is that CPS is usually accomplished by passing the continuation directly as a function, but since ROOPL does not support higher-order functions we are limited to using objects.

\section{Computational Strength}
\label{sec:compstrength}

A programming language is said to be \textit{computationally universal} or \textit{Turing complete} if it is capable of simulating any single-taped Turing Machine, which in turn means it is capable of computing any of the computable functions. Reversible programming languages like Janus and ROOPL are not Turing complete since they are only capable of computing exactly those computable functions that are also injective.

\citeauthor{ty:ejanus} suggests simulation of the \textit{reversible} Turing machines as the computational benchmark for reversible programming languages~\cite{ty:ejanus}. A reversible Turing machine (RTM) is any Turing machine computing an injective function~\cite{cb:reversibility, ty:flowchart}. If a reversible programming language is able to cleanly simulate any RTM, then we say that it is \textit{reversibly universal} or \textit{r-Turing complete}.

The original versions of Janus~\cite{cl:janus, ty:janus} were not r-Turing complete since they only supported static fixed-size storage. The latest version of the language adds support for dynamic storage and was proven to be r-Turing complete by construction of an RTM interpreter~\cite{ty:ejanus}. In the following sections, we present techniques for constructing a similar RTM interpreter using ROOPL. The intepreter serves as a proof that ROOPL is also reversibly universal.

\subsection{RTM Representation}

We use the same Turing machine formalism as used in~\cite{ty:ejanus}, with state transitions represented by quadruples:

\begin{definition}\label{def:quadruple-tm}(Quadruple Turing Machine)\vspace{4mm}\\
\noindent A TM T is a tuple $(Q,\ \Gamma,\ b,\ \delta,\ q_s,\ q_f)$ where
\begin{itemize}[label = {}, itemsep = 1pt]
    \item $Q$ is the finite, non-empty set of states
    \item $\Gamma$ is the finite, non-empty set of tape alphabet symbols
    \item $b\ \in\ \Gamma$ is the blank symbol
    \item $\delta\ :\ (Q\ \times\ \Gamma\ \times\ \Gamma\ \times\ Q)\ \cup\ (Q\ \times\ \{/\}\ \times\ \{L,\ R\}\ \times\ Q)$ is the partial function representing the transitions
    \item $q_s\ \in\ Q$ is the starting state
    \item $q_f\ \in\ Q$ is the final state
\end{itemize}
The symbols $L$ and $R$ represent the tape head shift-directions left and right. A quadruple is either a symbol rule of the form $(q_1,\ s_1,\ s_2,\ q_2)$ or a shift rule of the form $(q_1,\ /,\ d,\ q_2)$ where $q_1 \in Q$, $q_2 \in Q$, $s_1 \in \Gamma$, $s_2 \in \Gamma$ and $d$ being either $L$ or $R$.

A symbol rule $(q_1,\ s_1,\ s_2,\ q_2)$ means that in state $q_1$, when reading $s_1$ from the tape, write $s_2$ to the tape and change to state $q_2$. A shift rule $(q_1,\ /,\ d,\ q_2)$ means that in state $q_1$, move the tape head in direction $d$ and change to state $q_2$.
\end{definition}

\begin{definition}\label{def:reversible-tm}(Reversible Turing Machine)\vspace{4mm}\\
\noindent A TM T is a reversible TM iff, for any distinct pair of quadruples $(q_1,\ s_1,\ s_2,\ q_2)\ \in\ \delta_T$ and $(q'_1,\ s'_1,\ s'_2,\ q'_2)\ \in\ \delta_T$, we have
\begin{itemize}[label = {}, itemsep = 1pt]
    \item $q_1\ =\ q'_1\ \implies\ (t_1\ \neq\ / \quad \wedge \quad t'_1\ \neq\ / \quad \wedge \quad t_1\ \neq\ t'_1)$ (forward determinism)
    \item $q_2\ =\ q'_2\ \implies\ (t_1\ \neq\ / \quad \wedge \quad t'_1\ \neq\ / \quad \wedge \quad t_2\ \neq\ t'_2)$ (backward determinism)
\end{itemize}
\end{definition}

\vspace{1mm}

\noindent In ROOPL we can represent the set of states $\{q_1,\ \dots,\ q_n\}$ and the tape alphabet $\Gamma$ as integers. The shift rule symbol $/$ and the direction symbols $L$ and $R$ are then represented by the integer variables \textbf{\texttt{SLASH}}, \textbf{\texttt{LEFT}} and \textbf{\texttt{RIGHT}} respectively.

With this representation, we can model a transition rule as an object containing four integers \textbf{\texttt{q1}}, \textbf{\texttt{s1}}, \textbf{\texttt{s2}} and \textbf{\texttt{q2}} where \textbf{\texttt{s1}} equals \textbf{\texttt{SLASH}} for shift rules. A linked list of such transition rules serves as the full transition table $\delta$. Using the techniques described in Section~\ref{sec:linked-lists} we can look up the appropriate transition rule at each step of the simulation, with an index variable that rolls around to $0$ whenever it exceeds the length of the transition table.

Since states are numbers in our simulation, we can use a single integer variable which is updated as the simulation runs, to keep track of the current state of the RTM. After each iteration of the RTM simulation - the current state is compared to the final state \textbf{\texttt{Qf}}, if they are the same the simulation stops.

\subsection{Tape Representation}

The tape of an RTM has to be able to grow unboundedly in both directions\footnote{The term \textit{linear bounded automaton} is used to denote TM-like automatons with an upper bound on the size of the tape.}. With the tape alphabet being represented by integers, we can use a simple object containing just an integer to model a tape cell. The full tape is represented by a linked list of such cells.

The position of the tape head of the RTM determines which tape cell is currently being inspected or modified. In our simulation we can use an integer variable to store the position of the tape head as an index into the list of tape cells. Initially, the tape should contain just the input and the tape head should be at index $0$. After each simulated step of the RTM we:
\begin{enumerate}
    \item Calculate the current length of the tape.
    \item If the position of the tape head is less than zero: The tape head has moved off the left end of the tape. We allocate a new cell, prepend it to the list and zero-clear the tape head position.
    \item If the position of the tape head exceeds the current length of the tape: The tape head has moved off the right end of the tape. We allocate a new cell and append it to the tape list.
\end{enumerate}
Our model of the tape can now also grow unboundedly in both directions.

\subsection{RTM Simulation}

Figure~\ref{fig:rtm-instruction-method} shows the method \textit{inst} which executes a single instruction given a reference to the head of the tape, the position of the tape head, the current state of the RTM and four integers representing the transition rule to be executed.

\begin{figure}[!h]
\centering

\begin{lstlisting}[style = basic, language = roopl]
method inst(Cell tape, int pos, int state, int q1, int s1, int s2, int q2)
    local int symbol = 0
    call tape::lookup(pos, symbol) //Fetch current symbol
    
    if state = q1 && s1 = symbol then //SYMBOL RULE
        state += q2 - q1 //Update state to q2
        symbol += s2 - s1 //Update symbol to s2
        call tape::add(pos, s2 - s1) //Update tape cell to s2
    fi state = q2 && s2 = symbol
    
    uncall tape::lookup(pos, symbol) //Zero-clear symbol
    delocal symbol = 0

    if state = q1 && s1 = SLASH then //SHIFT RULE
        state += q2 - q1 //Update state to q2
        
        if s2 = RIGHT then
            pos += 1 //Move tape head right
        fi s2 = RIGHT
        
        if s2 = LEFT then
            pos -= 1 //Move tape head left
        fi s2 = LEFT
    fi state = q2 && s1 = SLASH
\end{lstlisting}

\caption{Method for executing a single TM transition}
\label{fig:rtm-instruction-method}
\end{figure}

Figure~\ref{fig:rtm-simulation-method} shows the recursively defined \textit{simulate} method which is the main method responsible for running the RTM simulation. It extends the tape in either direction when necessary, fetches the transition quadruple, updates the program counter and copies the result when the RTM halts.

\begin{figure}[ht]
\centering

\begin{lstlisting}[style = basic, language = roopl]
method simulate(Cell tape, int pos, int state, int pc)
    local int len = 0
    call tape::length(len) //Calculate length of tape

    if pos > len then //Append new tape cell
        construct Cell new(BLANK, empty)
        call tape::insert(pos, len)
        call simulate(tape, pos, state, pc) //Continue simulation
        uncall tape::insert(pos, len)
        destruct new(BLANK, empty)
    else
        if pos < 0 then //Prepend new tape cell
            construct Cell new(BLANK, tape)
            tape <=> new
            pos += 1
            call simulate(tape, pos, state, pc) //Continue simulation
            pos -= 1
            tape <=> new
            destruct new(BLANK, tape)
        else
            local int q1 = 0, s1 = 0, s2 = 0, q2 = 0
            call incPc(pc, PC_MAX) //Increment pc
            call RTM::get(pc, q1, s1, s2, q2) //Fetch transition quadruple
            
            call inst(tape, pos, state, q1, s1, s2, q2)

            if state = Qf then //If RTM simulation is finished
                call tape::get(result) //Copy result of simulation
            else
                call simulate(tape, pos, state, pc) //Continue simulation
            fi state = Qf
            
            uncall inst(tape, pos, state, q1, s1, s2, q2)
            
            uncall RTM::get(pc, q1, s1, s2, q2) //Clear transition quadruple
            uncall incPc(pc, PC_MAX) //Decrement pc
            delocal q1 = 0, s1 = 0, s2 = 0, q2 = 0
        fi pos < 0
    fi pos > len

    uncall tape::length(len) //Clear length of tape
    delocal len = 0
\end{lstlisting}

\caption{Main RTM simulation method}
\label{fig:rtm-simulation-method}
\end{figure}

Unlike the RTM simulator created with Janus, which uses a pair of stack primitives to represent the RTM tape, the ROOPL RTM simulator cannot finish with the TM tape as the program output. Whenever a tape cell is created, the simulator invokes the next operation recursively - but when the TM halts, the call stack of the simulation must unwind before the main method and the program can finally terminate, which results in the tape cells being deallocated one by one. The program must even ensure that the tape cells are zero-cleared before they are deallocated which can only be done reversibly by uncomputing the simulation. When the TM halts, the entire simulation therefore runs again in reverse to return the tape cells to their original state as the simulator proceeds down the call stack.
\newpage

\newcommand{\inst}[1]{\textbf{\texttt{\MakeTextUppercase{#1}}}}

\chapter{Compilation}
\label{chp:compilation}

This chapter presents the code generation schemes used to translate ROOPL source code to \textit{PISA Assembly Language} (PAL). The translated programs are semantically equivalent to the source programs and generate no additional garbage data. Due to the syntactic and semantic similarities between Janus and ROOPL, some of the techniques presented here are similar to those presented in~\cite{ha:translation} which describes the translation from Janus to PAL.

\section{Preliminaries}
\label{sec:compilation-preliminaries}

See Section~\ref{sec:pisa} in Chapter~\ref{chp:survey} for a brief description of the PISA instruction set that we target in this chapter. A more in-depth presentation of PISA and the Pendulum architecture can be found in~\cite{cv:pendulum}. For presentation purposes, we will make use of the three pseudoinstructions defined in Figure~\ref{fig:pseudoinstructions}.

\begin{figure}[h]
\centering

\begin{alignat*}{2}
    \inst{subi}\quad r\quad i \qquad &\overset{\textbf{def}}{\scalebox{1.8}{=}} \qquad&&\ \inst{addi}\quad r\quad -i\\[1.6ex]
    \inst{push}\quad r \qquad &\overset{\textbf{def}}{\scalebox{1.8}{=}} \qquad&&\Big[\inst{exch}\quad r\quad r_{sp}\ , \quad \inst{addi}\quad r_{sp}\quad 1\Big]\\[1.6ex]
    \inst{pop}\quad r \qquad &\overset{\textbf{def}}{\scalebox{1.8}{=}} \qquad&&\Big[\inst{subi}\quad r_{sp}\quad 1\ , \quad \inst{exch}\quad r\quad r_{sp}\Big]
\end{alignat*}

\caption{Definition of pseudoinstructions \inst{subi}, \inst{push} and \inst{pop}}
\label{fig:pseudoinstructions}
\end{figure}

Our translation uses \textit{virtual function tables} and \textit{object layout prefixing} to implement subtype polymorphism. Every class method of the source program is translated to a series of PISA instructions. The translated methods accept an extra hidden parameter for the object pointer, which points to the object that the method is associated with and is used to access the instance variables of that object.

\section{Memory Layout}
\label{sec:memory-layout}

We use a series of labelled load-time \inst{data} instructions at the beginning of each translated program to initialize a portion of memory with virtual function tables and other static data that the translated program needs. We refer to this portion of program memory as \textit{static storage} because it is statically sized and initialized.

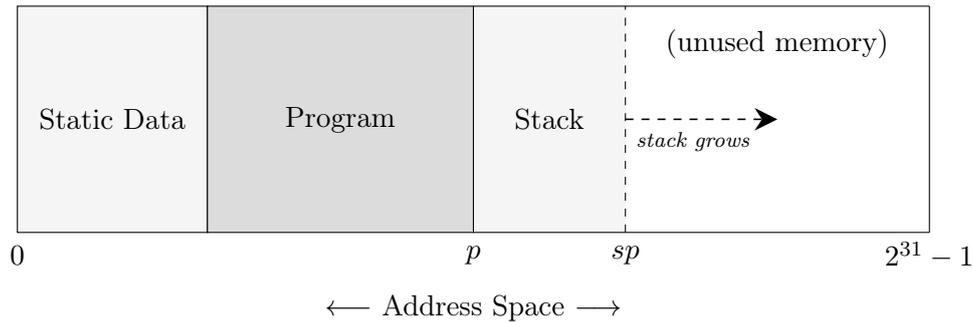
\begin{figure}[ht]
\centering

\begin{tikzpicture}
    \fill[fill = grey] (6, 0) rectangle (8, 3) node[midway] {Stack};
    \draw (6, 0) -- (12, 0);
    \draw (6, 3) -- (12, 3);
    \draw (12, 0) -- (12, 3);
    \draw[dashed] (8, 0) -- (8, 3);

    \filldraw[fill = grey, draw = black] (0, 0) rectangle (2.5, 3) node[midway] {Static Data};
    \filldraw[fill = darkgrey, draw = black] (2.5, 0) rectangle (6, 3) node[midway] {Program};
    
    \node at (10, 2.5) {(unused memory)};
    \node at (0, -.3) {$0$};
    \node at (6, -.3) {$p$};
    \node at (8, -.3) {$sp$};
    \node at (12, -.3) {$2^{31} - 1$};
    \node at (6, -1) {$\longleftarrow$ Address Space $\longrightarrow$};
    \draw[arrow, dashed] (8, 1.5) -- (10, 1.5);
    \node at (8.9, 1.2) {\scriptsize{\textit{stack grows}}};
\end{tikzpicture}

\caption{Memory layout of a ROOPL program}
\label{fig:memory-layout}
\end{figure}

Figure~\ref{fig:memory-layout} shows the full layout of a ROOPL program in memory:

\begin{enumerate}
    \item The static storage segment begins at address $0$ and contains static data initialised with \inst{data} instructions.
    \item The program segment is placed just after the static storage segment and contains the actual program instructions which consists mainly of translated class methods.
    \item The program stack is placed after the program segment at address $p$. The stack is a LIFO structure which grows and shrinks as the program executes.
\end{enumerate}

The program stack is used to store activation records, objects and local variables. The stack is accessed with the stack pointer $sp$ and initially $sp = p$.

\section{Dynamic Dispatch}
\label{sec:dynamic-dispatch}

Dynamic dispatch is a mechanism for selecting which implementation of a method to invoke, based on the type of the associated object at run time.

\begin{figure}[h]
\centering

\begin{subfigure}[t]{0.4\textwidth}
\vskip 0pt
\centering
\begin{lstlisting}[style = basic, numbers = none, frame = none, language = roopl]
class Shape
    int x
    int y

    method getArea(int out)
    method resize(int scale)
    method translate(int x, int y)
    method draw()

class Rectangle inherits Shape
    int a
    int b

    method getArea(int out)

class Circle inherits Shape
    int radius

    method getArea(int out)
    method getRadius(int out)
\end{lstlisting}
\end{subfigure}\hspace{2cm}
\begin{subfigure}[t]{0.475\textwidth}
\vskip 0pt
\centering
\resizebox{\textwidth}{!}{
\begin{minipage}{1.325\textwidth}
\begin{alignat*}{4}
&l\_Shape\_vt\ \texttt{:}\quad&&\inst{data}\quad&&90\quad&&\text{; Shape\textbf{\texttt{::}}getArea}\\
& &&\inst{data}\quad&&106\quad&&\text{; Shape\textbf{\texttt{::}}resize}\\
& &&\inst{data}\quad&&124\quad&&\text{; Shape\textbf{\texttt{::}}translate}\\
& &&\inst{data}\quad&&140\quad&&\text{; Shape\textbf{\texttt{::}}draw}\\
&l\_Rectangle\_vt\ \texttt{:}\quad&&\inst{data}\quad&&74\quad&&\text{; Rectangle\textbf{\texttt{::}}getArea}\\
& &&\inst{data}\quad&&106\quad&&\text{; Shape\textbf{\texttt{::}}resize}\\
& &&\inst{data}\quad&&124\quad&&\text{; Shape\textbf{\texttt{::}}translate}\\
& &&\inst{data}\quad&&140\quad&&\text{; Shape\textbf{\texttt{::}}draw}\\
&l\_Circle\_vt\ \texttt{:}\quad&&\inst{data}\quad&&26\quad&&\text{; Circle\textbf{\texttt{::}}getArea}\\
& &&\inst{data}\quad&&106\quad&&\text{; Shape\textbf{\texttt{::}}resize}\\
& &&\inst{data}\quad&&124\quad&&\text{; Shape\textbf{\texttt{::}}translate}\\
& &&\inst{data}\quad&&140\quad&&\text{; Shape\textbf{\texttt{::}}draw}\\
& &&\inst{data}\quad&&42\quad&&\text{; Circle\textbf{\texttt{::}}getRadius}\\
\end{alignat*}
\end{minipage}
}
\end{subfigure}

\caption[Virtual function table layout]{Virtual function table layout for a simple class hierarchy with overridden methods}
\label{fig:vtable-layout}
\end{figure}

Since ROOPL allows an object of type $\tau$ to be passed to a method expecting an object of type $\tau'$ if $\tau \prec: \tau'$, any method calls invoked on the object must be dispatched to the correct implementation in case $\tau$ overrides a method in $\tau'$. This can only be done at run time since it is impossible to determine the actual type of an object at compile time.

There are several ways to implement dynamic dispatch but the most common implementation uses virtual function tables (\textit{vtables}) to determine which implementation to dispatch to. Every class in a translated ROOPL program has a vtable which is used to map method names to the memory addresses of the method implementation for that class. Figure~\ref{fig:vtable-layout} shows how vtables in ROOPL are arranged for a simple class hierarchy:

\begin{itemize}
    \renewcommand\labelitemi{\normalfont\bfseries \textendash}
    \item The \textit{Shape} class has no base class and therefore the vtable entries all point to the original (non-overriden) method implementations.
    \item The \textit{Rectangle} class inherits from \textit{Shape} and overrides the \textit{getArea} method but does not override any other methods. Correspondingly, the vtable points to the overriding implementation of \textit{getArea} but points to the original implementations for the other methods \textit{resize}, \textit{translate} and \textit{draw}.
    \item The \textit{Circle} class is similar to \textit{Rectangle} but also adds a method \textit{getRadius} which is added to the vtable after the entries for the methods inherited from \textit{Shape}.
\end{itemize}

When a method is invoked on an object, the vtable is inspected at some statically determined offset. In our example, offset $0$ is used for invocations of method \textit{getArea}, offset $1$ is used for method \textit{resize}, offset $2$ for \textit{translate} and offset $3$ for \textit{draw}.

Placing the vtable entry for \textit{getRadius} after the entries for the inherited methods ensures that the inherited methods are placed at the same offsets in the vtable for all subclasses of \textit{Shape}. Therefore if a method is invoked on an object of type \textit{Shape}, the same offset is used to look up the address in the vtable regardless of the actual, dynamic type of the callee object. This technique is known as \textit{prefixing} and it greatly simplifies the translation of polymorphic behaviour. We also utilize prefixing in the memory layout of ROOPL objects for similar benefits.

\section{Object Layout}
\label{sec:object-layout}

Each ROOPL object consists of a pointer to the class vtable followed by a number of memory cells corresponding to the number of instance variables.

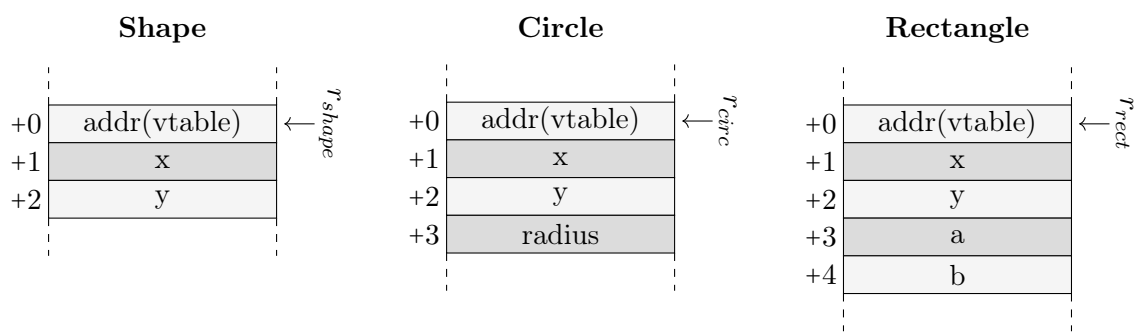
\begin{figure}[h]
\centering

\begin{subfigure}[t]{.32\textwidth}
    \vskip 0pt
    \centering
    \begin{tikzpicture}
        \draw[dashed] (0, 1.5) -- (0, 2);
        \draw[dashed] (3, 1.5) -- (3, 2);
        \filldraw[fill = grey, draw = black] (0, 1) rectangle (3, 1.5) node[midway] {addr(vtable)};
        \filldraw[fill = darkgrey, draw = black] (0, .5) rectangle (3, 1) node[midway] {x};
        \filldraw[fill = grey, draw = black] (0, 0) rectangle (3, .5) node[midway] {y};
        \draw[dashed] (0, 0) -- (0, -.5);
        \draw[dashed] (3, 0) -- (3, -.5);

        \node at (-.3, 1.25) {\texttt{+}$0$};
        \node at (-.3, .75) {\texttt{+}$1$};
        \node at (-.3, .25) {\texttt{+}$2$};
        \draw[->] (3.5, 1.25) -- (3.1, 1.25);
        \node[rotate = 270] at (3.7, 1.25) {$r_{shape}$};
        
        \node at (1.5, 2.5) {\textbf{Shape}};
    \end{tikzpicture}
\end{subfigure}
\begin{subfigure}[t]{.32\textwidth}
    \vskip 0pt
    \centering
    \begin{tikzpicture}
        \draw[dashed] (0, 1.5) -- (0, 2);
        \draw[dashed] (3, 1.5) -- (3, 2);
        \filldraw[fill = grey, draw = black] (0, 1) rectangle (3, 1.5) node[midway] {addr(vtable)};
        \filldraw[fill = darkgrey, draw = black] (0, .5) rectangle (3, 1) node[midway] {x};
        \filldraw[fill = grey, draw = black] (0, 0) rectangle (3, .5) node[midway] {y};
        \filldraw[fill = darkgrey, draw = black] (0, -.5) rectangle (3, 0) node[midway] {radius};
        \draw[dashed] (0, -.5) -- (0, -1);
        \draw[dashed] (3, -.5) -- (3, -1);

        \node at (-.3, 1.25) {\texttt{+}$0$};
        \node at (-.3, .75) {\texttt{+}$1$};
        \node at (-.3, .25) {\texttt{+}$2$};
        \node at (-.3, -.25) {\texttt{+}$3$};
        \draw[->] (3.5, 1.25) -- (3.1, 1.25);
        \node[rotate = 270] at (3.7, 1.25) {$r_{circ}$};
        
        \node at (1.5, 2.5) {\textbf{Circle}};
    \end{tikzpicture}
\end{subfigure}
\begin{subfigure}[t]{.32\textwidth}
    \vskip 0pt
    \centering
    \begin{tikzpicture}
        \draw[dashed] (0, 1.5) -- (0, 2);
        \draw[dashed] (3, 1.5) -- (3, 2);
        \filldraw[fill = grey, draw = black] (0, 1) rectangle (3, 1.5) node[midway] {addr(vtable)};
        \filldraw[fill = darkgrey, draw = black] (0, .5) rectangle (3, 1) node[midway] {x};
        \filldraw[fill = grey, draw = black] (0, 0) rectangle (3, .5) node[midway] {y};
        \filldraw[fill = darkgrey, draw = black] (0, -.5) rectangle (3, 0) node[midway] {a};
        \filldraw[fill = grey, draw = black] (0, -1) rectangle (3, -.5) node[midway] {b};
        \draw[dashed] (0, -1) -- (0, -1.5);
        \draw[dashed] (3, -1) -- (3, -1.5);

        \node at (-.3, 1.25) {\texttt{+}$0$};
        \node at (-.3, .75) {\texttt{+}$1$};
        \node at (-.3, .25) {\texttt{+}$2$};
        \node at (-.3, -.25) {\texttt{+}$3$};
        \node at (-.3, -.75) {\texttt{+}$4$};
        \draw[->] (3.5, 1.25) -- (3.1, 1.25);
        \node[rotate = 270] at (3.7, 1.25) {$r_{rect}$};
        
        \node at (1.5, 2.5) {\textbf{Rectangle}};
    \end{tikzpicture}
\end{subfigure}

\caption[Illustration of object memory layout]{Illustration of prefixing in the memory layout of 3 ROOPL objects}
\label{fig:object-layout}
\end{figure}

Figure~\ref{fig:object-layout} illustrates the layout of 3 objects based on the class hierarchy from Figure~\ref{fig:vtable-layout}. When a statement or expression refers to an instance variable, the variable offset is added to the hidden object pointer which is then dereferenced (using \inst{exch}) to fetch the value of the instance variable. Again we utilize prefixing to ensure the variable offsets are identical across subclasses of the same type.

Because the class vtable pointer is always stored at offset $0$, a vtable lookup is accomplished simply by dereferencing the pointer to the callee object, adding the method offset and then dereferencing the resulting address which yields the memory address of the method implementation.

\section{Program Structure}
\label{sec:program-structure}

The overall structure of a translated ROOPL program is illustrated in Figure~\ref{fig:pisa-program-layout}. After the static storage segment follows a series of translated class methods in turn followed by a section of code which acts as the starting point of the program.

\begin{figure}[h]
\centering

\resizebox{.8\linewidth}{!}{
\begin{minipage}{\linewidth}
\begin{alignat*}{6}
&\textbf{(1)}\quad&& &&\cdots\cdots && && &&\text{; Static data declarations}\\
&\textbf{(2)}\quad&& &&\cdots\cdots && && &&\text{; Code for program class methods}\\
&\textbf{(3)}\quad&&start\ \texttt{:}\quad&&\inst{start}\quad&& && &&\text{; Program starting point}\\
&\textbf{(4)}\quad&& &&\inst{addi}\quad &&r_{sp}\quad &&p&&\text{; Initialize stack pointer}\\
&\textbf{(5)}\quad&& &&\inst{xor}\quad &&r_m\quad &&r_{sp}\qquad &&\text{; Store address of main object in $r_m$}\\
&\textbf{(6)}\quad&& &&\inst{xori}\quad &&r_v\quad &&label_{vt}\qquad &&\text{; Store address of vtable in $r_v$}\\
&\textbf{(7)}\quad&& &&\inst{exch}\quad &&r_v\quad &&r_{sp}\qquad &&\text{; Push address of vtable onto stack}\\
&\textbf{(8)}\quad&& &&\inst{addi}\quad &&r_{sp}\quad &&size_m\qquad &&\text{; Allocate space for main object}\\
&\textbf{(9)}\quad&& &&\inst{push}\quad &&r_m\quad && &&\text{; Push '\textit{this}' onto stack}\\
&\textbf{(10)}\quad&& &&\inst{bra}\quad &&label_m \span\omit\span \qquad&&\text{; Call main procedure}\\
&\textbf{(11)}\quad&& &&\inst{pop}\quad &&r_m\quad && &&\text{; Pop '\textit{this}' from stack}\\
&\textbf{(12)}\quad&& &&\inst{subi}\quad &&r_{sp}\quad &&size_m\qquad &&\text{; Deallocate space of main object}\\
&\textbf{(13)}\quad&& &&\inst{exch}\quad &&r_v\quad &&r_{sp}\qquad &&\text{; Pop vtable address into $r_v$}\\
&\textbf{(14)}\quad&& &&\inst{xori}\quad &&r_v\quad &&label_{vt}\qquad &&\text{; Clear $r_v$}\\
&\textbf{(15)}\quad&& &&\inst{xor}\quad &&r_m\quad &&r_{sp}\qquad &&\text{; Clear $r_m$}\\
&\textbf{(16)}\quad&& &&\inst{subi}\quad &&r_{sp}\quad &&p &&\text{; Clear stack pointer}\\
&\textbf{(17)}\quad&&finish\ \texttt{:}\quad&&\inst{finish}\quad && && &&\text{; Program exit point}
\end{alignat*}
\end{minipage}
}

\caption{Overall layout of a translated ROOPL program}
\label{fig:pisa-program-layout}
\end{figure}

This section is responsible for initializing the stack pointer, allocating an instance of the object containing the main method, calling the main method, deallocating the main object and finally clearing the stack pointer:

The stack pointer is initialized simply by adding the base address of the stack to whichever register $r_{sp}$ should contain the stack pointer. The base address of the stack varies with the size of the translated program but is always known at compile-time - in Figure~\ref{fig:pisa-program-layout} the base address of the stack is simply denoted $p$. After the stack is in place, we allocate an instance of the main object on the stack by pushing the address of the vtable (denoted $label_{vt}$) onto the stack and adding the size of the object to the stack pointer (denoted $size_m$). We then push the address of this object onto the stack and unconditionally branch to the main method at $label_m$. The address of the main object is popped off the stack by the callee and serves as the object pointer.

After the main method returns, we pop the address of the main object from the stack, deallocate the object and clear the stack pointer. This is done by inverting the steps taken to initialize the stack and the object. After the program terminates, the values of the main object member variables will be left in memory where the stack used to be. This is clearly not an ideal location for the program output to reside, we address this concern in Section~\ref{sec:implementation}.

\section{Class Methods}
\label{sec:class-methods}

The calling convention described in~\cite{ha:translation} is a generalized version of the PISA calling convention presented in~\cite{mf:reversibility}, modified to support recursion. The ROOPL translation uses a similar approach with added support for method parameters (including the hidden object pointer) with pass-by-reference semantics.

\begin{figure}[h]
\centering

\resizebox{.8\linewidth}{!}{
\begin{minipage}{\linewidth}
\begin{alignat*}{6}
&\textbf{(1)}\quad&&q_{top}\ \texttt{:}\quad&&\inst{bra}\quad&&m_{bot}\quad&& &&\\
&\textbf{(2)}\quad&& &&\inst{pop}\quad&&r_{ro}\quad&& &&\text{; Load return offset}\\
&\textbf{(3)}\quad&& &&\inst{push}\quad&&r_{x_2}\quad&& &&\text{; Restore argument } x_2\\
&\textbf{(4)}\quad&& &&\inst{push}\quad&&r_{x_1}\quad&& &&\text{; Restore argument } x_1\\
&\textbf{(5)}\quad&& &&\inst{push}\quad&&r_{this}\quad&& &&\text{; Restor this-pointer}\\
&\textbf{(6)}\quad&&label_q\ \texttt{:}\quad&&\inst{swapbr}\quad&&r_{ro}\quad&& &&\text{; Method entry and exit point}\\
&\textbf{(7)}\quad&& &&\inst{neg}\quad&&r_{ro}\quad&& &&\text{; Negate return offset}\\
&\textbf{(8)}\quad&& &&\inst{pop}\quad&&r_{this}\quad&& &&\text{; Load this-pointer}\\
&\textbf{(9)}\quad&& &&\inst{pop}\quad&&r_{x_1}\quad&& &&\text{; Load argument } x_1\\
&\textbf{(10)}\quad&& &&\inst{pop}\quad&&r_{x_2}\quad&& &&\text{; Load argument } x_2\\
&\textbf{(11)}\quad&& &&\inst{push}\quad&&r_{ro}\quad&& &&\text{; Store return offset}\\
&\textbf{(12)}\quad&& &&\cdots\cdots && && &&\text{; Code for method body } q_{body}\\
&\textbf{(13)}\quad&&q_{bot}\ \texttt{:}\quad&&\inst{bra}\quad&&m_{top}\quad&& &&\\
\end{alignat*}
\end{minipage}
}

\caption{PISA translation of a ROOPL method}
\label{fig:pisa-method}
\end{figure}

Figure~\ref{fig:pisa-method} shows the PISA translation of a ROOPL method taking two parameters $x_1$ and $x_2$, with method body $q_{body}$. The caller transfers control to instruction \textbf{(6)} after which the object-pointer and method arguments are popped off the stack, the return offset is stored and the body is executed. The method prologue works identically for both directions of execution and it works with local method calls (which are simple static branch instructions) and with method calls invoked on other objects (which are dynamically dispatched). This avoids the need for multiple translations of the same method to support reverse execution, which would greatly increase the size of the translated programs.

The \inst{swapbr} instruction is used here to facilitate incoming jumps from more than one location, which would otherwise be impossible to achieve with PISA's paired-branch instructions. The return offset is swapped into register $r_{ro}$, negated (since the return offset is simply the negation of the incoming jump offset) and is then stored on the stack. When the method body finishes, the return offset is swapped back into the branch register, thereby returning the flow of execution to the caller. The arguments and offsets that are accumulated on the program stack during a (possibly nested or recursive) method invocation are cleared as the stack unwinds and the method returns. When the main method call eventually returns, just before the program terminates, the stack will have been returned to its initial, empty state.

\section{Method Invocations}
\label{sec:method-invocations}

In ROOPL, method invocations on the current object are always statically dispatched. This behaviour is known as \textit{closed recursion}. The effect of this is that local method invocations in a base class, will always dispatch to the method within that class, even if it has been overridden in a derived class. Using dynamic dispatch semantics for local method invocations (\textit{open recursion}) leads to increased program size, increased execution time and it makes program behaviour harder to reason about\footnote{Open recursion also breaks encapsulation and has been identified as the root cause of the \textit{fragile base class problem}~\cite{ja:recursion}}.

Figure~\ref{fig:pisa-local-method} shows the translation of local method invocations. The arguments are pushed on the stack in reverse order, followed by the object pointer. The jump itself is performed with an unconditional branch instruction to a statically determined label. After the method returns, the object pointer and the arguments are popped off the stack.

\begin{figure}[h]
\centering

\begin{subfigure}[t]{0.495\linewidth}
\vskip 0pt
\centering
\begin{equation*}
    \textbf{call}\ q\textbf{\texttt{(}}x_1,\ x_2\textbf{\texttt{)}}
\end{equation*}

\resizebox{.8\linewidth}{!}{
\begin{minipage}{1.025\linewidth}
\begin{alignat*}{5}
&\textbf{(1)}\quad&&\inst{push}\quad &&r_{x_2}\quad && &&\text{; Push $x_2$ onto stack}\\
&\textbf{(2)}\quad&&\inst{push}\quad &&r_{x_1}\quad && &&\text{; Push $x_1$ onto stack}\\
&\textbf{(3)}\quad&&\inst{push}\quad &&r_{t}\quad && &&\text{; Push '\textit{this}' onto stack}\\
&\textbf{(4)}\quad&&\inst{bra}\quad &&label_q\quad && &&\text{; Jump to method}\\
&\textbf{(5)}\quad&&\inst{pop}\quad &&r_{t}\quad && &&\text{; Pop '\textit{this}' from stack}\\
&\textbf{(6)}\quad&&\inst{pop}\quad &&r_{x_1}\quad && &&\text{; Pop $x_1$ from stack}\\
&\textbf{(7)}\quad&&\inst{pop}\quad &&r_{x_2}\quad && &&\text{; Pop $x_2$ from stack}
\end{alignat*}
\end{minipage}
}
\end{subfigure}
\begin{subfigure}[t]{0.495\linewidth}
\vskip 0pt
\centering
\begin{equation*}
    \textbf{uncall}\ q\textbf{\texttt{(}}x_1,\ x_2\textbf{\texttt{)}}
\end{equation*}

\resizebox{.8\linewidth}{!}{
\begin{minipage}{1.025\linewidth}
\begin{alignat*}{5}
&\textbf{(1)}\quad&&\inst{push}\quad &&r_{x_2}\quad && &&\text{; Push $x_2$ onto stack}\\
&\textbf{(2)}\quad&&\inst{push}\quad &&r_{x_1}\quad && &&\text{; Push $x_1$ onto stack}\\
&\textbf{(3)}\quad&&\inst{push}\quad &&r_{t}\quad && &&\text{; Push '\textit{this}' onto stack}\\
&\textbf{(4)}\quad&&\inst{rbra}\quad &&label_q\quad && &&\text{; Reverse jump to method}\\
&\textbf{(5)}\quad&&\inst{pop}\quad &&r_{t}\quad && &&\text{; Pop '\textit{this}' from stack}\\
&\textbf{(6)}\quad&&\inst{pop}\quad &&r_{x_1}\quad && &&\text{; Pop $x_1$ from stack}\\
&\textbf{(7)}\quad&&\inst{pop}\quad &&r_{x_2}\quad && &&\text{; Pop $x_2$ from stack}
\end{alignat*}
\end{minipage}
}
\end{subfigure}

\caption{PISA translation of local method invocations}
\label{fig:pisa-local-method}
\end{figure}

Uncalling a method is accomplished with the reverse branch instruction which flips the direction of execution after jumping to the method. Note that since we are using pass-by-reference semantics, we are in fact passing memory addresses as arguments to the method, which in turn points to the locations of the values of $x_1$ and $x_2$. The callee is responsible for dereferencing the arguments when they are used in the method body, using the \inst{exch} instruction.

Translation of non-local method calls always uses dynamic dispatch, which is slightly more involved than just jumping to a statically determined instruction label. The steps for dynamically dispatching to a method associated with a different object are:

\begin{enumerate}
    \itemsep 1pt
    \item Look up the address of the method in the object vtable and create a local copy
    \item Calculate the relative jump offset from the method invocation to the method prologue
    \item Push the arguments on the stack along with the new object pointer
    \item Perform the jump
    \item Pop the arguments from the stack
    \item Undo the jump offset calculation, to reobtain the absolute address of the method
    \item Look up the address of the method in the class vtable again, to clear the local copy
\end{enumerate}

Figure~\ref{fig:pisa-nonlocal-method} shows the translation of a dynamic method call. The first step is to dereference the callee-object to obtain the address of the class vtable. We then look up the address of the method by adding the vtable offset ($\mathit{offset}_q$) to the vtable address.

Note how this lookup involves \textit{swapping} the address stored in the vtable in static memory with the value of a register. This means the vtable is in fact altered and we need to return it to its original state before we perform the jump, since the callee might need to lookup the same method address later on. We can restore the vtable with a Lecerf-reversal by creating a copy of the method address in a register, and then undoing the lookup thereby swapping the original method address back into the vtable.

\begin{figure}[h]
\centering

\begin{equation*}
    \textbf{call}\ x\textbf{\texttt{::}}q\textbf{\texttt{(}}x_1,\ x_2\textbf{\texttt{)}}
\end{equation*}

\resizebox{.8\linewidth}{!}{
\begin{minipage}{\linewidth}
\begin{alignat*}{6}
&\textbf{(1)}\quad&& &&\inst{exch}\quad &&r_v\quad &&r_x &&\text{; Get address of vtable}\\
&\textbf{(2)}\quad&& &&\inst{addi}\quad &&r_v\quad &&\mathit{offset}_q\qquad &&\text{; Lookup $q$ in vtable}\\
&\textbf{(3)}\quad&& &&\inst{exch}\quad &&r_t\quad &&r_v &&\text{; Get address of $q$}\\
&\textbf{(4)}\quad&& &&\inst{xor}\quad &&r_{tgt}\quad &&r_t &&\text{; Copy address of $q$}\\
&\textbf{(5)}\quad&& &&\inst{exch}\quad &&r_t\quad &&r_v &&\text{; Place address back in vtable}\\
&\textbf{(6)}\quad&& &&\inst{subi}\quad &&r_v\quad &&\mathit{offset}_q\qquad &&\text{; Restore vtable pointer}\\
&\textbf{(7)}\quad&& &&\inst{exch}\quad &&r_v\quad &&r_x &&\text{; Restore object pointer}\\
&\textbf{(8)}\quad&& &&\inst{push}\quad &&r_{x_2}\quad && &&\text{; Push $x_2$ onto stack}\\
&\textbf{(9)}\quad&& &&\inst{push}\quad &&r_{x_1}\quad && &&\text{; Push $x_1$ onto stack}\\
&\textbf{(10)}\quad&& &&\inst{push}\quad &&r_x\quad && &&\text{; Push new '\textit{this}' onto stack}\\
&\textbf{(11)}\quad&& &&\inst{subi}\quad &&r_{tgt}\quad &&label_{jmp} &&\text{; Calculate jump offset}\\
&\textbf{(12)}\quad&&label_{jmp}\ \texttt{:}\quad &&\inst{swapbr}\quad &&r_{tgt}\quad && &&\text{; Jump to method}\\
&\textbf{(13)}\quad&& &&\inst{neg}\quad &&r_{tgt}\quad && &&\text{; Restore $r_{tgt}$ to original value}\\
&\textbf{(14)}\quad&& &&\inst{addi}\quad &&r_{tgt}\quad &&label_{jmp} &&\text{; Restore absolute jump value}\\
&\textbf{(15)}\quad&& &&\inst{pop}\quad &&r_x\quad && &&\text{; Pop new '\textit{this}' from stack}\\
&\textbf{(16)}\quad&& &&\inst{pop}\quad &&r_{x_1}\quad && &&\text{; Pop $x_1$ from stack}\\
&\textbf{(17)}\quad&& &&\inst{pop}\quad &&r_{x_2}\quad && &&\text{; Pop $x_2$ from stack}\\
&\textbf{(18)}\quad&& &&\inst{exch}\quad &&r_v\quad &&r_x &&\text{; Get address of vtable}\\
&\textbf{(19)}\quad&& &&\inst{addi}\quad &&r_v\quad &&\mathit{offset}_q\qquad &&\text{; Lookup $q$ in vtable}\\
&\textbf{(20)}\quad&& &&\inst{exch}\quad &&r_t\quad &&r_v &&\text{; Get address of $q$}\\
&\textbf{(21)}\quad&& &&\inst{xor}\quad &&r_{tgt}\quad &&r_t &&\text{; Clear address of $q$}\\
&\textbf{(22)}\quad&& &&\inst{exch}\quad &&r_t\quad &&r_v &&\text{; Place address back in vtable}\\
&\textbf{(23)}\quad&& &&\inst{subi}\quad &&r_v\quad &&\mathit{offset}_q\qquad &&\text{; Restore vtable pointer}\\
&\textbf{(24)}\quad&& &&\inst{exch}\quad &&r_v\quad &&r_x &&\text{; Restore object pointer}\\
\end{alignat*}
\end{minipage}
}

\caption{PISA translation of a non-local method invocation}
\label{fig:pisa-nonlocal-method}
\end{figure}

Since the usual branch instructions (\inst{bra}, \inst{rbra}, et cetera) can only jump to static instruction labels, we must use the \inst{swapbr} instruction to swap the jump offset into the branch register. Because the vtable only stores absolute method addresses, we have to calculate the jump offset manually for each method call. We can accomplish this by subtracting the memory address of the \inst{swapbr} instruction from the method address.

After the method returns, we negate the jump offset (to cancel out the negation done by the callee in the method prologue) and add the address of the \inst{swapbr} instruction to the jump offset to obtain the original absolute value of the method. To avoid leaving this method address in a register or on the stack as garbage data, we repeat the vtable lookup to clear the local method address copy. In total, the vtable is consulted 4 times per method invocation.

\begin{figure}[h]
\centering

\begin{equation*}
    \textbf{uncall}\ x\textbf{\texttt{::}}q\textbf{\texttt{(}}x_1,\ x_2\textbf{\texttt{)}}
\end{equation*}

\resizebox{.8\linewidth}{!}{
\begin{minipage}{\linewidth}
\begin{alignat*}{6}
&\textbf{(11)}\quad&& &&\inst{subi}\quad &&r_{tgt}\quad &&label_{jmp}\quad&&\text{; Calculate jump offset}\\
&\textbf{(12)}\quad&&top_{jmp} &&\inst{rbra}\quad &&bot_{jmp}\quad && &&\text{; Flip direction}\\
&\textbf{(13)}\quad&&label_{jmp}\ \texttt{:}\quad &&\inst{swapbr}\quad &&r_{tgt}\quad && &&\text{; Jump to method}\\
&\textbf{(14)}\quad&& &&\inst{neg}\quad &&r_{tgt}\quad && &&\text{; Restore $r_{tgt}$ to original value}\\
&\textbf{(15)}\quad&&bot_{jmp} &&\inst{bra}\quad &&top_{jmp}\quad && &&\text{; Paired branch}\\
&\textbf{(16)}\quad&& &&\inst{addi}\quad &&r_{tgt}\quad &&label_{jmp} &&\text{; Restore absolute jump value}
\end{alignat*}
\end{minipage}
}

\caption{PISA translation of a non-local reverse method invocation}
\label{fig:pisa-uncall-nonlocal-method}
\end{figure}

Uncalling a non-local method is analogous to calling a non-local method, with the added caveat that the direction of execution should be reversed before the jump occurs. Unlike BobISA (which has the \inst{RSWB} instruction, see Section~\ref{sec:bob} in Chapter~\ref{chp:survey}), PISA does not have a single instruction which swaps the branch register and flips the direction bit simultaneously. Figure~\ref{fig:pisa-uncall-nonlocal-method} shows how this is instead accomplished with an \inst{RBRA}/\inst{BRA} pair. The vtable lookup and cleanup is identical to the approach used in Figure~\ref{fig:pisa-nonlocal-method}.

\section{Object Blocks}
\label{sec:object-blocks}

Since the stack is maintained over (but not during) execution of a statement, we can store ROOPL objects on the program stack. The execution of an object block begins with allocation of a new object on the top of the stack. Then the block statement is executed, after which the object will again be on the top of the stack, ready for deallocation.

\begin{figure}[h]
\centering

\begin{equation*}
    \textbf{construct}\ c\ x\quad s\quad\textbf{destruct}\ x
\end{equation*}

\resizebox{.8\linewidth}{!}{
\begin{minipage}{\linewidth}
\begin{alignat*}{6}
&\textbf{(1)}\quad&&\inst{xor}\quad &&r_x\quad &&r_{sp}\qquad &&\text{; Store address of new object $x$ in $r_x$}\\
&\textbf{(2)}\quad&&\inst{xori}\quad &&r_v\quad &&label_{vt}\qquad &&\text{; Store address of vtable in $r_v$}\\
&\textbf{(3)}\quad&&\inst{exch}\quad &&r_v\quad &&r_{sp}\qquad &&\text{; Push address of vtable onto stack}\\
&\textbf{(4)}\quad&&\inst{addi}\quad &&r_{sp}\quad &&size_c\qquad &&\text{; Allocate space for new object}\\
&\textbf{(5)}\quad&&\cdots\cdots && && &&\text{; Code for statement $s$}\\
&\textbf{(6)}\quad&&\inst{subi}\quad &&r_{sp}\quad &&size_c\qquad &&\text{; Deallocate space occupied by zero-cleared object}\\
&\textbf{(7)}\quad&&\inst{exch}\quad &&r_v\quad &&r_{sp}\qquad &&\text{; Pop vtable address into $r_v$}\\
&\textbf{(8)}\quad&&\inst{xori}\quad &&r_v\quad &&label_{vt}\qquad &&\text{; Clear $r_v$}\\
&\textbf{(9)}\quad&&\inst{xor}\quad &&r_x\quad &&r_{sp}\qquad &&\text{; Clear $r_x$}
\end{alignat*}
\end{minipage}
}

\caption{PISA translation of an object block}
\label{fig:pisa-object-block}
\end{figure}

Figure~\ref{fig:pisa-object-block} illustrates how this is accomplished in practice. The immediate $label_{vt}$ is the address of the vtable for class $c$ and $size_c$ is the size of the class. The size of a class is the number of instance variables plus 1, for accomodating the vtable pointer. Within the block statement $s$, the register $r_x$ contains the address of the new object $x$.

\section{Local Blocks}
\label{sec:local-blocks}

Figure~\ref{fig:pisa-local-block} shows the translation of a local integer block. Local blocks are not part of the core language (See Section~\ref{subsec:local-blocks} in Chapter~\ref{chp:roopl}), but are included as a language extension, borrowed from Janus.

\begin{figure}[h]
\centering

\begin{equation*}
    \textbf{local int}\ x\ =\ e_1\quad s\quad\textbf{delocal}\ x\ =\ e_2
\end{equation*}

\resizebox{.8\linewidth}{!}{
\begin{minipage}{\linewidth}
\begin{alignat*}{5}
&\textbf{(1)}\quad&&\cdots\cdots && && &&\text{; Code for $r_e\ \leftarrow\ \llbracket e_1 \rrbracket$}\\
&\textbf{(2)}\quad&&\inst{xor}\quad &&r_x\quad &&r_{sp}\qquad &&\text{; Store address of new integer $x$ in $r_x$}\\
&\textbf{(3)}\quad&&\inst{xor}\quad &&r_t\quad &&r_e\qquad &&\text{; Copy value of $e_1$ into $r_t$}\\
&\textbf{(4)}\quad&&\inst{push}\quad &&r_t\quad && &&\text{; Push value of $e_1$ onto stack}\\
&\textbf{(5)}\quad&&\cdots\cdots && && &&\text{; Inverse of \textbf{(1)}}\\
&\textbf{(6)}\quad&&\cdots\cdots && && &&\text{; Code for statement $s$}\\
&\textbf{(7)}\quad&&\cdots\cdots && && &&\text{; Code for $r_e\ \leftarrow\ \llbracket e_2 \rrbracket$}\\
&\textbf{(8)}\quad&&\inst{pop}\quad &&r_t\quad && &&\text{; Pop value of $x$ into $r_t$}\\
&\textbf{(9)}\quad&&\inst{xor}\quad &&r_t\quad &&r_e\qquad &&\text{; Clear value of $r_t$ with $r_e$}\\
&\textbf{(10)}\quad&&\inst{xor}\quad &&r_x\quad &&r_{sp}\qquad &&\text{; Clear reference to $x$}\\
&\textbf{(11)}\quad&&\cdots\cdots && && &&\text{; Inverse of \textbf{(7)}}
\end{alignat*}
\end{minipage}
}

\caption{PISA translation of a local block}
\label{fig:pisa-local-block}
\end{figure}

Again, the translation can take advantage of the fact that the program stack is preserved over statement execution. This means we can place the local integers on the stack and pop them off after the block statement has been executed. Local integers are initialized with some expression $e_1$ and zero-cleared with another expression $e_2$. Evaluation of an irreversible expression in a reversible assembly language is bound to generate some amount of garbage data so we use a Lecerf-reversal to uncompute this garbage data after initializing the local variable with $e_1$, and again after clearing the local variable with $e_2$.

\section{Control Flow}
\label{sec:control-flow}

At the level of assembly language, control flow statements are usually realized via direct alteration of the program counter, which is clearly not an option for a translation targeting a reversible instruction set such as PISA. Another complication arises in the evaluation of the expressions acting as entry and exit conditions, since ROOPL expressions are irreversible.

\citeauthor{ha:translation} suggests a simple approach for arranging the translation of Janus CFOs in such a way that the garbage data produced by evaluation of the entry and exit expressions can be uncomputed without significant code duplication~\cite{ha:translation}. Since Janus (and ROOPL) uses the value $0$ for the boolean value \textit{false} and non-zero for the boolean value \textit{true}, we can safely reduce the result of evaluating the entry and exit expressions to either $0$ or $1$ while still preserving the semantics of the source program.

\begin{figure}[ht]
\centering

\resizebox{.8\linewidth}{!}{
\begin{minipage}{\linewidth}
\begin{equation*}
    \textbf{if}\ e_1\ \textbf{then}\ s_1\ \textbf{else}\ s_2\ \textbf{fi}\ e_2 \hspace{6cm} \textbf{from}\ e_1\ \textbf{do}\ s_1\ \textbf{loop}\ s_2\ \textbf{until}\ e_2
\end{equation*}
\end{minipage}
}

\resizebox{\linewidth}{!}{
\begin{minipage}{1.525\linewidth}
\begin{alignat*}{10}
&\textbf{(1)}\quad&& &&\cdots\cdots && &&\text{; Code for $r_e\ \leftarrow\ \llbracket e_1 \rrbracket_c$}&&\textbf{(1)}\quad&& &&\inst{xori}\quad &&r_t\ 1&&\text{; Set $r_t = 1$}\\
&\textbf{(2)}\quad&& &&\inst{xor}\quad &&r_t\ r_e&&\text{; Copy value of $e_1$ into $r_t$}\qquad\quad&&\textbf{(2)}\quad&&entry\ \texttt{:}\quad&&\inst{beq}&&r_t\ r_0\ assert\quad&&\text{; Receive jump}\\
&\textbf{(3)}\quad&& &&\cdots\cdots && &&\text{; Inverse of \textbf{(1)}}&&\textbf{(3)}\quad&& &&\cdots\cdots && &&\text{; Code for $r_e\ \leftarrow\ \llbracket e_1 \rrbracket_c$}\\
&\textbf{(4)}\quad&&test\ \texttt{:}\quad&&\inst{beq}&&r_t\ r_0\ test_{false}\quad&&\text{; Jump if $e_1 = 0$}&&\textbf{(4)}\quad&& &&\inst{xor}\quad &&r_t\ r_e&&\text{; Clear $r_t$}\\
&\textbf{(5)}\quad&& &&\inst{xori}\quad &&r_t\ 1&&\text{; Clear $r_t$}&&\textbf{(5)}\quad&& &&\cdots\cdots && &&\text{; Inverse of \textbf{(3)}}\\
&\textbf{(6)}\quad&& &&\cdots\cdots && &&\text{; Code for statement $s_1$}&&\textbf{(6)}\quad&& &&\cdots\cdots && &&\text{; Code for statement $s_1$}\\
&\textbf{(7)}\quad&& &&\inst{xori}\quad &&r_t\ 1&&\text{; Set $r_t = 1$}&&\textbf{(7)}\quad&& &&\cdots\cdots && &&\text{; Code for $r_e\ \leftarrow\ \llbracket e_2 \rrbracket_c$}\\
&\textbf{(8)}\quad&&assert_{true}\ \texttt{:}\quad&&\inst{bra}&&assert&&\text{; Jump}&&\textbf{(8)}\quad&& &&\inst{xor}\quad &&r_t\ r_e&&\text{; Copy value of $e_2$ into $r_t$}\\
&\textbf{(9)}\quad&&test_{false}\ \texttt{:}\quad&&\inst{bra}&&test&&\text{; Receive jump}&&\textbf{(9)}\quad&& &&\cdots\cdots && &&\text{; Inverse of \textbf{(7)}}\\
&\textbf{(10)}\quad&& &&\cdots\cdots && &&\text{; Code for statement $s_2$}&&\textbf{(10)}\quad&&test\ \texttt{:}\quad&&\inst{bne}&&r_t\ r_0\ exit&&\text{; Exit if $e_2 = 1$}\\
&\textbf{(11)}\quad&&assert\ \texttt{:}\quad&&\inst{bne}&&r_t\ r_0\ assert_{true}\quad&&\text{; Receive jump}&&\textbf{(11)}\quad&& &&\cdots\cdots && &&\text{; Code for statement $s_2$}\\
&\textbf{(12)}\quad&& &&\cdots\cdots && &&\text{; Code for $r_e\ \leftarrow\ \llbracket e_2 \rrbracket_c$}&&\textbf{(12)}\quad&&assert\ \texttt{:}\quad&&\inst{bra}&&entry&&\text{; Jump to top}\\
&\textbf{(13)}\quad&& &&\inst{xor}\quad &&r_t\ r_e&&\text{; Clear $r_t$}&&\textbf{(13)}\quad&& &&\inst{xori}\quad &&r_t\ 1&&\text{; Clear $r_t$}\\
&\textbf{(14)}\quad&& &&\cdots\cdots && &&\text{; Inverse of \textbf{(12)}}&& && && && &&
\end{alignat*}
\end{minipage}
}

\caption[PISA translation of control flow statements]{PISA translation of conditonals (left) and loops (right), from~\cite{ha:translation}}
\label{fig:pisa-control-flow}
\end{figure}

This allows us to perform the uncomputation of the expression evaluation (which clears extraneous garbage data) \textit{before} the branch is executed, while still being able to subsequently clear the register holding the result of the evaluation. Conditional statements and loops in ROOPL are essentially identical to those in Janus and this approach is therefore perfectly suitable for our ROOPL to PISA translation. Figure~\ref{fig:pisa-control-flow} shows the translation of both conditional statements and loops.

\section{Reversible Updates}
\label{sec:reversible-updates}

Figure~\ref{fig:pisa-swap-update} shows the translation of reversible variable updates and variable swapping. Since PISA does not have a built-in register swap instruction, we use the classic XOR-swap to exchange the contents of the two registers reversibly.

\begin{figure}[h]
\setlength{\abovedisplayskip}{0pt}
\setlength{\abovedisplayshortskip}{0pt}
\setlength{\belowdisplayskip}{0pt}
\setlength{\belowdisplayshortskip}{0pt}
\centering

\begin{subfigure}[b]{0.495\linewidth}
\vskip 0pt
\centering

\begin{equation*}
    x_1\ \textbf{\texttt{<=>}}\ x_2
\end{equation*}

\begin{alignat*}{4}
&\textbf{(1)}\quad&&\inst{xor}\quad &&r_{x_1}\quad &&r_{x_2}\\
&\textbf{(2)}\quad&&\inst{xor}\quad &&r_{x_2}\quad &&r_{x_1}\\
&\textbf{(3)}\quad&&\inst{xor}\quad &&r_{x_1}\quad &&r_{x_2}
\end{alignat*}
\end{subfigure}\hfill
\begin{subfigure}[b]{0.495\linewidth}
\vskip 0pt
\centering

\begin{equation*}
    x\ \odot\textbf{\texttt{=}}\ e
\end{equation*}

\begin{alignat*}{5}
&\textbf{(1)}\quad&&\cdots\cdots && && &&\text{; Code for $r_e\ \leftarrow\ \llbracket e \rrbracket$}\\
&\textbf{(2)}\quad&&\llbracket \odot \rrbracket_i\quad&&r_x\quad &&r_e\quad&&\text{; Assign $e$ to $x$}\\
&\textbf{(3)}\quad&&\cdots\cdots && && &&\text{; Inverse of \textbf{(1)}}
\end{alignat*}
\end{subfigure}

\caption{PISA translation of variable updates and variable swapping}
\label{fig:pisa-swap-update}
\end{figure}

Variable updates are accomplished with one of three instructions as well as an expression evaluation which is reversed after the update, in order to clear any accumulated garbage data. The update instruction in \textbf{(2)} is given by the function $\llbracket \odot \rrbracket_i\ :\ \mathit{ModOps}\ \rightarrow\ \mathit{Instructions}$:
\begin{equation*}
    \llbracket \textbf{\texttt{+}} \rrbracket_i\ =\ \inst{add} \qquad \llbracket \textbf{\texttt{-}} \rrbracket_i\ =\ \inst{sub} \qquad \llbracket \textbf{\texttt{\^}} \rrbracket_i\ =\ \inst{xor}
\end{equation*}
See Section~\ref{sec:syntax} in Chapter~\ref{chp:roopl} and Section~\ref{sec:pisa} in Chapter~\ref{chp:survey} for the ROOPL and PISA syntax domains.

\section{Expression Evaluation}
\label{sec:expression-evaluation}

When implementing evaluation of irreversible expressions in a reversible language, we have to accept the generation of some garbage data. Since ROOPL expressions are irreversible, every evaluation of an expression must be accompanied by a subsequent \textit{unevaluation} in order to clear any accumulated garbage data in registers and memory. This technique keeps the translation clean at the statement-level.

Code generation for evaluation of expressions is done by recursive descent over the structure of the expression tree. Numerical constants, variables and \textbf{\texttt{nil}}-nodes represent the base cases while binary expressions represent the recursive cases. A few of the binary operators supported in ROOPL (such as addition and bitwise exclusive-or) have single-instruction equivalents in PISA, but most operators are translated to more than one PISA instruction.

We consider the issue of register allocation for expression evaluation to be outside the scope of our translation. See~\cite[Section~4.5]{ha:translation} for an examination of reversible register allocation in PISA. A novel approach for reducing register pressure, by leveraging reversible computations to recompute registers instead of spilling them to memory, is presented in~\cite{mb:rematerialization}.

\section{Error Handling}
\label{sec:error-handling}

Aside from being syntactically correct and well-typed, a ROOPL program is required to meet a number of conditions that cannot, in general, be determined at compile time:

\begin{itemize}
    \renewcommand\labelitemi{\normalfont\bfseries \textendash}
    \item If the entry expression of a conditional is true, then the exit assertion should also be true after executing the \textbf{\texttt{then}}-branch.
    \item If the entry expression of a conditional is false, then the exit assertion should also be false after executing the \textbf{\texttt{else}}-branch.
    \item The entry expression of a loop should initially be true.
    \item If the exit assertion of a loop is false, then the entry expression should also be false after executing the \textbf{\texttt{loop}}-statement.
    \item All instance variables should be zero-cleared within an object block, before the object is deallocated.
    \item The value of a local variable should always match the value of the \textbf{\texttt{delocal}}-expression after the block statement has executed.
\end{itemize}

It is entirely up to the programmer to make sure these conditions are met by the program. If either of these conditions are not met, the program will silently continue with erroneous execution. To avoid such a situation, we can insert run time error checks that terminates the program or jumps to some error handler in case of programmer error.

\begin{figure}[ht]
\centering

\begin{equation*}
    \textbf{local int}\ x\ =\ e_1\quad s\quad\textbf{delocal}\ x\ =\ e_2
\end{equation*}

\resizebox{.8\linewidth}{!}{
\begin{minipage}{\linewidth}
\begin{alignat*}{6}
&\textbf{(1)}\quad&&\inst{bne}\quad &&r_t\quad &&r_0\quad&&label_{error}\quad &&\text{; Dynamic error check}\\
&\textbf{(2)}\quad&&\cdots\cdots && && && &&\text{; Code for $r_e\ \leftarrow\ \llbracket e_1 \rrbracket$}\\
&\textbf{(3)}\quad&&\inst{xor}\quad &&r_x\quad &&r_{sp} && &&\text{; Store address of new integer $x$ in $r_x$}\\
&\textbf{(4)}\quad&&\inst{xor}\quad &&r_t\quad &&r_e && &&\text{; Copy value of $e_1$ into $r_t$}\\
&\textbf{(5)}\quad&&\inst{push}\quad &&r_t\quad && && &&\text{; Push value of $e_1$ onto stack}\\
&\textbf{(6)}\quad&&\cdots\cdots && && && &&\text{; Inverse of \textbf{(1)}}\\
&\textbf{(7)}\quad&&\cdots\cdots && && && &&\text{; Code for statement $s$}\\
&\textbf{(8)}\quad&&\cdots\cdots && && && &&\text{; Code for $r_e\ \leftarrow\ \llbracket e_2 \rrbracket$}\\
&\textbf{(9)}\quad&&\inst{pop}\quad &&r_t\quad && && &&\text{; Pop value of $x$ into $r_t$}\\
&\textbf{(10)}\quad&&\inst{xor}\quad &&r_t\quad &&r_e && &&\text{; Clear value of $r_t$ with $r_e$}\\
&\textbf{(11)}\quad&&\inst{xor}\quad &&r_x\quad &&r_{sp} && &&\text{; Clear reference to $x$}\\
&\textbf{(12)}\quad&&\cdots\cdots && && && &&\text{; Inverse of \textbf{(7)}}\\
&\textbf{(13)}\quad&&\inst{bne}\quad &&r_t\quad &&r_0\quad&&label_{error}\quad &&\text{; Dynamic error check}
\end{alignat*}
\end{minipage}
}

\caption{PISA translation of a local block, with run time error checking}
\label{fig:pisa-error-check}
\end{figure}

Figure~\ref{fig:pisa-error-check} shows the translation of a local integer block with added dynamic error checks. In case the value of the local integer $v_i$ does not match the value of the \textbf{\texttt{delocal}}-expression $v_e$, the register $r_t$ will contain the non-zero value $v_i\ \oplus\ v_e$ at instruction \textbf{(13)}. If this is the case, we jump to an error routine at $label_{error}$.

The error check at \textbf{(1)} serves the same purpose as its counterpart, when the flow of execution is reversed, but has no effect otherwise since $r_t$ is empty before the statement is executed.

Dynamic error checks for conditionals, loops and object blocks can be implemented using a similar technique.

\section{Implementation}
\label{sec:implementation}

We implemented a ROOPL compiler (ROOPLC), utilizing the techniques presented in the preceding sections. The compiler serves as a proof-of-concept and does not perform any optimization of the target programs whatsoever. ROOPLC is written in Haskell (GHC, version 7.10.3) and the output was tested using the PendVM Pendulum simulator~\cite{cr:pendvm}.

Appendix~\ref{app:rooplc-source} contains the source code listings for the ROOPL compiler and Appendix~\ref{app:example-output} contains an example ROOPL program and the corresponding translated PISA program. The source code for the ROOPL compiler, additional test programs and the C source code for the PendVM simulator are also included in the enclosed ZIP archive.

The ROOPL compiler follows the PISA conventions that register $r_0$ is preserved as $0$, $r_1$ contains the stack pointer and $r_2$ stores return offsets for method invocations. Additionally, the compiler will always use $r_3$ to store the object pointer. The remaining 28 general purpose registers are used for variables, parameters and intermediate expression evaluation results.

In ROOPL, the class fields of the main class act as the program output. The program prelude, as described in Section~\ref{sec:program-structure}, leaves the value of these variables on the program stack after the program terminates. For the sake of convenience, the compiler instead copies these values from the program stack to static memory before termination. The compiler is structured as 6 separate compilation phases:

\begin{description}[leftmargin = 0pt]
\item[1. Parsing] The parsing phase transforms the input program from textual representation to an abstract syntax tree. The parser was implemented using the monadic parser combinators from the \texttt{Text.Parsec} library. See Section~\ref{sec:syntax} for details on the ROOPL syntax.

\item[2. Class Analysis] The class analysis phase verifies a number of properties of the classes in the program: Inheritance cycle detection, duplicate method names, duplicate field names and unknown base classes. The class analysis phase also computes the size of each class and constructs tables mapping class names to methods, instance variables et cetera.

\item[3. Scope Analysis] The scope analysis phase maps every occurrence of every identifier to a unique variable or method declaration. The scope analysis phase is also responsible for constructing the class virtual tables and the symbol table.

\item[4. Type Checking] The type checker uses the symbol table and the abstract syntax tree to verify that the program satisfies the ROOPL type system, as described in Section~\ref{sec:type-system}.

\item[5. Code Generation] The code generation phase translates the abstract syntax tree to a series of PISA instructions in accordance with the code generation schemes presented in this chapter. Rudimentary register allocation is also handled during code generation.

\item[6. Macro Expansion] The macro expansion phase is responsible for expanding macros left in the translated PISA program after code generation and for final processing of the output.
\end{description}

The size blowup from ROOPL to PISA is by a factor of 10 to 15 in terms of LOC. The nature of the target programs suggest that basic peephole optimization could reduce program size drastically.
\newpage

\chapter{Conclusion}
\label{chp:conclusion}

We described and formalized the reversible object-oriented programming language ROOPL and we discussed the considerations that went into its design. The language extends the design of existing imperative reversible languages in the literature and represents the first effort towards introducing OOP methodology to the field of reversible computing.

The combination of reversible computing and object-oriented programming is entirely uncharted territory and we identified the most interesting or novel points of intersection between the two disciplines, such as reversible class mutators and the proposed constructor/deconstructor extension.

Since ROOPL is the first imperative reversible language with non-trivial user-defined data types, we presented a complete static type system for the language and proved that well-typedness is preserved over statement inversion. We also demonstrated the computational strength of the language by implementing a reversible Turing machine simulator.

Finally, we established the techniques required for a clean translation from ROOPL to the reversible low-level machine language PISA and we demonstrated the feasibility of supporting core OOP features such as class inheritance and subtype polymorphism in a reversible programming language, by means of object layout prefixing and virtual function tables. We created a proof-of-concept compiler which fully implements our translation techniques.

If reversible computing is to contend with conventional computing models, we need reversibility at every level of abstraction. To this end, much has been accomplished at the circuit, gate and machine levels but aside from the work on reversible functional programming, there is little on offer in terms of high level languages and abstractions. The work presented in this thesis is a step in the direction of reconciling the abstraction techniques of conventional programming languages with the reversible programming paradigm. With ROOPL we have demonstrated that reversible object-oriented programming languages are both possible and practical.

\section{Future Work}
\label{sec:future-work}

In order to move away from the syntactically coupled allocation and deallocation mechanics used in ROOPL, more work is needed on the topics of reversible memory heaps and reversible dynamic memory management. Some work has already been done on these topics with regards to reversible functional languages~\cite{ha:constructor, tm:refcounting, tm:gc}.

ROOPL offers only the minimal toolset necessary for object-oriented programming. Advanced OOP features such as mixins, traits and generic classes could also prove to be useful in a reversible programming language and the implementation of such features could be the subject of further work.

Compilation of reversible languages is still in its infancy and the existing body of work focuses exclusively on correctness and avoiding garbage data. The practicality of reversible languages depends in part on compilation techniques that are not only correct but also \textit{performant}, both in terms of execution time and program size. In particular, optimization techniques that utilize the bidirecitonal nature of reversible programs to reduce code size shows promise and there is need for general and well-performing solutions to the reversible register allocation problem.
\newpage

\printreferences
\newpage

\begin{appendices}
\newcommand{\source}[1]
{\lstinputlisting[
    style = basic,
    language = haskell,
    frame = leftline,
    basicstyle = \scriptsize,
    numbers = left,
    breaklines = true]{#1}}

\chapter{\textsc{Rooplc} Source Code}
\label{app:rooplc-source}

\section{AST.hs}
\source{ROOPLC/AST.hs}

\newpage
\section{PISA.hs}
\source{ROOPLC/PISA.hs}

\newpage
\section{Parser.hs}
\source{ROOPLC/Parser.hs}

\newpage
\section{ClassAnalyzer.hs}
\source{ROOPLC/ClassAnalyzer.hs}

\newpage
\section{ScopeAnalyzer.hs}
\source{ROOPLC/ScopeAnalyzer.hs}

\newpage
\section{TypeChecker.hs}
\source{ROOPLC/TypeChecker.hs}

\newpage
\section{CodeGenerator.hs}
\source{ROOPLC/CodeGenerator.hs}

\newpage
\section{MacroExpander.hs}
\source{ROOPLC/MacroExpander.hs}

\newpage
\section{ROOPLC.hs}
\source{ROOPLC/ROOPLC.hs}
\newpage

\chapter{Example Output}
\label{app:example-output}

\section{LinkedList.rpl}
\lstinputlisting[
    style = basic,
    frame = leftline,
    language = roopl]{examples/LinkedList.rpl}

\newpage
\section{LinkedList.pal}
\lstinputlisting[
    style = basic,
    frame = leftline,
    language = pisa]{examples/LinkedList.pal}
\end{appendices}

\end{document}